\pgfplotsset{compat=newest}
\pgfplotsset{compat=1.12}%
\newtheorem{theorem}{Theorem}[section]%
\newtheorem{assumption}{Assumption}%
\newtheorem{corollary}{Corollary}[section]%
\newtheorem{definition}{Definition}[section]%
\newtheorem{example}{Example}[section]%
\newtheorem{lemma}{Lemma}[section]%
\newtheorem{proposition}{Proposition}[section]
\newtheorem{remark}{Remark}[section]%
\definecolor{darkgreen}{rgb}{0.0, 0.5, 0.0}
\newcolumntype{C}{>{\centering\arraybackslash}X}%
\newenvironment{namedexample}[1]
  {\innerexample}
  {\endinnerexample}
\renewcommand{\P}{\mathrm{P}}
\newcommand{\independent}{\mathrel{\mbox{\(\perp\!\!\!\perp\)}}}
\newcommand{\neutralize}[1]{\expandafter\let\csname c@#1\endcsname\count@}
\newtheorem{thm}{Assumption}
\newcommand{\E}{\mathbb{E}} 
\newcommand{\Var}{\mathrm{Var}}
\newcommand{\D}{\mathcal{D}} 
\newcommand{\ST}{\mathcal{S}} 
\newcommand{\cst}{ \succ^{\hspace{-0.2em} \scalebox{0.7}{\pmb{+}}} } 
\newcommand{\tightminus}{\mspace{0mu}\scalebox{1.5}[1.0]{-}\mspace{0mu}}
\newcommand{\tightequals}{\mspace{0mu}\scalebox{1}[1.0]{=}\mspace{0mu}}
\newcommand{\mapD}[1]{%
  \begingroup
    \pgfmathparse{#1*0.5}%
    \pgfmathprintnumber[fixed,precision=1,zerofill]{\pgfmathresult}%
  \endgroup
}
\definecolor{derekcolor}{rgb}{0.05, 0.35, 0.05} 
\begin{document}
\begin{singlespace}
\title{\textbf{Causal Inference for Aggregated Treatment}\thanks{We thank D\'ebora Mazetto for the data used in the application. We also thank participants in the UGA Econometrics Reading Group, the Georgia Econometrics Workshop, Midwestern Econometrics Group Conference as well as Francesco Agostinelli, James Berry, Josh Kinsler, Marcelo Moreira, V\'itor Possebom, Eli Sellinger-Liebman, Meghan Skira, Alex Torgovitsky, and Emmanuel Tsyawo for helpful comments.}}
\author{
  Carolina Caetano\textsuperscript{$\dagger$}\quad
  Gregorio Caetano\textsuperscript{$\ddagger$}\quad
  Brantly Callaway\textsuperscript{$\S$}\quad
  Derek Dyal\textsuperscript{\textparagraph}
}

\begingroup
  \renewcommand\thefootnote{}
  \footnotetext{
    $\dagger$: \texttt{carol.caetano@uga.edu},
    $\ddagger$: \texttt{gregorio.caetano@uga.edu},
    $\S$: \texttt{brantly.callaway@uga.edu},
    \textparagraph: \texttt{ddyal@uga.edu}. All authors are affiliated with the John Munro Godfrey, Sr.~Department of Economics, University of Georgia.
  }
\endgroup

\maketitle

\end{singlespace}
\maketitle
\begin{abstract}
\begin{singlespace}
We study causal inference when the treatment variable is an aggregation of multiple sub-treatments. Researchers often report marginal effects for the aggregated treatment, implicitly assuming the target parameter corresponds to a well-defined average of sub-treatment effects. We show that, even under ideal conditions such as random assignment, the weights underlying this average have key undesirable properties: they are not unique, can be negative, and these issues become exponentially more likely as the number of sub-treatments increases and their support grows. We propose diagnostics to detect these problems and introduce alternative approaches to circumvent them, depending on whether sub-treatments are observed.
\end{singlespace}
\end{abstract}

\bigskip

\noindent \textbf{JEL Codes: } C18, C21, C51, C81

\noindent \textbf{Keywords: } Aggregated Treatments, Causal Inference, Compound Treatments, Incongruency, Sub-treatments, Versions of Treatment. 

\onehalfspacing


\section{Introduction} \label{sec:intro}

Causal inference requires asking questions that are precise enough to correspond to well-defined interventions (\citealt{Rubin2005}). Yet in many applied settings, data limitations and the need to streamline the narrative force researchers to pose causal questions at a level that is too vague to satisfy this requirement. When the treatment is only loosely defined---when it collapses multiple distinct versions into a single category---the causal question itself becomes ambiguous. In such cases, the concern goes beyond identification, pointing to a more fundamental conceptual limitation: the treatment variable may not correspond to any single coherent intervention at all. Yet, for lack of better options, researchers do estimate such causal effects in the hope that they capture something informative. While many implicitly recognize the limitations of this practice, there is no shared language or theoretical framework to precisely articulate the challenges inherent to this approach, or to guide interpretation of the resulting effects. In this paper, we study the consequences of this widespread practice, aiming to provide an early step in this direction.

Specifically, we study the setting where the researcher estimates causal effects using a treatment variable that aggregates multiple versions of the treatment, or \textit{sub-treatments}. In these settings, the true causal drivers---these different sub-treatments that represent different underlying interventions---may be numerous, non-mutually exclusive, and heterogeneous in their effects, while the aggregated treatment variable serves as a summary measure that simplifies analysis but does not itself cause the outcome. In these scenarios, researchers often interpret coefficients on aggregated treatments as marginal causal effects, implicitly assuming that these parameters represent well-defined averages of sub-treatment effects---i.e., specific combinations of precisely defined interventions.

This assumption, however, warrants scrutiny. In this paper, we show that the marginal effects of an aggregated treatment can be difficult to interpret. Even under ideal conditions for causal inference, such as random assignment, the reported treatment effects generally correspond to weighted averages of sub-treatment effects with weights that are not unique and may be negative. Moreover, the negative weights issue becomes more prevalent as the number of sub-treatments or the support of each sub-treatment grows. Both issues emerge because of heterogeneous effects across different sub-treatments. We show that negative weights arise due to \textit{incongruent} comparisons---those where a marginal increase in the level of the aggregated treatment involves a decrease in the level of at least one sub-treatment. The presence of negative weights can, in principle, lead to an aggregated treatment effect that is negative, even when all sub-treatment effects are positive.

The concern with non-unique weights is that different combinations of sub-treatment effects can yield the same estimated aggregate effect. In practice, this means the estimated aggregate effect reflects the impact of one particular mix of interventions, but the specific mix that produced that estimate is not generally identified. Without diagnostic tools to uncover which mix of interventions the data actually reflect, there is a risk that the estimate will be misinterpreted to justify interventions that are not supported by the data. This raises an external validity concern, but of a different kind: it arises from extrapolating the effect from one intervention to another, rather than from the standard problem of extrapolating the effect of a given intervention from one population to another.

If using an aggregated treatment yields an average treatment effect with non-unique and possibly negative weights, what should one do to improve causal inference in such a situation? We offer diagnostics that the researcher can use to convey the extent of incongruency issues in their application. We also propose alternative estimands that avoid these issues altogether. When sub-treatment data is available---that is, when the data is sufficiently rich to plausibly assume that the actual intervention components are observed---researchers can construct weighting schemes that restrict attention to congruent comparisons alone. In contrast, when sub-treatment data is unavailable, we show how researchers in this case can still circumvent all issues above by focusing on certain alternative non-marginal effect parameters.

These aggregation issues arise frequently in practice. Researchers may aggregate treatments due to data limitations, or to increase statistical precision, or to simplify their empirical strategy, or even to streamline narrative exposition. In labor economics, studies of the effects of minority or immigrant share frequently collapse distinct ethnic or national-origin groups into a single category (\citet{CardEtAl2008}, \citet{BohlmarkWillen2020}, \citet{Lowe2021}). Similarly, peer effects research typically defines treatment using a single aggregated statistic of the classroom, such as the average GPA or SAT score (\citet{Sacerdote2001}, \citet{CarrellEtAl2013}). In urban economics, crime measures often aggregate qualitatively different sub-categories such as thefts, robberies, and homicides (\citet{GreenbaumTita2004}, \citet{TitaEtAl2006}, \citet{IhlanfeldtMayock2010}, \citet{MejiaRestrepo2016}), while research in environmental economics regularly treats exposure to natural disasters as a single treatment, despite substantial differences between earthquakes, floods, and wildfires (\citet{BoustanEtAl2020}, \citet{ZhaoEtAl2022}). In health economics, treatment variables often include composite indices of health, behavior, or genetic risk---each bundling multiple sub-components that may interact in complex ways (\citet{FinkelsteinTaubmanEtAl2012}, \citet{AllcottEtAl2019}, \citet{BarthEtAl2020}, \citet{HoumarkEtAl2024}). Time use studies similarly group distinct activities under broad labels like ``exercise,'' ``leisure,'' or ``enrichment'' (\citet{FioriniKeane2014}, \citet{GCaetanoEtAl2019}, \citet{JurgesKhanam2021}, \citet{CCaetanoEtAl2024}). 

In fact, the practice of using an aggregated measure of the treatment variable is so common that we often take it for granted in most empirical work. For instance, consider one of the most widely used treatment variables in applied econometrics: years of schooling. This variable aggregates fundamentally different educational experiences---spanning institutions, curricula, teaching methods, peer groups, and levels of engagement---across the life of the individual. The same value on this variable may correspond to radically different combinations of sub-treatments across individuals. Moreover, increasing the number of years of schooling is not a well-defined intervention, because it never occurs in isolation: it always takes place within a particular context, where specific features of the schooling experience are altered.


A common initial reaction to our theoretical results is that they seem too damning for empirical work in causal inference. If even seemingly straightforward interventions become ill-defined once aggregation issues are taken seriously, it may appear that the very enterprise of causal inference is at risk. A central contribution of this paper, however, is to show that negative-weight concerns can be resolved by adopting a non-marginal interpretation of effects, which remains well-defined under plausible assumptions. Moreover, when marginal effects are unavoidable, another key contribution of the paper is to provide empirical diagnostics that both assess the severity of the problem and clarify the weakest assumptions under which a marginal interpretation remains valid. Finally, by raising awareness of this underappreciated problem and providing tools to improve the interpretation of the results, this paper helps ensure that causal effects from one intervention are interpreted appropriately before being extrapolated to inform decisions about another intervention. This also underscores the value of collecting more detailed datasets that capture features of the aggregated treatment.

The problem we consider is related to the Stable Unit Treatment Value Assumption (SUTVA), a foundational concept that appears throughout the causal inference literature (\citet{Rubin1980}).  We maintain the first part of SUTVA, often referred to as \textit{no interference} or \textit{no contamination}: that potential outcomes for unit $i$ do not depend on the treatment assignments of other units.  The second part of SUTVA, often referred to as \textit{no hidden versions of treatments}, states that different units do not experience different versions of the same treatment, or different interventions (\citet{Rubin1980}, \citet{VanderWeeleHernan2013}, \citet{imbens-rubin-2015}). If one operates at the level of the aggregated treatment, then the issues that we highlight correspond to a violation of the second part of SUTVA, as different combinations of sub-treatments can generate the same aggregate treatment amount but different outcomes.
It is easy to see in the examples above that the potential outcome would not be a well-defined function of the aggregated treatment variable. For instance, the same individual with the same number of years of schooling would likely obtain very different outcomes under different sub-treatments (e.g., graduating from a top-ranked vs.~a lower-ranked university). The second component of SUTVA has not been studied nearly as much as the first part of SUTVA. To the best of our knowledge, the only studies discussing violations of the second condition of SUTVA mainly come from the Epidemiology literature and focus mostly on mediation analysis (\citet{cole2009consistency}, \citet{VanderWeele2009}, \citet{HernanVanderWeele2011}, \citet{LaffersMellace2020}).\footnote{In our setting, the outcome is influenced only by the sub-treatments themselves. The aggregated treatment is merely an aggregated summary of the underlying vector of sub-treatments, rather than a well-defined intervention. This contrasts with mediation settings, where the treatment (analogous to our aggregated treatment) is often a well-defined intervention that affects the outcome through distinct channels or mediators (analogous to our sub-treatments).} One exception is \citet{VanderWeeleHernan2013}, who mostly study mediation, but also consider the context of \textit{ex post} coarsening of the treatment variable (e.g., transforming a multivalued sub-treatment variable into a binary treatment variable), and show that such practice leads to a violation of SUTVA.\footnote{An example of coarsening is discussed in the context of the Tennessee STAR experiment on class size, where \citet{AdusumilliEtAl2025} show that the ``small class'' status corresponded to different actual class sizes across schools, reflecting local implementation constraints.} To the best of our knowledge, no paper has yet considered aggregations beyond coarsening, or provided estimands that are robust to violations of the second part of SUTVA. 

Our paper suggests that one can still identify meaningful causal effects in a scenario where the second component of SUTVA is violated for the treatment variable the researcher uses. However, we require that the underlying sub-treatments satisfy SUTVA---i.e., each sub-treatment is a well defined intervention. This requirement implies that targeting \textit{marginal} estimands relies on an additional assumption: that the sub-treatment observed in the data is sufficiently granular for SUTVA to plausibly hold. Still, researchers can continue to interpret (non-marginal) causal effects meaningfully under violations of SUTVA---provided that sub-treatments satisfying SUTVA are conceptually well-defined, even if they are unobserved.

Our work also relates to the vast literature on treatment effect heterogeneity, which studies how the causal effect of a given treatment can vary across units (e.g., \citet{rubin-1974}, \citet{Holland1986}, \citet{heckman-smith-clements-1997}, \citet{Rosenbaum2002}, \citet{imbens-rubin-2015}). Our findings suggest that when treatments are aggregated, what appears to be treatment effect heterogeneity may instead reflect sub-treatment heterogeneity---that is, heterogeneity arising from distinct underlying components of the treatment itself---which is a violation of the no hidden versions of treatments component of SUTVA. Returning to the years of schooling example, the effect of one additional year of schooling may obscure heterogeneity in educational experiences even for the same individual. For instance, the additional year could reflect a counterfactual enrollment in a five-year major such as engineering, rather than a four-year major like economics. In this case, what looks like a marginal ``year effect'' partially reflects differences in content, difficulty, and the credential ultimately earned. Crucially, this heterogeneity is not across individuals, but within the same individual under different sub-treatments. As a result, the relevant unit of potential outcome variation is not the individual alone, but the individual–sub-treatment pair. Although some empirical work has grappled with these issues---contrasting ``heterogeneous treatments'' and ``heterogeneous treatment effects'' (e.g., \citet{Lechner2002}, \citet{PlescaSmith2007}, \citet{mccall2016government}, \citet{CaetanoMaheshri2018}, \citet{Smith2022})---to our knowledge, there is no formal framework for identifying or interpreting causal effects in such settings. One exception is \citet{HeilerKnaus2025}, which addresses related concerns about heterogeneous treatments by showing how group-level heterogeneity analyses can be misleading in this context, and provides a decomposition to separate causal effect heterogeneity from spurious differences driven by differential assignment to versions of treatment.

We illustrate the issues due to aggregation and our approaches to circumvent them in an application concerning the effects of enrichment activities on children's skills, based on \citet{CCaetanoEtAl2024}. 
The treatment variable---time per week spent on enrichment activities---aggregates time spent on homework, music lessons, and sports, among other extracurricular activities.  This is an application where sub-treatments reflecting detailed activities of the children are observed, which allows us to diagnose how much of an issue incongruency is if the aggregated treatment is used as the treatment variable, as done in that paper. In this application, incongruency is important, as some aggregate marginal effects put at least 30-40\% weight on incongruent comparisons. Estimates for alternative parameters that we propose, which exclude incongruent comparisons, are roughly twice as large in magnitude. 

The paper is organized as follows. Section \ref{sec:SUTVA} describes the causal setting that we consider, establishes the necessary notation, and formally defines both congruency and the relevant target parameters. Section \ref{sec:matt} presents the main challenges for identification and for interpreting these types of parameters. Section \ref{sec:DATE} proposes alternative parameters that rectify the challenges of aggregated treatment, including those that do not require observation of sub-treatment data. 
Section \ref{sec:EmpiricalApplication} delivers an empirical illustration from the time-use literature, highlighting both the problems introduced by aggregated treatment and the solutions we propose. Finally, Section \ref{subsec:Conclusion} offers some concluding remarks. The Appendix provides additional results and proofs.

\section{Aggregated Treatment Setting} \label{sec:SUTVA}

This section (i) provides notation and formalizes the setting that we consider, (ii) introduces notions of congruent and incongruent sub-treatment vectors, and (iii) defines our main target parameters.

\subsection{Notation and Setup} \label{subsec:notation-setup}
We consider a setting where a researcher is interested in understanding the relationship between an outcome and a treatment.  In our application, the outcome is a standard measure of noncognitive skills, and the treatment variable is a measure of time the child spends on enrichment activities.  We denote the outcome variable by $Y$.  The treatment variable is comprised of different types of enrichment activities, which can vary in amount across different units.  Let $S_{ik}$ denote the amount of component $k$ of the treatment that unit $i$ experiences. We refer to $S_{k}$ as the $k$th sub-treatment, and $\mathfrak{S}_{k}$ denotes the support of the $k$th sub-treatment. For example, if ``doing homework'' is the $k$th version of enrichment activity, then $S_{ik}=2$ for children who do two hours of homework.  Next, define $S_i = (S_{i1}, S_{i2}, \ldots, S_{iK})$ where $K$ denotes the total number of versions of the treatment.  We refer to $S_i$ as a unit's sub-treatment vector. Let $\mathcal{S}$ denote the support of $S$. We consider the case where the sub-treatments are discrete and share a common support---we consider this case to focus the exposition of the paper and note that both of these conditions could be relaxed without substantively changing our results below.    

We also define the aggregated treatment variable $D_i = A(S_i)$ where $A(s)$ is an aggregation function that maps sub-treatments to a scalar value of the aggregated treatment variable.  Let $\mathcal{D}$ denote the support of $D$ and $\mathcal{D}_{>0} = \mathcal{D} \setminus \{0\}$ denote the support of $D$ excluding $D=0$.  To keep the discussion concrete, we often focus on the case where $A(s) = \sum_{k=1}^K s_k$---where the aggregated treatment variable adds up all of the underlying components of the sub-treatment vector. However, other types of aggregation are possible as well; some of our results hold immediately for any aggregation scheme, while others would require minor modifications. Two important, immediate properties of the aggregated treatment are that: (i) it is fully determined by the sub-treatments, and (ii) different sub-treatments can lead to the same value of the aggregated treatment.

The discussion above provides a definition of aggregated treatment. Based on the aforementioned properties of aggregation, we define the set
\begin{align*}
    \ST_d &:= \{ s \in \ST : A(s) = d \}
\end{align*}
where $A(\cdot)$ is the aggregation rule.  Thus, $\ST_d$ is the set of distinct sub-treatment vectors that lead to a particular value of the aggregated treatment ($D=d$).  We refer to $\mathcal{S}_d$ as the \textit{aggregation set} corresponding to aggregated treatment $d$. We use the terminology \textit{sub-treatment group $s_d$} to refer to the set of units that experience sub-treatment vector $s_d$.

\begin{example} \label{ex:enrichment} To fix ideas in the discussion below, consider a simplified version of our application where there are three different types of enrichment activities: (1) homework, (2) music lessons, and (3) sports.  Children can participate in any of these enrichment activities. To simplify the discussion in the example, here we binarize each sub-treatment by only keeping track of whether or not the child participates in each enrichment activity, although the theory for our paper additionally allows for sub-treatments to be multivalued. The aggregated treatment $D$ indicates the total number of enrichment activities that a child does.  Children with the same value of the aggregated treatment, however, can experience different combinations of sub-treatments.  For example, one student who does homework and music lessons, and another student who does homework and sports, both participate in two enrichment activities. We can define $\mathcal{S}_d$ for any possible value of $d$,
\begin{align*}
    \mathcal{S}_0 = \{(0,0,0)\}~~\mathcal{S}_1 = \{(1,0,0), (0,1,0), (0,0,1) \}~~\mathcal{S}_2 = \{(1,1,0), (1,0,1), (0,1,1)\}~~\mathcal{S}_3 = \{(1,1,1)\},
\end{align*}
where, for example, $(1,1,0) \in \mathcal{S}_2$ indicates the sub-treatment vector of participating in homework and music lessons but not playing sports.
\end{example}
Because we are interested in causal effects, we also define potential outcomes. Let $Y_i(s)$ denote the outcome that unit $i$ would experience under sub-treatment vector $s$.  The observed outcome is equal to the potential outcome corresponding to the observed sub-treatment vector; that is, $Y_i = Y_i(S_i)$.  Implicit in this expression is that we impose SUTVA (parts 1 and 2) at the level of the sub-treatment---we formalize this in Assumption \ref{ass:sutva2} below. Notice that potential outcomes are only defined at the sub-treatment level. We do not define potential outcomes in terms of the aggregated treatment variable, as the aggregated treatment variable itself is non-causal and contingent upon the aggregation scheme.

Our reading of the literature suggests that common empirical practice is to regress the outcome on the aggregated treatment $D$; i.e., 
\begin{align} \label{eqn:reg}
    Y_i = \alpha_0 + \alpha_1 D_i + e_i
\end{align} 
and to interpret $\alpha_1$, the coefficient on $D$, in terms of marginal effects.\footnote{A representative example comes from \citet{MejiaRestrepo2016}, which studies the effects of property crime (their treatment variable) on different types of household expenditure.  They construct an aggregate measure of property crime by averaging the rate of robberies and burglaries (their sub-treatments).  Robberies and burglaries are distinct crimes.  The main difference is roughly that robberies involve directly stealing from someone while burglaries involve entering a structure to steal (see \citet{fbi-burglary-2018,fbi-robbery-2018} for more details).  Their main results involve interpreting coefficients on this aggregated treatment variable in terms of marginal effects.  For example, they write: ``conditional on controls, the coefficient of crime on total visible and non-steal-able consumption is negative and significant at the 5\% confidence level. In particular, a 10\% increase in property crime is associated with a 1.45\% decline in the consumption of visible and non-stealable goods...''}  Writing $D_i$ in terms of sub-treatments in the above regression, we have that
\begin{align*}
    Y_i = \alpha_0 + \alpha_1 (S_{i1} + S_{i2} + \cdots + S_{iK}) + e_i 
\end{align*}
which suggests that homogenous effects across different sub-treatments is an important implicit assumption for this regression to be able to recover the marginal effect of the sub-treatments on the outcome.  One of our main goals below is to understand how to interpret marginal effects of $D$ in settings where there can be heterogeneous effects of the sub-treatments.

\subsection{Congruent and Incongruent Sub-treatment Vectors} 

The notion of a marginal effect is more complicated in applications with sub-treatments.  In this section, we distinguish between \textit{congruent} and \textit{incongruent} sub-treatment vectors, which is then useful for precisely defining marginal effect parameters in the next section.  Define the marginal set, i.e., the set of neighboring aggregation sets, indexed by $d$, as  
\begin{align*}
    \mathcal{M}(d) &:= \{ (s_d, s_{d-1}) \in \ST_d \times \ST_{d-1} \}
\end{align*}
for all $d \in \D_{>0}$.  That is, $\mathcal{M}(d)$ represents the marginal set of $K$-tuple sub-treatment vector pairs whose $L_1$ norm equals either $d$ or $d-1$. 

\begin{definition}[Congruent and Incongruent Sub-treatment Vectors] \label{def:congruent-subtreatments} For the sub-treatment vectors $(s_d,s_{d-1}) \in \mathcal{M}(d)$, define the binary congruence relation $\cst$ as: $s_d \cst s_{d-1} \; \mathrm{if} \; 
s_d=s_{d-1} + 1_k$ for some $k$, where $1_k$ is the unit vector with $k^{th}$ element equal to one and zero otherwise. If $s_d \cst s_{d-1}$, then we say that $s_d$ and $s_{d-1}$ are congruent; otherwise, we say that they are incongruent.  \end{definition}

Definition \ref{def:congruent-subtreatments} defines congruent and incongruent sub-treatment vectors.  In particular, two sub-treatment vectors $s_d$ and $s_{d-1}$ are congruent if they correspond to neighboring aggregation sets, and the value of each element of the sub-treatment vector $s_d$ is equal to the value of the corresponding element of vector $s_{d-1}$, except for one element. Sub-treatment vectors $s_d$ and $s_{d-1}$ are incongruent if they are from neighboring aggregation sets, but the value of more than one element is different.

\begin{namedexample}{\ref*{ex:enrichment} (continued)} Consider the sub-treatment $(1,0,0) \in \mathcal{S}_1$ (i.e., this is the sub-treatment that amounts to doing homework but not doing music lessons or sports).  $(1,1,0)$---doing homework and music lessons but not sports---is congruent with $(1,0,0)$.  $(1,0,1)$---doing homework and sports but not music lessons---is also congruent with $(1,0,0)$.  $(0,1,1)$---doing music lessons and sports but not homework---is incongruent with $(1,0,0)$.
\end{namedexample}

It is also helpful to define the sets of congruent and incongruent sub-treatment vectors.  In particular, define
\begin{align*}
    \mathcal{M}^{+}(d) &:= \{ (s_d,s_{d-1}) \in \mathcal{M}(d) \; : s_d \cst s_{d-1} \} \; 
\end{align*}
for all $d \in \D_{>0}$, where $\mathcal{M}^{+}(d)$ represents the set of congruent sub-treatments vectors, and 
$\mathcal{M}^{-}(d) := \mathcal{M}(d) \setminus \mathcal{M}^+(d)$, which  represents the set of incongruent sub-treatment vectors. 

\subsection{Target Parameters} \label{subsec:marginal-target-parms}
This section defines the main target parameters that we consider in the paper.  We primarily focus on different weighted averages of marginal changes across sub-treatment vectors, since it is common empirical practice in this setting to interpret or report the coefficient on the aggregated treatment variable in terms of marginal effects.
First, for $(s_d,s_{d-1}) \in \mathcal{M}(d)$, define the \textit{marginal average treatment effect on the treated} ($\textrm{MATT}$) as: 
\begin{align*}  
    \textrm{MATT}(s_d, s_{d-1}) := \E[Y(s_d) - Y(s_{d-1}) | S=s_d] 
\end{align*}
which is the causal effect of moving from sub-treatment vector $s_{d-1}$ to $s_d$ for sub-treatment group $s_d$.\footnote{We primarily focus on on-the-treated type parameters because it is simpler to provide natural weighting schemes for some of the aggregated parameters that we consider in this section relative to unconditional parameters. In addition, unconditional parameters also tend to require stronger identification assumptions in our setting.  That said, extending our arguments to target unconditional parameters seems straightforward.} $\textrm{MATT}(s_d,s_{d-1})$ is defined for all $(s_d,s_{d-1})$, regardless of whether or not $s_d$ and $s_{d-1}$ are congruent.  Sometimes we use the notation $\textrm{MATT}^+(s_d,s_{d-1})$ to indicate a disaggregated marginal average treatment effect on the treated of congruent sub-treatments $(s_d,s_{d-1}) \in \mathcal{M}^{+}(d)$.  Likewise, we sometimes use the notation $\textrm{MATT}^-(s_d,s_{d-1})$ to indicate a disaggregated marginal average treatment effect on the treated of incongruent sub-treatments $(s_d,s_{d-1}) \in \mathcal{M}^{-}(d)$.

Given our interest in aggregation, next we introduce a parameter that is a weighted average of congruent $\textrm{MATT}^+$'s:
\begin{align} \label{eqn:amatt-tilde}
    \textrm{AMATT}^{+}_{w^+}(d) := \sum_{(s_d,s_{d-1}) \; \in \; \mathcal{M}^+(d)} w^+(s_d,s_{d-1}) \cdot \textrm{MATT}^+(s_d,s_{d-1}) 
\end{align}
where $w^+$ is some weighting function that satisfies $w^+(s_d,s_{d-1}) \geq 0$ for any $(s_d,s_{d-1}) \in \mathcal{M}^+(d)$, and $\displaystyle \sum_{(s_d,s_{d-1}) \; \in \; \mathcal{M}^+(d)} w^+(s_d,s_{d-1}) = 1$.  Following the terminology of \citet{blandhol-bonney-mogstad-torgovitsky-2025}, we refer to parameters like $\textrm{AMATT}^{+}_{w^+}(d)$ that are weighted averages (with all non-negative weights) of $\textrm{MATT}^+(s_d,s_{d-1})$ as \textit{weakly causal}.\footnote{In Section \ref{subsec:identification-with-observed-subtreatments}
, we consider more specific aggregated parameters (i.e., with a specific weighting scheme rather than allowing for any weighting scheme satisfying the criteria above).  However, we note here that, with an aggregated treatment, this can introduce some additional complications.  Therefore, we defer this discussion to later in the paper.}

Finally, although our main interest is in the effects of congruent sub-treatments, in some cases, a researcher may be interested in the effects of different sub-treatment vectors for a fixed amount of aggregated treatment. We refer to these as \textit{substitution average treatment effects on the treated} ($\textrm{SATT}$).\footnote{$\textrm{SATT}$'s have a precise mathematical definition---a local tradeoff between two sub-treatments at a fixed amount of aggregated treatment. We call this a substitution effect as, in many applications (e.g., our application on enrichment activities), this parameter coincides with an intuitive notion of substituting between different sub-treatments.  However, this intuition may not apply in all applications (e.g., it is unnatural to think of ``substitution'' in the natural disaster applications mentioned in the introduction); still, regardless of the exact terminology for a particular application, these types of terms continue to be relevant for our results below.} For $s_d$ and $s_d'$ both in $\mathcal{S}_d$, define 
\begin{align*} 
    \textrm{SATT}(s_d,s_d') := \E[Y(s_d) - Y(s_d') | S=s_d] 
\end{align*}
such that $s_d = s_d' + 1_j - 1_l$, where $1_j$ and $1_l$ denote the unit vector for the $j^{th}$ and $l^{th}$ coordinates. In other words, there is a unit exchange between the $j^{th}$ and $l^{th}$ sub-treatments from $s_d$ to $s_d'$. Later, we show that there is often an interesting connection between incongruent $\textrm{MATT}$'s and $\textrm{SATT}$'s.  


\section{Challenges to Identification} \label{sec:matt}

This section outlines several important difficulties that arise in applications with aggregated treatments, even under otherwise ideal conditions for causal inference. The discussion in this section is geared towards interpreting the marginal effect of the aggregated treatment on the outcome in terms of the underlying sub-treatments and the complications that this can induce.  The results are most relevant for applications where the sub-treatments themselves are not observed, but would continue to apply in applications where the researcher observes the sub-treatments yet still decides to use an aggregated treatment.  Many of the expressions below include terms that condition on the sub-treatment group---if the sub-treatments are not observed, then these terms would not be identified, though they are still useful to consider as underlying building blocks of the aggregate marginal effect.

\subsection{Causal Framework} \label{subsec:causal-framework}

We begin by formalizing what we mean by sub-treatments. In our paper, the key difference between the sub-treatment vector and the aggregated treatment is that the sub-treatments satisfy the second part of SUTVA, often referred to as \textit{no hidden versions of treatment}, while the aggregated treatment does not (see \citet{Rubin1980}, \citet{RobinsGreenland2000}, \citet{VanderWeele2009}, \citet{HernanVanderWeele2011}, \citet{imbens-rubin-2015}, and \citet{Hernan2016} for more discussion about SUTVA).  In particular, we make the following assumption.
\begin{assumption}[No Hidden Versions of Sub-treatments relative to $S$] \label{ass:sutva2}
    If unit $i$ experiences sub-treatment vector $s$, then its observed outcome equals the potential outcome corresponding to that sub-treatment vector; i.e., for all $s \in \mathcal{S}$,
    \begin{align*}
        S_i = s \implies Y_i = Y_i(s), \quad \text{where $Y_i(s)$ is a well-defined function of $s$.}
    \end{align*}
\end{assumption}
Throughout the paper, we maintain that the sub-treatments satisfy Assumption \ref{ass:sutva2}.  $Y_i(s)$ being a well-defined function of $s$ rules out ``hidden sub-versions'' of the sub-treatments.  This condition implies that knowing $s$ pins down a unit's potential outcome from experiencing that sub-treatment.\footnote{Note that this assumption represents SUTVA entirely (for multivariate treatment $S$), since it also implicitly assumes the first part of SUTVA---it rules out the possibility that the treatment level of other units $j\neq i$ affects the potential outcome of unit $i$.}  In our running example, where the sub-treatments are homework, music, and sports, it says that further sub-dividing the sub-treatments would not change the potential outcomes; for example, the potential outcomes from doing 2 hours of homework with mom, playing 1 hour of piano, and playing 1 hour of soccer are the same for all units as the potential outcomes from doing 2 hours of homework with dad, playing 1 hour of guitar, and playing 1 hour of basketball.  On the other hand, in our paper, SUTVA for the aggregated treatment is generally violated in that $Y_i(d)$ is not a well-defined function of $d$---in our example, if $D_i=d$, a child's potential outcome still depends on which versions (homework, music, or sports) the child actually experienced.  

Throughout the paper, we maintain that the $K$ researcher-specified sub-treatments satisfy Assumption \ref{ass:sutva2}. In Appendix \ref{sec:define-subtreatment}, we provide some practical guidance for defining the sub-treatments in a particular application.

Next, we introduce the primary assumption for identifying various aggregated and disaggregated average treatment effect parameters. 

\begin{assumption}[No Selection] \label{ass:unconfoundedness} For any $s \in \mathcal{S}$, $Y(s) \independent S$.
\end{assumption}

Assumption \ref{ass:unconfoundedness} states that the distribution of potential outcomes of any sub-treatment vector is independent of the actual sub-treatments experienced.  It would hold by construction if the sub-treatments were randomly assigned.  The main implication of this assumption is that, for any two sub-treatments $s,s' \in \mathcal{S}$, $\E[Y(s) | S=s'] = \E[Y(s) | S=s] = \E[Y|S=s]$, which would be identified in applications where the sub-treatments are observed.  
Thus, it immediately follows that 
\begin{align*}
    \textrm{MATT}(s_d,s_{d-1}) = \E[Y|S=s_d] - \E[Y|S=s_{d-1}] \quad \text{and} \quad \textrm{SATT}(s_d,s_d') = \E[Y|S=s_d] - \E[Y|S=s_d']
\end{align*}
under Assumption \ref{ass:unconfoundedness} (see Propositions \ref{prop:comparison-of-subtreatments} and \ref{prop:comparison-of-neighborly-subtreatments} in Appendix \ref{app:proofs}). 

Although this assumption is likely to be strong in many applications, we take it as a natural baseline for our setting---it provides a starting point that is as favorable as possible for causal inference, and, hence, allows us to emphasize issues related to aggregation in the discussion below. Many of our results are expressed in terms of causal effect parameters which use Assumption \ref{ass:unconfoundedness}; however, absent Assumption \ref{ass:unconfoundedness}, versions of the issues related to aggregation that we highlight continue to apply, just without a causal interpretation. Moreover, extending our arguments to other identification strategies seems straightforward, at least in some leading cases.  For example, under selection on observables, all of our identification results would go through, conditional on covariates.  Similarly, our arguments can be extended to difference-in-differences identification strategies by replacing the level of the outcome with the change in outcomes over time in the assumptions and results in this section.  Extending our results to other settings (e.g., instrumental variables, regression discontinuity, or bunching) might introduce additional complications, but we conjecture that versions of the issues that we point out stemming from aggregated treatment would continue to apply.  Finally, we note that our results on marginal effects go through under a weaker, local version of Assumption \ref{ass:unconfoundedness}, which we discuss in more detail in Appendix \ref{sec:minimal-assumptions}.

\subsection{A Decomposition of Marginal Effects with Aggregated Treatment} \label{sec:identification-with-unobserved-subtreatments}

This section contains one of our main results, which is a decomposition of the marginal effect of the aggregated treatment in terms of $\textrm{MATT}$ parameters.

\begin{theorem}\label{thm:causal-aggregate-marginal-comparison}
    Under Assumptions \ref{ass:sutva2} and \ref{ass:unconfoundedness}, and for any weighting function $w(s_d,s_{d-1})$ such that
\begin{itemize}
    \item [(i)] $\displaystyle \sum_{s_d \in \mathcal{S}_d} w(s_d,s_{d-1}) = \P(S=s_{d-1}|D=d-1)$,
    \item [(ii)] $\displaystyle \sum_{s_{d-1} \in \mathcal{S}_{d-1}} w(s_d,s_{d-1}) = \P(S=s_d|D=d)$,
\end{itemize}
    \begin{align*}
        \E[Y|D=d] - \E[Y|D=d-1] &= \sum_{(s_{d-1}, s_{d}) \in \mathcal{M}^+(d) } w(s_d,s_{d-1}) \cdot \mathrm{MATT}^+(s_d,s_{d-1}) \\
        & \hspace{10mm} + \sum_{(s_{d-1}, s_{d}) \in \mathcal{M}^-(d) } w(s_d,s_{d-1}) \cdot \mathrm{MATT}^-(s_d,s_{d-1})
    \end{align*}
\end{theorem}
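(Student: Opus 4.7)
The plan is to expand both $\E[Y|D=d]$ and $\E[Y|D=d-1]$ as mixtures over the corresponding aggregation sets $\mathcal{S}_d$ and $\mathcal{S}_{d-1}$, then use conditions (i) and (ii) to rewrite the two marginal probabilities as sums of the weights $w(s_d,s_{d-1})$, and finally recognize the difference inside as $\textrm{MATT}(s_d,s_{d-1})$.

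First I would write, using the fact that $D=A(S)$ is a deterministic function of $S$,
\begin{align*}
    \E[Y \mid D=d] &= \sum_{s_d \in \mathcal{S}_d} \E[Y \mid S=s_d] \, \P(S=s_d \mid D=d), \\
    \E[Y \mid D=d-1] &= \sum_{s_{d-1} \in \mathcal{S}_{d-1}} \E[Y \mid S=s_{d-1}] \, \P(S=s_{d-1} \mid D=d-1).
\end{align*}
By Assumption \ref{ass:sutva2}, $\E[Y \mid S=s] = \E[Y(s) \mid S=s]$, and by Assumption \ref{ass:unconfoundedness}, $\E[Y(s) \mid S=s] = \E[Y(s)]$. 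So each conditional expectation collapses to an unconditional potential-outcome mean.

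Next I would substitute in the constraints on $w$. Condition (ii) says $\P(S=s_d \mid D=d) = \sum_{s_{d-1} \in \mathcal{S}_{d-1}} w(s_d,s_{d-1})$, and condition (i) says $\P(S=s_{d-1} \mid D=d-1) = \sum_{s_d \in \mathcal{S}_d} w(s_d,s_{d-1})$. Plugging these in and swapping the order of summation (using that $\mathcal{M}(d) = \mathcal{S}_d \times \mathcal{S}_{d-1}$), I get
\begin{align*}
    \E[Y \mid D=d] &= \sum_{(s_d,s_{d-1}) \in \mathcal{M}(d)} w(s_d,s_{d-1}) \, \E[Y(s_d)], \\
    \E[Y \mid D=d-1] &= \sum_{(s_d,s_{d-1}) \in \mathcal{M}(d)} w(s_d,s_{d-1}) \, \E[Y(s_{d-1})].
\end{align*}

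Subtracting and pulling the weights out gives
\[
    \E[Y \mid D=d] - \E[Y \mid D=d-1] = \sum_{(s_d,s_{d-1}) \in \mathcal{M}(d)} w(s_d,s_{d-1}) \bigl(\E[Y(s_d)] - \E[Y(s_{d-1})]\bigr).
\]
Finally, by Assumption \ref{ass:unconfoundedness} again, $\E[Y(s_d)] - \E[Y(s_{d-1})] = \E[Y(s_d) - Y(s_{d-1}) \mid S=s_d] = \textrm{MATT}(s_d,s_{d-1})$, as noted in the text following Assumption \ref{ass:unconfoundedness}. Splitting the sum over $\mathcal{M}(d)$ into its congruent part $\mathcal{M}^+(d)$ and incongruent part $\mathcal{M}^-(d)$ (using the notational conventions $\textrm{MATT}^+$ and $\textrm{MATT}^-$) yields the stated decomposition.

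There is no real obstacle here: the argument is essentially a reordering of a double sum enabled by the two marginal constraints on $w$. The only thing to be careful about is the bookkeeping that $\mathcal{M}(d) = \mathcal{S}_d \times \mathcal{S}_{d-1}$ so that Fubini-type swaps of the weighted sums are valid, and that the two identification steps (Assumptions \ref{ass:sutva2} and \ref{ass:unconfoundedness}) are invoked in the right order to convert observed conditional means into unconditional potential-outcome means, and then back into on-the-treated causal contrasts.
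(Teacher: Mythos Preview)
Your proposal is correct and follows essentially the same route as the paper: expand $\E[Y\mid D=d]$ and $\E[Y\mid D=d-1]$ over the aggregation sets, replace the marginal probabilities by the weight sums via conditions (i) and (ii), combine into a single double sum, and then split over $\mathcal{M}^+(d)$ and $\mathcal{M}^-(d)$. The only cosmetic difference is that the paper first proves the purely descriptive decomposition in terms of $\E[Y\mid S=s_d]-\E[Y\mid S=s_{d-1}]$ (without invoking Assumptions \ref{ass:sutva2} or \ref{ass:unconfoundedness}) and only afterwards identifies each such difference as a $\mathrm{MATT}$, whereas you pass through unconditional potential-outcome means $\E[Y(s)]$ in the middle; both orderings yield the same argument.
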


Theorem \ref{thm:causal-aggregate-marginal-comparison} shows that, under Assumptions \ref{ass:sutva2} and \ref{ass:unconfoundedness}, the change in the mean of $Y$ given a one unit increase in the aggregated treatment $D$ can be decomposed into a weighted average of congruent and incongruent $\textrm{MATT}$ parameters.  There are several notable features of this decomposition that warrant further examination in the sections that follow.  First, in Section \ref{sec:incongruent-matts}, we show that the incongruent $\textrm{MATT}^-$ parameters that appear in the proposition can be difficult to interpret.  Second, the weights that satisfy the criteria in Theorem \ref{thm:causal-aggregate-marginal-comparison} are non-unique, and, given the difficulty of interpreting incongruent comparisons, ideally, we would like there to be a compatible weighting scheme that does not put any weight on these incongruent comparisons.\footnote{The non-uniqueness of the weights in Theorem \ref{thm:causal-aggregate-marginal-comparison} is conceptually different from all decompositions that we are aware of in econometrics that show up in other contexts, such as continuous treatments (e.g., \citet{yitzhaki-1996} and \citet{callaway-goodman-santanna-2025}), regressions that include covariates (e.g, \citet{angrist-1998}, \citet{sloczynski-2022}, and \citet{hahn-2023}), two-stage least squares (e.g., \citet{ishimaru-2024} and \citet{blandhol-bonney-mogstad-torgovitsky-2025}), and fixed effects regressions (e.g., \citet{chaisemartin-dhaultfoeuille-2020}, \citet{goodman-bacon-2021}, \citet{sun-abraham-2021}, \citet{caetano-callaway-2024}).  Non-uniqueness arises because we decompose the marginal effect of a more aggregated variable in terms of the marginal effects of less aggregated variables, which is an important difference relative to all of the aforementioned papers.}  With this in mind, Section \ref{sec:non-unique-weights} (i) shows that the number of incongruent comparisons grows rapidly in the number of distinct sub-treatments relative to the number of congruent comparisons; (ii) provides conditions under which it is guaranteed that there exists a valid weighting scheme that puts no weight on incongruent comparisons; (iii) characterizes settings where putting weight on incongruent comparisons is unavoidable; and (iv) shows how to test whether any of the aggregation issues mentioned in this paper are empirically relevant.

\begin{remark}[Descriptive Decomposition] 
    In Proposition \ref{prop:marginal-decomposition} in Appendix \ref{app:proofs}, we provide a non-causal version of Theorem \ref{thm:causal-aggregate-marginal-comparison} that does not invoke Assumption \ref{ass:sutva2} or \ref{ass:unconfoundedness} (in fact, this is the key step in proving Theorem \ref{thm:causal-aggregate-marginal-comparison}).  Thus, even if the researcher views  $\E[Y|D=d]-\E[Y|D=d-1]$ descriptively, the issues that we highlight coming from $D$ being an aggregation of the sub-treatments continue to apply.
\end{remark}

\begin{remark}[Regression] \label{rem:regression-marginal}
    As discussed above, it is common in empirical work to estimate a regression like the one in Equation \eqref{eqn:reg} that includes an aggregated treatment variable.  In Proposition \ref{prop:regression-with-marginal-building-block} in Appendix \ref{app:additional-results}, we show that $\alpha_1$, the coefficient on $D$ in the regression in Equation \eqref{eqn:reg}, can be expressed as 
    \begin{align*}
        \alpha_1 &= \sum_{d=1}^{\Bar{N}} \omega^{reg}(d) \cdot \Big( \E[Y|D=d] - \E[Y|D=d-1] \Big) 
    \end{align*}
    where the regression weights are
    \begin{align*}
        \omega^{reg}(d) := \frac{\big(\E[D|D \geq d] - \E[D]\big) \cdot \P(D \geq d)}{\Var(D)}
    \end{align*}
    and satisfy the properties: (i) $\omega^{reg}(d) \geq 0$ for all values of $d \in \D_{>0}$, and (ii) $\displaystyle \sum_{d=1}^{\bar{N}} \omega^{reg}(d) = 1$, where $\bar{N}$ is the maximum on the support of $D$. The result above essentially holds using a discrete version of the argument in \citet{yitzhaki-1996}.  The regression weights, $\omega^{reg}(d),$ have reasonable though less than ideal properties (see Appendix \ref{sec:regressions-with-marginal-building-blocks} for more details); more importantly, however, $\alpha_1$ is a weighted average of $\E[Y|D=d]-\E[Y|D=d-1]$---the same aggregate marginal effect that we decomposed in Theorem \ref{thm:causal-aggregate-marginal-comparison}.  Thus, all of the issues about incongruent $\mathrm{MATT}^{-}$'s and non-unique weights discussed below continue to apply when using regressions that include an aggregated treatment variable.
\end{remark}

\subsection{Interpreting Incongruent Comparisons} \label{sec:incongruent-matts} 

The decomposition in Theorem \ref{thm:causal-aggregate-marginal-comparison} in the previous section showed that $\E[Y|D=d] - \E[Y|D=d-1]$, the marginal effect of the aggregated treatment, included the incongruent $\textrm{MATT}^{-}(s_d,s_{d-1})$ parameters.  How should incongruent marginal causal effect parameters be interpreted?  This section provides two answers to this question.  First, it shows that these incongruent parameters can be expressed in terms of a congruent $\textrm{MATT}^+$ parameter and a sequence of substitution effects, the $\textrm{SATT}$ parameters discussed above.  Second, it shows that incongruent parameters can be expressed as a sequence of congruent $\textrm{MATT}^+$ parameters, but that almost half of the $\textrm{MATT}^+$ parameters in this sequence enter with negative weights.  In either case, it implies that $\textrm{MATT}^-(s_d,s_{d-1})$ is hard to interpret.

\subsubsection*{Incongruent Comparisons and Substitution Effects}

The following proposition re-expresses incongruent $\textrm{MATT}^{-}$ parameters in terms of a congruent $\textrm{MATT}^+$ parameter and a path-dependent sum of substitution effects.

\begin{proposition} \label{prop:incongruent-and-substitution-effects} Under Assumptions \ref{ass:sutva2} and \ref{ass:unconfoundedness}, for all $(s_d, s_{d-1}) \in \mathcal{M}^-(d)$ and for any $s_{d-1}'$ that is congruent with $s_d$, it holds that 
\begin{align*}
    \mathrm{MATT}^-(s_d,s_{d-1}) &= \mathrm{MATT}^+(s_d,s_{d-1}') + \sum_{ b=0 }^{B-1} \mathrm{SATT}(s^{(b)}_{\phi, \; d-1},s^{(b+1)}_{\phi, \; d-1}) 
\end{align*}
where $\phi := (x^{(0)}, \ldots, x^{(B)}) \in \mathcal{C}(s_{d-1}',s_{d-1})$ represents a particular set of chained vectors from the set of sets of chained sub-treatment vectors that create a unit-exchange pathway between sub-treatment vectors $s_{d-1}'$ and $s_{d-1}$ within the same aggregation set $\ST_{d-1}$; and $s^{(b)}_{\phi}, s^{(b+1)}_{\phi}$ are linked sub-treatment vectors that belong to the chain $\phi$.\footnote{Generally, $\mathcal{C}(s'_d,s''_d) := \bigl\{(x^{(0)}, \dots, x^{(B)}) \big| x^{(0)} = s_d',\, x^{(B)} = s_d'',\, \text{for }\,b=0,\dots,B-1: x^{(b)} \in \ST_d, \; \|x^{(b+1)} - s_d'' \|_1 < \|x^{(b)} - s_d''\|_1 \bigr\}$, for any $d \in \D_{>0}$ and some $B \in \mathbb{N}$, is the set of chained sub-treatment vectors that create a unit-exchange pathway between sub-treatment vectors $s_{d}'$ and $s_{d}''$ which belong to the same aggregation set $\ST_{d}$.} 
\end{proposition}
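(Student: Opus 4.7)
The plan is to exploit Assumption \ref{ass:unconfoundedness} to rewrite every causal object on both sides of the claimed identity as a difference of observable conditional means, and then verify the identity by a telescoping argument along the chain $\phi$. Specifically, the paper already notes (Propositions \ref{prop:comparison-of-subtreatments} and \ref{prop:comparison-of-neighborly-subtreatments}) that, under Assumptions \ref{ass:sutva2} and \ref{ass:unconfoundedness}, for any $s,s' \in \mathcal{S}$ we have $\E[Y(s) \mid S=s'] = \E[Y\mid S=s]$, so that both $\mathrm{MATT}$'s and $\mathrm{SATT}$'s reduce to simple differences in conditional means. First I would apply this identification step to write
\begin{align*}
\mathrm{MATT}^-(s_d,s_{d-1}) &= \E[Y\mid S=s_d] - \E[Y\mid S=s_{d-1}], \\
\mathrm{MATT}^+(s_d,s_{d-1}') &= \E[Y\mid S=s_d] - \E[Y\mid S=s_{d-1}'], \\
\mathrm{SATT}(s^{(b)}_{\phi,\,d-1}, s^{(b+1)}_{\phi,\,d-1}) &= \E[Y\mid S=s^{(b)}_{\phi,\,d-1}] - \E[Y\mid S=s^{(b+1)}_{\phi,\,d-1}].
\end{align*}

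Next I would fix an arbitrary chain $\phi=(x^{(0)},\ldots,x^{(B)})\in\mathcal{C}(s_{d-1}',s_{d-1})$, so that by definition $x^{(0)}=s_{d-1}'$, $x^{(B)}=s_{d-1}$, and every $x^{(b)}\in\ST_{d-1}$. Plugging the above into the claimed right-hand side collapses the sum by telescoping:
\begin{align*}
\mathrm{MATT}^+(s_d,s_{d-1}') + \sum_{b=0}^{B-1}\mathrm{SATT}(x^{(b)},x^{(b+1)}) &= \E[Y\mid S=s_d] - \E[Y\mid S=s_{d-1}'] + \E[Y\mid S=x^{(0)}] - \E[Y\mid S=x^{(B)}] \\ &= \E[Y\mid S=s_d] - \E[Y\mid S=s_{d-1}],
\end{align*}
which matches the expression for $\mathrm{MATT}^-(s_d,s_{d-1})$ obtained in the first step, completing the argument.

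Two subsidiary points need checking, and these are where most of the care is required. First, I have to verify that each consecutive pair $(x^{(b)},x^{(b+1)})$ along the chain is actually a valid $\mathrm{SATT}$ argument, i.e., that they lie in the same aggregation set and differ by a unit exchange $1_j - 1_l$; this follows from the chain definition in the footnote, since both vectors lie in $\ST_{d-1}$ and the $L_1$-distance to $s_{d-1}$ decreases by exactly one at each step, which (combined with the common coordinate sum) forces a single unit shift from one coordinate to another. Second, I need to argue that $\mathcal{C}(s_{d-1}',s_{d-1})$ is non-empty, so the chain $\phi$ exists: this is a short combinatorial observation, since any two vectors in $\ST_{d-1}$ with equal coordinate sums can be transformed into one another by a finite sequence of unit exchanges (repeatedly move one unit from a coordinate where the source exceeds the target to one where it falls short). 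The main obstacle is really bookkeeping on the chain rather than any substantive probabilistic content; the causal identification step itself is immediate from Assumption \ref{ass:unconfoundedness}, and the identity is fundamentally a telescoping cancellation.
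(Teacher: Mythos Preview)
Your approach is essentially the paper's: reduce $\mathrm{MATT}$'s and $\mathrm{SATT}$'s to differences of conditional means via Assumptions \ref{ass:sutva2}--\ref{ass:unconfoundedness}, then telescope along the chain. The paper packages the chain construction into separate lemmas (Lemmas \ref{lem:congruent-st-existence}, \ref{lem:unit-exchange}, \ref{lem:productive-progression}, and Proposition \ref{prop:chain-algorithm}) where you give an informal combinatorial sketch, but the substance is the same.

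One small correction on your first subsidiary point: the $L_1$ distance between two vectors with equal coordinate sums is always even, so a unit exchange decreases the distance to the target by $2$, not by ``exactly one.'' More importantly, the strict-decrease condition in the footnote definition of $\mathcal{C}$ does not by itself force consecutive chain elements to differ by a single unit exchange (a step could move two units at once and still strictly reduce the distance). The paper resolves this not by deducing the unit-exchange property from the distance condition, but by \emph{constructing} a unit-exchange chain via Proposition \ref{prop:chain-algorithm}; the phrase ``unit-exchange pathway'' in the statement is part of the intended reading of $\mathcal{C}$ rather than a consequence of the footnote's formal definition alone.
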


Proposition \ref{prop:incongruent-and-substitution-effects} shows that incongruent $\textrm{MATT}^-$'s can be decomposed into alternative congruent causal effect parameters and substitution effects. This shows that the aggregate treatment effect is composed of both marginal sub-treatment effects and substitution effects. Although substitution effects could be of interest in their own right, they are a different type of parameter from $\textrm{MATT}^+$; they involve substituting across sub-treatments rather than a marginal increase in one of the sub-treatments.

\begin{namedexample}{\ref*{ex:enrichment} (continued)}
Suppose that $s_2 = (1,1,0)$, and $s_1 = (0,0,1)$, which are incongruent, and consider $s_1' = (1,0,0)$.  Then, using the argument from Proposition \ref{prop:incongruent-and-substitution-effects}, it holds that $\mathrm{MATT}^-(s_2,s_1) = \mathrm{MATT}^+(s_2,s_1') + \mathrm{SATT}(s_1',s_1)$.  Or, in other words, the incongruent causal effect of both doing homework and music lessons relative to playing sports can be decomposed into (i) the congruent causal effect of both doing homework and music lessons relative to only doing homework and (ii) the substitution effect of doing homework relative to playing sports.
\end{namedexample}

By plugging the result of Proposition \ref{prop:incongruent-and-substitution-effects} into the decomposition of $\E[Y|D=d]-\E[Y|D=d-1]$ in Theorem \ref{thm:causal-aggregate-marginal-comparison}, it follows that the aggregate marginal effect is hard to interpret because it includes a mix of congruent $\textrm{MATT}^{+}$'s and substitution effects---two different types of parameters.  And, for example, a positive value of $\E[Y|D=d]-\E[Y|D=d-1]$ could be mainly driven by substitution effects rather than effects of marginal increases in any of the sub-treatments.\footnote{The discussion in this paragraph is conceptually related to the well-known decomposition in \citet{goodman-bacon-2021} that, in the context of difference-in-differences identification strategies estimated using two-way fixed effects regressions, relates the coefficient of a binary treatment variable to two different types of causal effect parameters: (i) causal effects of the treatment itself and (ii) treatment effect dynamics.  This is typically taken as a negative result for the two-way fixed effects regression, not because treatment effect dynamics are inherently uninteresting to study, but rather because mixing together two different types of parameters is hard to interpret.  Similarly, in our context, the $\mathrm{SATT}$ parameters could be interesting to learn about, but they do not involve a marginal increase in any sub-treatment and, hence, make the aggregate marginal effect difficult to interpret.}

\subsubsection*{Incongruent Comparisons and Negative Weights}

The next proposition decomposes both substitution effects and incongruent $\textrm{MATT}^-$'s into a path-dependent sum of congruent causal effect parameters.  

\begin{proposition} \label{prop:substitution-effect-decomp} Under Assumptions \ref{ass:sutva2} and \ref{ass:unconfoundedness}, for any $s_{d-1}, s_{d-1}' \in \ST_{d-1}$ such that $s_{d-1} = s_{d-1}' + 1_j - 1_l$, where $1_j$ and $1_l$ are unit vectors for coordinates $j$ and $l$, and for any $s_{d}' \in \ST_{d}$ that is congruent with both $s_{d-1}$ and $s_{d-1}'$, it holds that 
\begin{align*}
    \mathrm{SATT}(s_{d-1},s_{d-1}') &= \mathrm{MATT}^{+}(s_d',s_{d-1}') - \mathrm{MATT}^{+}(s_{d}',s_{d-1}) 
\end{align*}
Moreover, for any incongruent parameter and chain $\phi \in \mathcal{C}(s_{d-1}',s_{d-1})$: 
\begin{align*}
    \mathrm{MATT}^-(s_d,s_{d-1}) &= \mathrm{MATT}^+(s_d,s_{d-1}') + \sum_{ b=0 }^{B-1} \mathrm{MATT}^{+}(s_{\phi, \; d}^{(b)},s_{\phi, \; d-1}^{(b)}) - \sum_{ b=0 }^{B-1} \mathrm{MATT}^{+}(s_{\phi, \; d}^{(b)},s_{\phi, \; d-1}^{(b+1)})  
\end{align*}
\end{proposition}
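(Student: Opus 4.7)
The plan is to establish both statements using Assumption \ref{ass:unconfoundedness}, which reduces every $\mathrm{MATT}$ and $\mathrm{SATT}$ parameter in the proposition to a simple difference of conditional means of $Y$ given sub-treatment vectors, and then---for the second part---to apply the first part term-by-term to the chain decomposition already established in Proposition \ref{prop:incongruent-and-substitution-effects}.

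For the first statement, I would invoke the identification results discussed just after Assumption \ref{ass:unconfoundedness} (Propositions \ref{prop:comparison-of-subtreatments} and \ref{prop:comparison-of-neighborly-subtreatments}) to write
\begin{align*}
\mathrm{MATT}^{+}(s_d', s_{d-1}') &= \E[Y \mid S = s_d'] - \E[Y \mid S = s_{d-1}'], \\
\mathrm{MATT}^{+}(s_d', s_{d-1}) &= \E[Y \mid S = s_d'] - \E[Y \mid S = s_{d-1}], \\
\mathrm{SATT}(s_{d-1}, s_{d-1}') &= \E[Y \mid S = s_{d-1}] - \E[Y \mid S = s_{d-1}'].
\end{align*}
Subtracting the second line from the first cancels $\E[Y \mid S = s_d']$ and delivers the third line, which is precisely the claimed identity. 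Nothing in this calculation depends on the specific choice of $s_d'$ beyond its being congruent with both $s_{d-1}$ and $s_{d-1}'$; an explicit witness---guaranteed to exist whenever $s_{d-1} = s_{d-1}' + 1_j - 1_l$---is $s_d' = s_{d-1}' + 1_j = s_{d-1} + 1_l$.

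For the second statement, I would begin with Proposition \ref{prop:incongruent-and-substitution-effects}, which expresses $\mathrm{MATT}^-(s_d, s_{d-1})$ as $\mathrm{MATT}^+(s_d, s_{d-1}')$ plus the sum $\sum_{b=0}^{B-1} \mathrm{SATT}(s^{(b)}_{\phi, d-1}, s^{(b+1)}_{\phi, d-1})$ along any chain $\phi \in \mathcal{C}(s_{d-1}', s_{d-1})$. By construction of the chain, each consecutive pair differs by exactly one unit exchange, so the hypothesis of the first part of the current proposition is satisfied for every summand. Applying the first part to each $\mathrm{SATT}$ term---using an $s^{(b)}_{\phi, d} \in \mathcal{S}_d$ congruent with both chain entries (again obtained by adding back the unit removed in the exchange)---rewrites that $\mathrm{SATT}$ term as a difference of two $\mathrm{MATT}^+$ terms sharing the same $\mathcal{S}_d$ argument. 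Collecting these across $b$ and splitting into the positive and negative pieces produces the claimed expression.

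The only genuine challenge is notational bookkeeping: making sure the signs and the indices $(b)$ versus $(b+1)$ on the $\mathcal{S}_{d-1}$ argument of each $\mathrm{MATT}^+$ line up correctly with the orientation of the unit exchange along the chain, and verifying that the telescoping collapse of the $\mathrm{SATT}$ sum on the left-hand side remains intact after the substitution. No new causal content is required beyond Assumption \ref{ass:unconfoundedness} and the chain structure already exploited in Proposition \ref{prop:incongruent-and-substitution-effects}; once the chain's orientation is fixed, the remainder of the proof is routine algebra.
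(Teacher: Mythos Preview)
Your proposal is correct and matches the paper's approach essentially line-for-line: the paper also reduces $\mathrm{SATT}$ and both $\mathrm{MATT}^{+}$'s to differences of $m(s):=\E[Y|S=s]$ via Propositions \ref{prop:comparison-of-subtreatments} and \ref{prop:comparison-of-neighborly-subtreatments}, adds and subtracts $m(s_d')$, and obtains the first identity; for the second part the paper likewise plugs the first part into each $\mathrm{SATT}$ term of Proposition \ref{prop:incongruent-and-substitution-effects}. The only cosmetic difference is that the paper cites Lemma \ref{lem:pairwise-congruent-vectors} for the existence of a mutually congruent $s_d'$, whereas you exhibit the explicit witness $s_d'=s_{d-1}'+1_j=s_{d-1}+1_l$.
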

The first part of Proposition \ref{prop:substitution-effect-decomp} says that any substitution effect between two sub-treatment vectors that share a unit exchange in treatment is equivalent to the difference between two congruent $\textrm{MATT}^+$'s. The second part says that an incongruent $\textrm{MATT}^-$ parameter can be decomposed into congruent $\textrm{MATT}^+$ parameters but that almost half of the $\textrm{MATT}^+$ parameters in this decomposition are included with a negative sign; this part follows from plugging the expression for $\mathrm{SATT}(s_{d-1},s_{d-1}')$ in the first part into Proposition \ref{prop:incongruent-and-substitution-effects} above.

\begin{namedexample}{\ref*{ex:enrichment} (continued)}
Resuming the example from the previous section, let $s_2 = (1,1,0)$, $s_2' = (1,0,1)$, $s_1 = (0,0,1)$, and $s_1' = (1,0,0)$.  Using the argument in the first part of Proposition \ref{prop:substitution-effect-decomp}, it holds that $\; \mathrm{SATT}(s_1', s_1) = \mathrm{MATT}^+(s_2',s_1) - \mathrm{MATT}^+(s_2',s_1')$.  In words, the substitution effect of doing homework relative to playing sports is equal to the difference between (i) the congruent causal effect of doing homework and playing sports relative to only playing sports and (ii) the congruent causal effect of doing homework and playing sports relative to only doing homework.  

Using the argument from the second part of the proposition, it holds that $\mathrm{MATT}^-(s_2,s_1) = \mathrm{MATT}^+(s_2,s_1') + \mathrm{MATT}^+(s_2',s_1) - \mathrm{MATT}^+(s_2',s_1')$.  That is, the incongruent causal effect of both doing homework and music lessons relative to playing sports can be decomposed into (i) the congruent causal effect of both doing homework and music lessons relative to only doing homework, (ii) the congruent causal effect of doing homework and playing sports relative to only playing sports and (iii) the congruent causal effect of doing homework and playing sports relative to only doing homework; however, the congruent effect (iii) enters the decomposition negatively.
\end{namedexample}

By plugging the second part of Proposition \ref{prop:substitution-effect-decomp} into the decomposition in Theorem \ref{thm:causal-aggregate-marginal-comparison}, it follows that the aggregate marginal effect can be fully expressed as a weighted average of congruent $\textrm{MATT}^+$ parameters.  However, due to the negative signs on some $\textrm{MATT}^+$ parameters in Proposition \ref{prop:substitution-effect-decomp}, it is evident that weights on some $\textrm{MATT}^+$ parameters can be negative.\footnote{To be clear, even if $\textrm{MATT}^+(s_d,s_{d-1})$ shows up negatively in the decomposition from being part of the chain of congruent $\textrm{MATT}^+$'s corresponding to an incongruent $\textrm{MATT}^-$, recall that it also shows up positively in the first term in Theorem \ref{thm:causal-aggregate-marginal-comparison}, and whether or not it ultimately shows up with a positive or negative weight depends on the relative magnitude of the corresponding weights.  Thus, the results in this section do not indicate that negative weights on certain $\textrm{MATT}^+$ parameters \textit{necessarily occur}, but rather that negative weights \textit{can occur.}}  Negative weights on underlying ``building block'' parameters have been emphasized in recent work in econometrics as being indicative of an ``unreasonable'' weighting scheme (e.g., \citet{chaisemartin-dhaultfoeuille-2020,mogstad-torgovitsky-2024}, among others).  For example, given enough heterogeneity in the congruent $\textrm{MATT}^+$'s, negative weights introduce the possibility of sign reversal, where, e.g., the $\textrm{MATT}^+$'s could all be positive but $\E[Y|D=d]-\E[Y|D=d-1]$ could be negative.


\subsection{Non-unique Weights: When Is Aggregation a Problem?} \label{sec:non-unique-weights}

The previous section highlighted that the incongruent causal effect parameters $\textrm{MATT}^-(s_d,s_{d-1})$ are difficult to interpret.  In this section, we return to the other main issue in Theorem \ref{thm:causal-aggregate-marginal-comparison}: that the weights are non-unique.  Let $\mathcal{W}_d$ denote the set of weighting schemes that satisfy the conditions in Theorem \ref{thm:causal-aggregate-marginal-comparison}.  If there exists a weighting scheme in $\mathcal{W}_d$ that puts zero weight on all $\textrm{MATT}^{-}(s_d,s_{d-1})$, then there exists an interpretation of $\E[Y|D=d]-\E[Y|D=d-1]$ that only puts weight on $\textrm{MATT}^{+}(s_d,s_{d-1})$.  This would fully bypass the problems of interpreting $\textrm{MATT}^{-}(s_d,s_{d-1})$ that we discussed above.

Implicit in the discussion above is that there can be multiple weighting schemes that satisfy the conditions in Theorem \ref{thm:causal-aggregate-marginal-comparison}. Thus, we start this section by showing that the weights are indeed non-unique.\footnote{Two exceptions are worth mentioning.  First, the weights are unique when there is a single unique version of the treatment, $K=1$. In this case, for any level of the aggregated treatment, there is only one sub-treatment, so that $w(s_d,s_{d-1})=1$.  Second, the weights are also unique when $d=1$ or $d=\bar{N}$.  Take, for instance, the case where $d=1$, so that we are interested in $\E[Y|D=1] - \E[Y|D=0]$.  Regardless of how many distinct sub-treatments there are, the only sub-treatment vector such that $d=0$ is $s=0_K$.  This implies that $\P(S=0_K|D=0)=1$, and the only weights that satisfy the criteria for the weights in Proposition \ref{prop:marginal-decomposition} are $w(s_1,s_0) = \P(S=s_1|D=1)$, implying that the weights are unique.  An analogous argument holds for $d=\bar{N}$ on the basis that there is only one sub-treatment vector (the one where each sub-treatment is set at its maximum value).} A leading example of weights that are always in $\mathcal{W}_d$ are the product weights $\P(S=s_d|D=d) \times \P(S=s_{d-1}|D=d-1)$.  
This weighting scheme necessarily implies that the aggregate marginal effect includes positive weight on incongruent comparisons. However, they are not the only weights that satisfy the criteria mentioned in the proposition, which we demonstrate by returning to our example.

\begin{namedexample}{\ref*{ex:enrichment} (continued)} \label{example uniform} For $d \in \{1,2\}$, suppose that $\P(s_d|D=d) = 1/3$ for all $s_d \in \mathcal{S}_d$.  Consider the following weights
\begin{align*}
    w_{\hspace{-0.4mm}{A}}(s_d,s_{d-1}) = 
    \begin{cases}  
        \displaystyle \frac{1}{6} & (s_d,s_{d-1}) \in \mathcal{M}^+(d) \\
        0 & (s_d,s_{d-1}) \in \mathcal{M}^-(d)
    \end{cases}
\end{align*}
i.e., $w_{\hspace{-0.4mm}{A}}(s_d,s_{d-1})$ puts $1/6$ weight equally on all six congruent comparisons and $0$ weight on the three incongruent comparisons.  Alternatively, consider the following weights
\begin{align*}
    w_{\hspace{-0.1mm}{B}}(s_d,s_{d-1}) = 
    \begin{cases}  
        0 & (s_d,s_{d-1}) \in \mathcal{M}^+(d) \\
        \displaystyle \frac{1}{3} & (s_d,s_{d-1}) \in \mathcal{M}^-(d)
    \end{cases}
\end{align*}
i.e., these are weights that involve putting $1/3$ weight equally on all three incongruent comparisons and $0$ weight on the six congruent comparisons. Both $w_{\hspace{-0.4mm}{A}}(s_d,s_{d-1})$ and $w_{\hspace{-0.1mm}{B}}(s_d,s_{d-1})$ meet the requirements for the weights that are discussed in Theorem \ref{thm:causal-aggregate-marginal-comparison}.  Besides $w_{\hspace{-0.4mm}{A}}(s_d,s_{d-1})$ and $w_{\hspace{-0.1mm}{B}}(s_d,s_{d-1})$, many other weighting schemes also satisfy the same requirements.
\end{namedexample}

The previous example demonstrates that the weights in Theorem \ref{thm:causal-aggregate-marginal-comparison} are non-unique.  There are different weighting schemes for $\textrm{MATT}$'s that can rationalize the aggregate marginal effect.  Different weighting schemes can lead to very different interpretations of the aggregate marginal effect.  In the example, one valid weighting scheme leads to an interpretation of $\E[Y|D=d]-\E[Y|D=d-1]$ as a weighted average that only includes congruent $\textrm{MATT}^+$'s. This weighting scheme is in line with our ideal scenario above---it provides an interpretation of the aggregate marginal effect $\E[Y|D=d]-\E[Y|D=d-1]$ that is fully congruent.  

In contrast, the following example shows that there exist cases where the aggregate marginal comparison is incompatible with a fully congruent comparison of means of sub-treatments.

\begin{namedexample}{\ref*{ex:enrichment} (continued)}  Suppose that 
\begin{align*}
    &\P\Big((1,0,0)\Big|D=1\Big) = 0.8 
    &&\P\Big((1,1,0)\Big|D=2\Big) = 0.1 \\[5pt]
    &\P\Big((0,1,0)\Big|D=1\Big) = 0.1 
    &&\P\Big((1,0,1)\Big|D=2\Big) = 0.1 \\[5pt]
    &\P\Big((0,0,1)\Big|D=1\Big) = 0.1 
    &&\P\Big((0,1,1)\Big|D=2\Big) = 0.8
\end{align*}
In this case, there do not exist fully congruent weights that can rationalize the $\E[Y|D=2]-\E[Y|D=1]$.  The explanation is that the incongruent sub-treatment vectors $(1,0,0)$ and $(0,1,1)$ occur too commonly for $\E[Y|D=2] - \E[Y|D=1]$ to be rationalized with only fully congruent comparisons across sub-treatment vectors.
\end{namedexample}

In the remainder of this section, we provide four arguments aiming to characterize empirical settings where the aggregate marginal effect, $\E[Y|D=d]-\E[Y|D=d-1]$, would be less likely to put weight on incongruent causal effects. These can be used by practitioners to diagnose, both conceptually and practically, how much of a problem an aggregated treatment may cause in a given application. First, in Section \ref{subsec:number-of-ST-and-incongruent-comparisons}, we show that the number of possible incongruent comparisons grows much more rapidly than the number of possible congruent comparisons as the complexity of the sub-treatments increases.  Second, in Section \ref{subsec:auxiliary-assumptions-to-rule-out-incongruency}, we provide auxiliary assumptions that guarantee a weighting scheme that does not put any weight on incongruent comparisons.  Third, in Section \ref{subsec:settings-with-guaranteed-incongruency}, we discuss the characteristics of applications that \textit{must} put weight on incongruent comparisons. Lastly, in Section \ref{sec: testing SUTVA} we provide a test for whether $D$ is appropriately aggregated---i.e., whether a version of Assumption \ref{ass:sutva2} for the aggregated treatment is valid, in which case there would be no aggregation issues.


\subsubsection{The Link between the Number of Sub-treatments and Incongruent Comparisons} \label{subsec:number-of-ST-and-incongruent-comparisons}

In this section, we establish a link between the complexity of the sub-treatments (i.e., the number of distinct sub-treatments and the number of values that the sub-treatments can take) and the number of incongruent $\textrm{MATT}^-$ parameters that show up in the decomposition in Theorem \ref{thm:causal-aggregate-marginal-comparison} relative to the number of congruent $\textrm{MATT}^+$ parameters.  We show that the number of incongruent terms grows much more rapidly than the number of congruent terms.  The implication for empirical work is that, all else equal, applications with more sub-treatments or complicated sub-treatments are more susceptible to the issues related to incongruent comparisons showing up in the aggregate marginal effects that we discussed above.

Suppose that all sub-treatments are binary (i.e., individuals can either participate or not in any of $K$ binary versions of the treatment). In Proposition \ref{prop:congruent-incongruent-counts-binary-subtreatments} of Appendix \ref{app:proofs}, we establish that $|\mathcal{M}|$, the total number of contrasts for all permutations of sub-treatment vectors at adjacent amounts of aggregated treatment, is equal to $\binom{2K}{K-1}$.\footnote{Formally define the set of all marginal pairs of sub-treatment vectors as $\mathcal{M} := \cup_{d=1}^{\bar{N}} \mathcal{M}(d)$. Likewise, define the marginally congruent set of pairs $\mathcal{M}^+ := \cup_{d=1}^{\bar{N}} \mathcal{M}^+(d)$ and marginally incongruent set of pairs $\mathcal{M}^- := \cup_{d=1}^{\bar{N}} \mathcal{M}^-(d)$.} This number dramatically increases in $K$. For example, if $K=3$, then there are $\binom{6}{2} = 15$ possible contrasts. If $K=4$, there are $\binom{8}{3} = 56$ contrasts and so on. In addition, Proposition \ref{prop:congruent-incongruent-counts-binary-subtreatments} reveals that the amount of congruent contrasts $|\mathcal{M}^+| = K \cdot 2^{K-1}$ and incongruent contrasts $|\mathcal{M}^-| = \binom{2K}{K} - K \cdot 2^{K-1}$, allowing us to formally show that the total number of incongruent pairs of sub-treatment vectors grows much more rapidly with $K$ than the total number of congruent pairs of sub-treatment vectors (see Corollary \ref{cor:asymptotic-congruent-proportion-result} in Appendix \ref{app:proofs}). The relatively rapid growth of the number of incongruent comparisons is illustrated in Figure \ref{fig:proportion-of-incongruent-subtreatments}. 

\begin{figure}[t!] 
\caption{Proportions of Congruent and Incongruent Sub-treatment Vectors}
\vspace{-4mm}
\label{fig:proportion-of-incongruent-subtreatments}
\centering
\begin{subfigure}{.49 \textwidth}
  \centering
  \includegraphics[width=1.\linewidth]{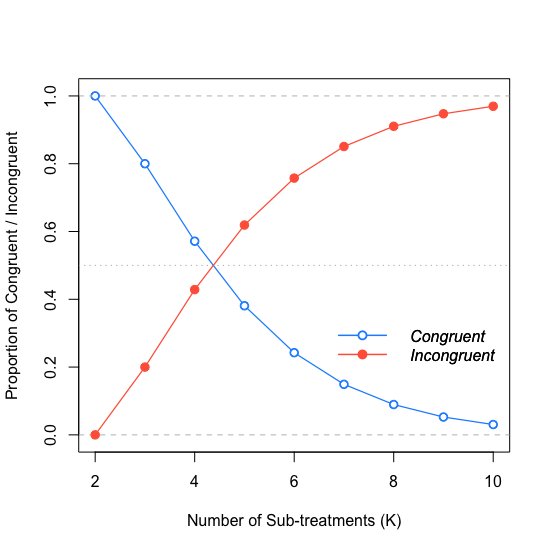}
  \caption{Binary Sub-treatments}
  \label{fig:sub1}
\end{subfigure} \hfill
\begin{subfigure}{.49 \textwidth}
  \centering
  \includegraphics[width=1.\linewidth]{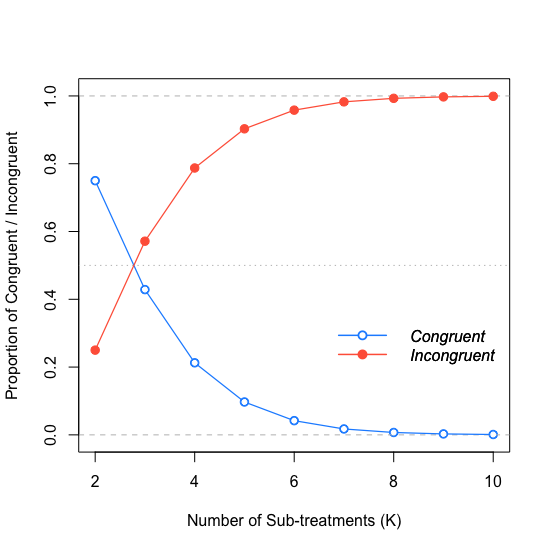}
  \caption{Multivalued Sub-treatments}
  \label{fig:sub2}
\end{subfigure}
\begin{justify}
{\small \textit{Notes:} The figure provides the relative proportion of congruent and incongruent sub-treatment vectors as a function of the number of sub-treatments, $K$.  Panel (a) considers the case where all of the sub-treatments are binary.  Panel (b) considers the case where the sub-treatments are discrete with three possible values $\{0,1,2\}$. }
\end{justify}
\end{figure}

Figure \ref{fig:proportion-of-incongruent-subtreatments} shows the fraction of congruent and incongruent comparisons of sub-treatment vectors for a given number of sub-treatments. Panel (a) presents this in the setting with binary sub-treatments. More than half of the sub-treatment vectors are incongruent for any $K > 4$, and the fraction of incongruent sub-treatment vectors grows rapidly with $K$. Panel (b) considers the case where the sub-treatments can be multivalued---in this panel, all of the sub-treatments can take values among $\{0,1,2\}$. In this case, the number of incongruent contrasts dominates the number of congruent contrasts for any value of $K>2$, and the relative fraction of incongruent contrasts grows even faster with $K$ compared to the case when sub-treatments are binary. See Proposition \ref{prop:congruent-incongruent-counts-trinary-subtreatments} in Appendix \ref{app:proofs} for exact expressions of the number of congruent and incongruent marginal contrasts in this case. Figure \ref{fig:proportion-of-incongruent-subtreatments} and Corollary \ref{cor:asymptotic-congruent-proportion-result} suggest that the relative fraction of incongruent comparisons grows rapidly with the support of sub-treatments.

Having more sub-treatments does not necessarily mean that $\E[Y|D=d]-\E[Y|D=d-1]$ includes comparisons between incongruent sub-treatment vectors. What we have established in this section is that the \textit{scope} for incongruency increases with the number of sub-treatments and with larger support size among the sub-treatments, suggesting that a researcher should pay especially close attention to issues related to incongruency in settings with a large number of sub-treatments, or few sub-treatments possessing multivalued supports. 



\subsubsection{Auxiliary Assumptions that Rule Out Incongruent Comparisons} \label{subsec:auxiliary-assumptions-to-rule-out-incongruency}
This section lists auxiliary assumptions that rule out incongruency affecting the aggregate marginal effect, $\E[Y|D=d]-\E[Y|D=d-1]$.  The first assumption rules out treatment effect heterogeneity across sub-treatments.  The second set of assumptions rules out units sorting across different values of the aggregated treatment and introduces a restriction on latent unit-types.

\subsubsection*{Approach 1: Restrictions On Treatment Effect Heterogeneity}
The first assumption we consider rules out treatment effect heterogeneity with respect to a marginal increase in any of the sub-treatments.
\begin{assumption}[No Heterogeneous Sub-treatment Effects] \label{ass:homogeniety} For all $d \in \D_{>0}$ and any $(s_d,s_{d-1}) \in \mathcal{M}^+(d)$, 
\begin{align*}
    \mathrm{MATT}^+(s_d,s_{d-1}) = \beta_d
\end{align*}
\end{assumption}

Assumption \ref{ass:homogeniety} says that the average marginal causal effect of any sub-treatment is constant across sub-treatments between adjacent levels of aggregated treatment.\footnote{A similar assumption has been referred to in the Epidemiology literature as ``treatment variation irrelevance''. See, for instance, \citet{VanderWeeleHernan2013}.} In many applications, this may be a strong auxiliary assumption.  For example, in our running example, it would say that the causal effect of a one-unit increase in any of the sub-treatments (whether it be homework, music, or sports) is the same for all sub-treatments.  Most likely, this is a strong assumption in this context.

In Proposition \ref{prop:homogeneity-of-effects} in Appendix \ref{app:proofs}, we show that, under Assumptions \ref{ass:sutva2}-\ref{ass:homogeniety},
\begin{align*}
    \E[Y|D=d] - \E[Y|D=d-1] = \beta_d
\end{align*}
In other words, the aggregate marginal effect recovers the average marginal causal effect of the sub-treatments, which is $\beta_d$. The intuition for this result comes from the second part of Proposition \ref{prop:substitution-effect-decomp}: under Assumption \ref{ass:homogeniety}, all the $\textrm{MATT}^+$'s are equal to each other, and Proposition \ref{prop:substitution-effect-decomp} therefore implies that all of the $\textrm{MATT}^-$'s are also equal to $\beta_d$.  Replacing all of the $\textrm{MATT}^+$'s and $\textrm{MATT}^-$'s in Theorem \ref{thm:causal-aggregate-marginal-comparison} then implies the result.  Thus, in some sense, Assumption \ref{ass:homogeniety} does not remove the weights on the $\textrm{MATT}^-(s_d,s_{d-1})$ terms in Theorem \ref{thm:causal-aggregate-marginal-comparison}, but it does make them irrelevant as all of them are equal to $\beta_d$. Equivalently, we can view Assumption \ref{ass:homogeniety} as setting all $\text{SATT}'s$ to zero (which can be seen from Proposition \ref{prop:substitution-effect-decomp}): intuitively, when one sub-treatment is substituted by another, there is no effect on the outcome.


\subsubsection*{Approach 2: Structural Assumptions}
An alternative approach to ruling out incongruity in the decomposition in Theorem \ref{thm:causal-aggregate-marginal-comparison} comes from introducing structural assumptions. 
Let $S_i(d)$ denote the sub-treatment that unit $i$ would experience under aggregated treatment $d$.\footnote{The object $S_i(d)$ should be understood as a latent sub-treatment type across $D$, which describes how the components of treatment are arranged at each total treatment level in observational settings. Recall that aggregated treatment variables are deterministic functions of the sub-treatments, devised \textit{ex post}, and are not causal on the outcome. This indicates that types are inseparable from the aggregation function that underlies them. Hence, the concept of type is an artifact of the aggregation scheme, not itself an intervention with its own causal effect.
} Thus, $S_i(\mathbf{d}) := (S_i(1), S_i(2), \ldots, S_i(\bar{N}))$ defines a unit-level latent aggregated treatment path---the particular sub-treatment vector that a unit would experience for all possible values of the aggregated treatment.  The set of possible values of $S(\mathbf{d})$ is finite, and we can define a notion of a unit's latent type on the basis of $S(\mathbf{d})$. 

\begin{assumption}[No Sorting on $D$] \label{ass:no-sorting-on-D} Latent types are independent of the aggregated treatment; that is,
    $$S(\mathbf{d}) \independent D$$
\end{assumption}

\begin{assumption}[No Incongruent Latent Types] \label{ass:no-incongruity}
    For all $d \in \D_{>0}$, treatment paths are locally congruent; that is,
    \begin{align*}
    \P\big(S(d)=s_d, \; S(d-1)=s_{d-1} \big| D \in \{d, d-1\}\big) &= 0, ~\textrm{if } (s_{d}, s_{d-1}) \in \mathcal{M}^{-}(d) \\
    \P\big(S(d)=s_d, \; S(d-1)=s_{d-1} \big| D \in \{d, d-1\}\big) &\geq 0, ~\textrm{if } (s_{d}, s_{d-1}) \in \mathcal{M}^{+}(d)
    \end{align*}
\end{assumption}
Assumption \ref{ass:no-sorting-on-D} says that latent types are balanced across aggregate amounts of treatment. This ensures that there is no selection at the disaggregate level based on the total amount of treatment.  No sorting holds under random assignment of the sub-treatments.

Assumption \ref{ass:no-incongruity} imposes an explicit restriction on the latent types in the population---that there are no units in a latent type that ``behaves'' incongruently. In our running example, it rules out types of units that would spend one hour of enrichment doing homework, but had they done two hours of enrichment, they would have done music lessons and sports. 

We show in Proposition \ref{prop:no-incongruent-behavior} in Appendix \ref{app:proofs} that these two conditions are sufficient to guarantee that there exists a weighting scheme satisfying the conditions in Theorem \ref{thm:causal-aggregate-marginal-comparison} that puts no weight on any $\textrm{MATT}^-(s_d,s_{d-1})$. 

\begin{namedexample}{\ref*{ex:enrichment} (continued)} \label{ex: types}
It is worth pointing out why Assumption \ref{ass:no-incongruity} alone is not sufficient to guarantee that the aggregate marginal effect can be decomposed entirely in terms of congruent $\textrm{MATT}^+$'s.  Consider an extreme version of our running example, where there are two latent types: type 1 would do homework if they did one hour of enrichment and would do homework and music lessons if they did two hours of enrichment; type 2 would play sports if they did one hour of enrichment and would do music lessons and play sports if they did two hours of enrichment.  Both latent types are congruent.  However, suppose there is sorting so that all type 1 units do one hour of enrichment, while all type 2 units do two hours of enrichment.  In this case, the aggregate marginal effect $\E[Y|D=2]-\E[Y|D=1]$ is \textit{fully incongruent} due to sorting, despite all units themselves belonging to a congruent latent type.
\end{namedexample}

\subsubsection{Settings where Incongruent Comparisons Are Unavoidable} \label{subsec:settings-with-guaranteed-incongruency}
In the previous section, we discussed additional assumptions that side-step incongruent $\textrm{MATT}^-$'s complicating the interpretation of the aggregate marginal effect.  This section pivots to characterizing the features of applications that necessarily include incongruent $\textrm{MATT}^-$'s.  

\subsubsection*{Sub-treatment Decreases in Aggregated Treatment Guarantees Incongruency}
The following result provides a straightforward characteristic of an application that indicates that incongruency is unavoidable in the aggregate marginal effect. 

\begin{proposition} \label{prop:decreasing-means-implies-incongruency}
    Provided there exists some sub-treatment indexed by $k \in \{1, \ldots, K\}$, and some $d \in \D_{>0}$, such that 
    \begin{align*}
        \E[S_k | D=d] \; &< \; \E[S_k | D=d-1]  
    \end{align*}
    then any weights that satisfy the properties in Theorem \ref{thm:causal-aggregate-marginal-comparison} must assign positive weight to at least one incongruent pair of sub-treatments, $(s_d,s_{d-1}) \in \mathcal{M}^{-}(d)$. 
\end{proposition}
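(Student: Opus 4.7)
The plan is to exploit the two marginal conditions on $w$ by evaluating a single carefully chosen linear functional of the weights and comparing it against the hypothesized inequality. Specifically, for the sub-treatment index $k$ identified in the hypothesis, I will compute the weighted sum of the scalar differences $s_{d,k} - s_{d-1,k}$ across all pairs $(s_d,s_{d-1}) \in \mathcal{M}(d)$; the marginal conditions collapse this sum to $\E[S_k|D=d] - \E[S_k|D=d-1]$, which is strictly negative by assumption. Restricting to congruent pairs, however, will force the same sum to be non-negative, producing the desired contradiction.

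The key computation is straightforward. Using condition (i) to sum out $s_d$ and condition (ii) to sum out $s_{d-1}$, I obtain
\begin{align*}
    \sum_{(s_d,s_{d-1}) \in \mathcal{M}(d)} w(s_d,s_{d-1}) (s_{d,k} - s_{d-1,k})
    &= \sum_{s_d \in \ST_d} s_{d,k} \Bigg(\sum_{s_{d-1} \in \ST_{d-1}} w(s_d,s_{d-1})\Bigg) - \sum_{s_{d-1} \in \ST_{d-1}} s_{d-1,k} \Bigg(\sum_{s_d \in \ST_d} w(s_d,s_{d-1})\Bigg) \\
    &= \sum_{s_d \in \ST_d} s_{d,k} \P(S=s_d | D=d) - \sum_{s_{d-1} \in \ST_{d-1}} s_{d-1,k} \P(S=s_{d-1} | D=d-1) \\
    &= \E[S_k | D=d] - \E[S_k | D=d-1] \; < \; 0.
\end{align*}

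Next I would argue by contradiction: suppose $w(s_d,s_{d-1}) = 0$ for every $(s_d,s_{d-1}) \in \mathcal{M}^-(d)$. By Definition \ref{def:congruent-subtreatments}, every congruent pair satisfies $s_d = s_{d-1} + 1_j$ for some $j \in \{1,\ldots,K\}$, so $s_{d,k} - s_{d-1,k}$ equals $1$ when $j=k$ and $0$ otherwise; in either case it is non-negative. Since the weights are non-negative (being a coupling of the two conditional distributions, as in the product-weight and $w_A$, $w_B$ schemes used throughout the paper), each remaining term in the displayed sum would be non-negative, yielding a total $\geq 0$. This contradicts the strict negativity established above, so at least one incongruent pair must carry positive weight.

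The main obstacle is essentially bookkeeping: verifying that the two marginal conditions can be applied cleanly to the linear functional $s_{d,k} - s_{d-1,k}$ without additional structure. A subtle point worth addressing is the implicit non-negativity of $w$, which is not stated explicitly in Theorem \ref{thm:causal-aggregate-marginal-comparison} but is consistent with the coupling interpretation and with every weighting scheme considered in the paper; without this, the result would fail because negative congruent weights could manufacture any sign.
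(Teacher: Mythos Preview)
Your proposal is correct and takes essentially the same approach as the paper: both compute $\sum_{(s_d,s_{d-1})\in\mathcal{M}(d)} w(s_d,s_{d-1})(s_{d,k}-s_{d-1,k})$, use the two marginal conditions to collapse it to $\E[S_k|D=d]-\E[S_k|D=d-1]$, and observe that for congruent pairs the $k$th-coordinate difference lies in $\{0,1\}$. The paper frames it as a contrapositive rather than a contradiction, but the substance is identical; your caveat about the implicit non-negativity of $w$ is well placed, since the paper's own proof invokes the same unstated premise.
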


Proposition \ref{prop:decreasing-means-implies-incongruency} states that if the conditional means of any sub-treatment declines between values of aggregated treatment $D=d-1$ and $D=d$, then weighting schemes that avoid incongruency are impossible. The condition in the proposition is easy to consider in applications as it concerns the mean of a particular sub-treatment across different values of the aggregated treatment.  In the context of our application, the proposition says that incongruent comparisons cannot be avoided if, for example, the mean number of hours spent on homework was 0.75 among children that did one hour of enrichment while the mean number of hours spent on homework was 0.5 among children that did two hours of enrichment.  This is an intuitive condition for guaranteeing incongruency: if the mean of some sub-treatment decreases in $D$, then there is simply not enough available mass on congruent sub-treatment vectors at the higher value of the aggregated treatment to satisfy the requirements on the weights in Theorem \ref{thm:causal-aggregate-marginal-comparison}.

\subsubsection*{Minimally Incongruent Weights} \label{sec:congruent-weight-search}
The condition in Proposition \ref{prop:decreasing-means-implies-incongruency} is a sufficient, but not necessary, condition for incongruency.  Moreover, if it holds, it implies that incongruency is a problem, but it does not necessarily provide much information about \textit{how much} of a problem it is.  With this in mind, in this section we define \textit{minimally incongruent weights} as a solution to the following linear programming problem:
\begin{align}
    w^\star \in \underset{w}{\textrm{arg\,min}} \sum_{(s_d,s_{d-1)} \in \mathcal{M}^-(d)} w(s_d,s_{d-1}) \label{eqn:maximally-congruent-weights}
\end{align}
subject to
\vspace{-2em}
{ \small 
\begin{align*}
    w(s_d,s_{d-1}) &\geq 0 \text{ for all } (s_d,s_{d-1}) \in \mathcal{M}(d)\\
    \sum_{(s_d,s_{d-1}) \in \mathcal{M}(d)} w(s_d,s_{d-1}) &= 1 \\
    \sum_{s_d} w(s_d,s_{d-1}) &= \P(S=s_{d-1}|D=d-1) \\
    \sum_{s_{d-1}} w(s_d, s_{d-1}) &= \P(S=s_d | D=d)
\end{align*}
}This defines the weights $w^\star$ to be a weighting scheme that minimizes the weight on incongruent comparisons between marginal sub-treatment vectors subject to satisfying the criteria for the weights discussed in Theorem  \ref{thm:causal-aggregate-marginal-comparison}.  There are several additional clarifications worth mentioning.  First, if $w^\star(s_d,s_{d-1}) > 0$ for any $(s_d,s_{d-1}) \in \mathcal{M}^-(d)$, it necessarily implies that the marginal comparison of aggregate means includes incongruent comparisons across sub-treatment vectors.  Second, if $w^\star(s_d,s_{d-1}) = 0$ for all $(s_d,s_{d-1}) \in \mathcal{M}^-(d)$, then the marginal comparison of aggregate means has a representation that only includes congruent comparisons across sub-treatment vectors. However, in general, there can be many weighting schemes that meet these criteria and involve congruent comparisons across sub-treatments; for instance, in the earlier example with uniform probabilities of each sub-treatment vector on page \pageref*{example uniform}, there are many weighting schemes that only involve congruent comparisons across sub-treatment vectors.  

\subsubsection{Testing Whether \texorpdfstring{$D$}{D} is Too Aggregated}\label{sec: testing SUTVA}
Next, we show that one can test whether the version of Assumption \ref{ass:sutva2} \textit{relative to $D$} holds. If it does, then the aggregation issues discussed in this paper are not relevant for the empirical application, and the researcher may use $D$ as their treatment variable without having to use the methods developed in this paper.  The second part of SUTVA is often considered to be untestable (see, for example, the discussion in \citet{Hernan2016}); in this section, we highlight that it is jointly testable with Assumptions \ref{ass:sutva2} and \ref{ass:unconfoundedness} in settings where the researcher observes sub-treatment $S$. We note that an analogous argument to the one below could be used to test Assumption  \ref{ass:sutva2} \textit{per se} (i.e., the version of that assumption relative to $S$) provided the researcher also observes a more disaggregated version of sub-treatment $\tilde{S}$. See Appendix \ref{sec:define-subtreatment} for further details. 

The version of Assumption \ref{ass:sutva2} relative to $D$ holds if, for all $d \in \mathcal{D}$, $Y_i(s_d) = Y_i(s_d')$ for all $i$ and $s_d, s_d' \in \mathcal{S}_d$.  Our test will be based on the comparison of means across different sub-treatment vectors corresponding to the same aggregate value of the treatment. If SUTVA holds for the aggregated treatment $D$, we have that, for any $s_d,s_d' \in \ST_d$,
\begin{align*}
    \E[Y|S=s_d] - \E[Y|S=s_d'] &= \E[Y(s_d)|S=s_d] - \E[Y(s_d')|S=s_d'] \\ 
    &= \underbrace{\E[Y(s_d) - Y(s_d')|S=s_d]}_{\text{SUTVA}} + \underbrace{\E[Y(s_d')|S=s_d] - \E[Y(s_d')|S=s_d']}_{\text{selection bias}},
\end{align*}
where the first equality holds by Assumption \ref{ass:sutva2} relative to $S$, and the second equality holds by adding and subtracting $\E[Y(s_d')|S=s_d]$. The first underlined term in the second line is equal to 0 when there are no hidden versions of the aggregated treatment (i.e., under Assumption \ref{ass:sutva2} relative to $D$), but the second term could still be non-zero---the mean of the potential outcomes of sub-treatment vector $s_d'$ could be different for sub-treatment group $s_d$ relative to sub-treatment group $s_d'$, even if these are not distinct versions of the treatment. For example, ``homework'' and ``music'' could be equivalent versions of the treatment, and yet the latter term could be non-zero if, for some reason related to selection, children who do homework tend to have higher or lower outcomes than children who do music.

However, the underlined selection bias term is equal to zero under Assumption \ref{ass:unconfoundedness}. This implies that the version of Assumption \ref{ass:sutva2} relative to $D$ is testable under the maintained Assumptions \ref{ass:sutva2} and \ref{ass:unconfoundedness}. One can carry out the test proposed here by simple tests for differences in means between all pairs of sub-treatments corresponding to the same aggregate level of the treatment, adjusting for multiple testing error.\footnote{See \citet{HasegawaEtAl2020} for simultaneous inference with two versions of treatment in the binary treatment case, which does not require correcting for multiple testing.}  We also note that related ideas could be used under alternative identification strategies that rely on different assumptions than Assumption \ref{ass:unconfoundedness}.

\subsubsection{Discussion}
This section has aimed to highlight the features of applications where the incongruent comparisons that show up in the decomposition in Theorem \ref{thm:causal-aggregate-marginal-comparison} arise.  First, we showed that the relative number of incongruent $\textrm{MATT}^-$'s grows rapidly in the complexity of the sub-treatments.  Second, we discussed additional assumptions (limitations on treatment effect heterogeneity and restrictions on sorting and latent types) that rule out incongruent $\textrm{MATT}^-$'s in the aggregate marginal effect.  Third, we provided conditions (a sub-treatment that decreases in the aggregated treatment) that guaranteed that incongruent $\textrm{MATT}^-$'s would show up in the aggregate marginal effect. Fourth, we showed how to test whether there should be any aggregation issues by testing the version of Assumption \ref{ass:sutva2} relative to $D$. 

To conclude this section, it is worth emphasizing that these four arguments provide complementary ways for a researcher to informally assess ``how much'' aggregation matters in a particular application.  For example, in an application with a small number of uncomplicated sub-treatments, where the sub-treatments are similar to each other and likely to have close to homogeneous effects, and where the version of Assumption \ref{ass:sutva2} relative to $D$ is not rejected, one should expect the negative implications of working with an aggregated treatment to be small.  In contrast, an application with a large number of more-distinct sub-treatments, heavy sorting across different values of the aggregated treatment, and where the version of Assumption \ref{ass:sutva2} relative to $D$ is rejected, is one in which we should expect major distortions to arise due to the aggregation of the treatment.


\section{Alternative Approaches} \label{sec:DATE}
In the preceding section, we saw that interpreting aggregate marginal effects encountered several complications, arising from two key issues: (i) comparisons across values of the aggregated treatment could mix marginal effects of congruent sub-treatment vectors and marginal effects of incongruent sub-treatment vectors, and (ii) marginal effects of incongruent sub-treatment vectors are difficult to interpret.  We then outlined some ways that a researcher could diagnose (or at least think about) the implications of incongruency with an aggregated treatment in a given application.  In this section, we consider two alternative approaches that can completely side-step the issues related to incongruency that were emphasized above.  First, in Section \ref{subsec:identification-for-nonmarginal-parameters}, we consider alternative, non-marginal causal effect parameters.  Targeting these parameters fully circumvents issues related to incongruency.  They are straightforward to interpret and are estimable when only the aggregated treatment is observed (i.e., they do not require the sub-treatments themselves to be observed).  Because this approach does not require the sub-treatments to be observed, this approach offers a path forward to conduct causal inference even in applications that require a very disaggregated notion of the sub-treatments to satisfy Assumption \ref{ass:sutva2}.  Moreover, the non-marginal comparisons that we consider in that section immediately apply for any aggregation function, not just the sum of the sub-treatments. Changing the target parameter means that it is no longer interpretable as a marginal causal effect parameter, which is a drawback for applications where a researcher strongly prefers this type of parameter.  Second, in Section \ref{subsec:identification-with-observed-subtreatments}, we show how to identify fully congruent marginal causal effect parameters in applications where sub-treatment data is available.  This approach delivers a marginal causal effect parameter, but it requires Assumption \ref{ass:sutva2} to hold for the observed sub-treatments and is more sensitive to the specific aggregation function specified by the researcher.

\subsection{Approach 1: Target Non-marginal Causal Effect Parameters} \label{subsec:identification-for-nonmarginal-parameters}
Marginal effects have a strong claim on being the most natural target parameters in the setting that we are considering, where the aggregated treatment can take multiple values, and reflect the most common ways that empirical work interprets results in these settings.  However, the previous section documented several challenges with interpreting aggregate marginal effects in the presence of sub-treatments.  Instead of considering marginal changes in the aggregated treatment, in this section, we focus on interpreting $\E[Y|D=d] - \E[Y|D=0]$ for any $d \in \D_{>0}$, which is the difference between the means of outcomes for the group that experiences aggregated treatment $d$ relative to the untreated group. We show that this non-marginal, aggregate comparison does not include incongruent comparisons across sub-treatments.  We further show that, under Assumption \ref{ass:unconfoundedness}, this comparison has a causal interpretation as the average of the causal effects of each sub-treatment $s_d \in \mathcal{S}_d$ relative to being untreated. Importantly, sub-treatments in this case do not need to be observed by the researcher. We refer to this as a \textit{baseline-to-$d$} comparison in the text below. 

In terms of causal effect parameters, the main building block parameter in this section is the   \textit{average treatment effect on the treated} (ATT)
\begin{align*}
    \textrm{ATT}(s_d) := \E[Y(s_d) - Y(0) | S=s_d] 
\end{align*}
which is defined with respect to a given sub-treatment $s_d$ and where $Y(0)$ is shorthand notation for being untreated (i.e., where all sub-treatments are equal to zero).  $\textrm{ATT}(s_d)$ is the average effect of experiencing sub-treatment vector $s_d$ relative to being untreated among sub-treatment group $s_d$.  In the spirit of using the aggregated treatment to summarize the causal effects of the sub-treatments, our main target parameter in this section is 
\begin{align*}
    \textrm{AATT}(d) := \E\big[\textrm{ATT}(S)\big|D=d\big]
\end{align*}
which is the aggregate average treatment effect on the treated across sub-treatments corresponding to the aggregated treatment being equal to $d$.  From the law of iterated expectations, it follows that 
\begin{align} \label{eqn:aatt-weights}
    \textrm{AATT}(d) = \sum_{s_d \in \mathcal{S}_d} \P(S=s_d|D=d) \cdot \textrm{ATT}(s_d)
\end{align}
i.e., that $\textrm{AATT}(d)$ is a weighted average of the underlying $\textrm{ATT}$'s of specific sub-treatments, with weights given by the relative frequency of that sub-treatment among all sub-treatments that aggregate to $d$.  

In some applications, it is also useful to scale $\textrm{ATT}(s_d)$, or $\textrm{AATT}(d)$, by the amount of the aggregated treatment, i.e., to consider the parameters
\begin{align*}
    \frac{\mathrm{ATT}(s_d)}{d}~~~~\text{or}~~~~\frac{\mathrm{AATT}(d)}{d}
\end{align*}
which can be interpreted as average treatment effects per unit of the (sub)-treatment.  We refer to these as scaled $\textrm{ATT}$'s and scaled $\textrm{AATT}$'s, respectively.


\subsubsection*{Identification} \label{sec:datt-identification}
Next, we provide identification results for the average treatment effect parameters discussed above. 

\begin{theorem} \label{thm:att-identification} Under Assumptions \ref{ass:sutva2} and \ref{ass:unconfoundedness}, for $d \in \mathcal{D}_{>0}$ and $s_d \in \mathcal{S}_d$,
\begin{align*}
    \mathrm{ATT}(s_d) = \E[Y|S=s_d] - \E[Y|S=0_K] \quad \text{and} \quad \mathrm{AATT}(d) = \E[Y|D=d] - \E[Y|D=0]
\end{align*}
where $0_K$ denotes the zero vector of length $K$. $\mathrm{ATT}(s_d)$ is identified if the sub-treatments are observed. $\mathrm{AATT}(d)$ is identified whether or not the sub-treatments are observed.
\end{theorem}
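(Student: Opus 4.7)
The plan is to prove the two identification formulas in turn, using the two maintained assumptions in a complementary way: Assumption \ref{ass:sutva2} handles the factual side of each conditional mean (it lets us swap observed outcomes for potential outcomes whenever the conditioning event pins down the sub-treatment vector), while Assumption \ref{ass:unconfoundedness} handles the counterfactual side (it lets us substitute the conditioning sub-treatment vector for the untreated baseline inside an expectation of $Y(0)$).

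For the first statement, I would start from the definition $\mathrm{ATT}(s_d) = \E[Y(s_d) - Y(0)\mid S = s_d]$ and split it into two terms. The factual term $\E[Y(s_d)\mid S = s_d]$ equals $\E[Y\mid S = s_d]$ by Assumption \ref{ass:sutva2}, which tells us $Y = Y(s_d)$ on the event $\{S = s_d\}$. For the counterfactual term $\E[Y(0)\mid S = s_d]$, I invoke Assumption \ref{ass:unconfoundedness}, which gives $Y(0) \independent S$ and therefore $\E[Y(0)\mid S = s_d] = \E[Y(0)\mid S = 0_K]$, and then one more application of Assumption \ref{ass:sutva2} converts this to $\E[Y\mid S = 0_K]$. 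Subtracting gives the first identification formula. (This is essentially the content already invoked in the statement immediately before Proposition \ref{prop:comparison-of-subtreatments}, so the only task is to isolate the $s_d$-versus-$0_K$ case.)

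For the second statement, the plan is to start from the weighted representation in equation \eqref{eqn:aatt-weights},
\begin{align*}
    \mathrm{AATT}(d) = \sum_{s_d \in \mathcal{S}_d} \P(S = s_d \mid D = d)\cdot \mathrm{ATT}(s_d),
\end{align*}
substitute the first identification formula into each $\mathrm{ATT}(s_d)$ term, and then factor out the constant $\E[Y\mid S = 0_K]$ using the fact that the weights $\P(S = s_d \mid D = d)$ sum to one over $s_d \in \mathcal{S}_d$. The remaining sum $\sum_{s_d \in \mathcal{S}_d} \P(S = s_d \mid D = d)\,\E[Y\mid S = s_d]$ collapses to $\E[Y\mid D = d]$ by the law of iterated expectations (using that $D = A(S)$ is a deterministic function of $S$, so conditioning on $D = d$ restricts $S$ to $\mathcal{S}_d$). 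Finally, because $\mathcal{S}_0 = \{0_K\}$ (the only sub-treatment vector aggregating to $0$), the term $\E[Y\mid S = 0_K\rangle$ equals $\E[Y\mid D = 0]$, yielding the second identification formula.

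The identifiability remarks then follow immediately: $\E[Y\mid S = s_d]$ and $\E[Y\mid S = 0_K]$ both require observing $S$, whereas $\E[Y\mid D = d]$ and $\E[Y\mid D = 0]$ only require observing $D$. There is no real obstacle here; the mild care needed is in the second step, where one must be explicit that $\{D = 0\} = \{S = 0_K\}$ so that the baseline term aggregates cleanly, and that the weighted sum over $s_d \in \mathcal{S}_d$ is exactly what the law of iterated expectations yields when $D$ is a deterministic aggregation of $S$.
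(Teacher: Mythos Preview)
Your proposal is correct and takes essentially the same approach as the paper: the first identity is obtained by splitting $\mathrm{ATT}(s_d)$ into a factual piece handled by Assumption~\ref{ass:sutva2} and a counterfactual piece handled by Assumption~\ref{ass:unconfoundedness} (the paper does the equivalent add-and-subtract decomposition via Proposition~\ref{prop:datt-identification}), and the second identity follows by substituting into \eqref{eqn:aatt-weights}, applying the law of iterated expectations, and using $\mathcal{S}_0=\{0_K\}$ to identify $\E[Y\mid S=0_K]$ with $\E[Y\mid D=0]$, exactly as in the paper's proof.
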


Theorem \ref{thm:att-identification} shows that $\textrm{AATT}(d)$ is identified under Assumptions \ref{ass:sutva2} and \ref{ass:unconfoundedness}, even if the researcher only observes the aggregated treatment (and not the sub-treatments).  It is interesting to compare this result with the one in Theorem \ref{thm:causal-aggregate-marginal-comparison} above concerning the comparison of means of outcomes for marginal increases in the aggregated treatment (i.e., $\E[Y|D=d]-\E[Y|D=d-1]$).  A major issue for the marginal comparison emphasized in Section \ref{sec:matt} was the non-uniqueness of the weights and the possibility of incongruency.  Neither of those issues apply for the baseline-to-$d$ comparisons, $\E[Y|D=d]-\E[Y|D=0]$, considered here.  The ``weights'' on the underlying sub-treatment-specific $\textrm{ATT}(s_d)$ parameters are given in Equation \eqref{eqn:aatt-weights}. 
These are unique, positive for all relevant sub-treatments, and intuitive---they correspond to the relative frequency of each relevant sub-treatment. Mechanically, the same sort of double-sum arguments can be used here as in the previous case, but, by construction, $\mathcal{S}_0$ only has one element, which results in the implicit weighting scheme being unique in this case.  The benefit is that, unlike for the marginal case, $\E[Y|D=d]-\E[Y|D=0]$ is straightforward to interpret, and all of the issues related to incongruency emphasized above can be avoided.\footnote{Both Theorem \ref{thm:causal-aggregate-marginal-comparison} and Theorem \ref{thm:att-identification} invoked Assumption \ref{ass:unconfoundedness}.  In both cases, this assumption is stronger than necessary, though the minimal assumptions to provide a causal interpretation in each result are non-nested.  Causal interpretations of marginal effects of sub-treatments can hold under a local version of no selection, while causal interpretations of $\textrm{AATT}(d)$ can be rationalized under a version of the no-selection assumption that involves untreated potential outcomes only.  This could be a meaningful difference in some applications (though it is not relevant for any of our discussions about aggregation specifically).  We discuss these differences in more detail in Appendix \ref{sec:minimal-assumptions}.}


\subsubsection*{Interpreting Regressions with Scaled \texorpdfstring{Baseline-to-$d$}{Baseline-to-d} Building Blocks} \label{sec:datt-regression}

Next, we return to interpreting the coefficient on the aggregated treatment variable in the regression from Equation \eqref{eqn:reg}, but we relate it to the scaled baseline-to-$d$ building blocks: $(\E[Y|D=d]-\E[Y|D=0])/d$.  In Proposition \ref{prop:regression-with-baseline-to-d-building-blocks} in the Supplementary Appendix (\cite{CCCDSupp2025}), we show that
\begin{align*}
    \alpha_1 &= \sum_{d=1}^{\Bar{N}} \tilde{\omega}^{reg}(d) \cdot \frac{ \E[Y|D=d] - \E[Y|D=0]}{d}
\end{align*}
where \vspace{-1em}
\begin{align*}
    \tilde{\omega}^{reg}(d) = \frac{d \cdot (d-\E[D])}{\Var(D)} \cdot \P(D=d)
\end{align*}
and satisfies the following properties: (i) $\displaystyle \sum_{d=1}^{\bar{N}} \tilde{\omega}^{reg}(d) = 1$ 
and (ii)  $\tilde{\omega}^{reg}(d) \lessgtr 0$ for $d \lessgtr \E[D]$.\footnote{The proof uses the same mechanics as Theorem S3 in the Supplementary Appendix of \citet{chaisemartin-dhaultfoeuille-2020}, though the context of that result (interpreting two-way fixed effects regressions) is very different from ours.}

The expression for $\alpha_1$ above can be combined with the result in Theorem \ref{thm:att-identification} to say that the regression coefficient on the aggregated treatment can be interpreted as a weighted average of scaled $\textrm{AATT}(d)$ parameters.  Relative to the regression weights discussed above for the marginal case, the regression weights with scaled baseline-to-$d$ primitives have worse properties. The weights are negative for values of the aggregated treatment below $\E[D]$, implying that $\alpha_1$ is not weakly causal when scaled $\textrm{AATT}(d)$'s are the underlying building block parameters.  In addition, the weights are systematically increasing in magnitude in their distance from $\E[D]$, meaning that effects for sub-treatment groups with more extreme values of the aggregated treatment $D$ ``count more'' than effects for other sub-treatment groups. 

The discussion above highlights a certain tension with interpreting $\alpha_1$ in terms of baseline-to-$d$ causal effects.  While the building blocks $(\E[Y|D=d] - \E[Y|D=0]) / d$ are more interpretable than the marginal effects discussed previously, the weights inherited from the regression have poor properties.  Thus, in settings where a researcher would like to report a single, scalar summary of the causal effects of the sub-treatments, a natural alternative to reporting $\alpha_1$ is to report either
\begin{align*}
    \E\big[\textrm{AATT}(D)\big|D>0\big]~~~~\text{or}~~~~\E\left[ \frac{\textrm{AATT}(D)}{D} \middle| D > 0\right]
\end{align*}
which are all identified under the same conditions that would give $\alpha_1$ a causal interpretation, but do not suffer from the poor weighting scheme stemming from the regression.

\begin{remark} [Linearity and Homogeneity]
    Consider the case where, for all $d \in \D_{>0}$ and $s_d \in \mathcal{S}_d$, $\mathrm{ATT}(s_d) = \theta \times d$, so that the average treatment effects of all sub-treatments are (i) constant across sub-treatments corresponding to the same aggregate amount of the treatment and (ii) linear in $d$. (i) and (ii) restrict treatment effect heterogeneity across sub-treatments and impose a linearity condition.  In this case, $\E\left[ \frac{\mathrm{AATT}(D)}{D} \middle| D > 0\right] = \theta$, and, in addition, it also follows that $\alpha_1 = \theta$.  In other words, in this case, the regression would deliver the unique scaled treatment effect parameter.  In practice, both (i) and (ii) are likely to be strong auxiliary assumptions for most applications.  This suggests it is a better strategy to directly target parameters such as $\E\left[ \frac{\mathrm{AATT}(D)}{D} \middle| D > 0\right]$ rather than hoping that the regression will deliver them. 
\end{remark}


\subsection{Approach 2: Target Marginal Effect Parameters with Sub-treatment Data} \label{subsec:identification-with-observed-subtreatments}

In this section, we target summary marginal causal effect parameters, exploiting that the sub-treatments are observed, which only applies for some applications.  To start with, recall that, under Assumptions \ref{ass:sutva2} and \ref{ass:unconfoundedness},
\begin{align*}
    \mathrm{MATT}^+(s_d,s_{d-1}) = \E[Y|S=s_d] - \E[Y|S=s_{d-1}],\quad (s_d,s_{d-1}) \in \mathcal{M}^+(d).
\end{align*}
Therefore, when the sub-treatments are observed, $\textrm{MATT}^+(s_d,s_{d-1})$ is identified (see Proposition \ref{prop:comparison-of-subtreatments} in Appendix \ref{app:proofs}).  Given that this marginal effect is defined at the sub-treatment level, none of the issues related to incongruency that we emphasized above in the context of aggregation apply.  In practice, a researcher could estimate and report $\textrm{MATT}^+(s_d,s_{d-1})$ for any (or all) combinations of congruent sub-treatment vectors.  Leaving the discussion here, however, would not fully address some relevant empirical challenges---presumably, in the majority of applications where the sub-treatments are observed, the entire reason to introduce an aggregated treatment variable is that there tend to be few observations that experienced each specific combination of sub-treatments.  This implies that the sort of non-parametric analysis mentioned above would suffer from a form of curse of dimensionality, resulting in each $\textrm{MATT}^+(s_d,s_{d-1})$ being estimated imprecisely and in poor performance of inference procedures for the $\textrm{MATT}^+$'s.

In contrast, however, even when the number of observations per combination of sub-treatments is small, one may still be able to estimate averages of $\textrm{MATT}^+$'s well.  A natural option is 
\begin{align} \label{eqn:amatt-plus}
    \widetilde{\mathrm{AMATT}^{+}}(d) &:=  \sum_{(s_d,s_{d-1}) \in \mathcal{M}^+(d)} \tilde{w}^+(s_d,s_{d-1}) \cdot \mathrm{MATT}^+(s_d, s_{d-1})
\end{align}
where, for $(s_d,s_{d-1}) \in \mathcal{M}^+(d)$, 
\begin{align} \label{eqn:matt-congruent-weight}
    \tilde{w}^+(s_d,s_{d-1}) &:=  \frac{\P\big(S(d)=s_d,S(d-1)=s_{d-1} \big| D \in \{d,d-1\} \big)}{\displaystyle \sum_{(s'_d,s'_{d-1}) \in \mathcal{M}^+(d)} \P\big(S(d)=s'_d,S(d-1)=s'_{d-1} \big| D \in \{d,d-1\} \big)}.
\end{align}
$\widetilde{\textrm{AMATT}^+}(d)$ is a special case of $\textrm{AMATT}^{+}_{w^+}(d)$ in Equation \eqref{eqn:amatt-tilde} above, as it is a specific weighted average of congruent $\textrm{MATT}^+$'s.  The weights come from the joint distribution of latent sub-treatment types local to the aggregated treatment either being $d$ or $d-1$. These weights give $\widetilde{\textrm{AMATT}^+}(d)$ a clear interpretation as a marginal ``on-the-treated'' type of parameter as it depends on the distribution of the sub-treatments that are (or would be) experienced.  
The term in the denominator of the expression for $\tilde{w}^+(s_d,s_{d-1})$ can be thought of as normalizing the weights on congruent sub-treatments so that they sum to one.\footnote{In settings where units would make congruent sub-treatment choices at different amounts of the aggregated treatment (i.e., if Assumption \ref{ass:no-incongruity} holds), then the expression in the denominator is equal to one, and the weights do not need to be normalized.  Alternatively, one can view the normalization as arising due to dropping incongruent $\textrm{MATT}$'s.} Recovering $\widetilde{\textrm{AMATT}^+}(d)$, however, introduces additional challenges relative to identifying $\textrm{MATT}$'s:  even if the sub-treatments are observed, the weights depend on the joint distribution of latent sub-treatment types for aggregated treatment $D=d$ or $D=d-1$ and, therefore, require additional assumptions to identify. 
We discuss these issues in more detail in Appendix \ref{app:id-matt-plus}; however, in order to avoid introducing additional assumptions, we instead focus on identifying a version of  $\textrm{AMATT}^{+}_{w^+}(d)$ with researcher-chosen weights, sacrificing some interpretability but increasing tractability.  A leading option for researcher-chosen weights is to use the normalized product weights, i.e., 
\begin{align*}
\nu^+(s_d, s_{d-1}) \coloneqq 
\frac{\P(S = s_d \mid D = d) \cdot \P(S = s_{d-1} \mid D = d - 1)}
     {\displaystyle\sum_{(s_d', s_{d-1}') \in \mathcal{M}^+(d)} \P(S = s_d' \mid D = d) \cdot \P(S = s_{d-1}' \mid D = d - 1)},
\end{align*}
for $(s_d,s_{d-1}) \in \mathcal{M}^+(d)$, and then to consider
\begin{align}
   \mathrm{AMATT}^+(d) := \sum_{(s_d,s_{d-1}) \; \in \; \mathcal{M}^+(d)} \nu^+(s_d, s_{d-1}) \cdot \mathrm{MATT}^+(s_d,s_{d-1}). \label{eqn:congruent-product-weight}
\end{align}
Using this weighting scheme results in putting more weight on common sub-treatments. An immediate implication of Assumptions \ref{ass:sutva2} and \ref{ass:unconfoundedness} and observing the sub-treatments (see Proposition \ref{prop:comparison-of-subtreatments}) is that $\textrm{AMATT}^{+}_{w^+}(d)$ in Equation \eqref{eqn:amatt-tilde} is identified; this also implies that $\textrm{AMATT}^+(d)$ is identified under the same conditions.  

In settings where a researcher would like a scalar summary of the marginal causal effects of the treatment, a natural target parameter is
\begin{align*}
    \textrm{AMATT}^+ &:= \E\big[ \textrm{AMATT}^+(D) \big| D>0 \big]
\end{align*}
which is an average marginal (weakly) causal effect parameter that comes from averaging $\textrm{AMATT}^+(d)$ over the distribution of the aggregated treatment variable $D$.  It is identified under Assumptions \ref{ass:sutva2} and \ref{ass:unconfoundedness} when the sub-treatments are observed. 


\section{Empirical Application} \label{sec:EmpiricalApplication}
In this section, guided by our results above, we illustrate the aggregation issues and the methods proposed above with data from \citet{ CCaetanoEtAl2024}, which studies the effects of enrichment activities on noncognitive skills among children in the U.S. Like that paper, we use data from the Childhood Development Supplement (CDS) provided by the Panel Study of Income Dynamics (PSID) that contains time-use diaries and measures of cognitive and noncognitive skills. 
For clarity, we make some simplifications. 
Our estimates come from simple comparisons of means; we ignore control variables and fixed effects in our analysis, and we abstract from the bunching identification strategy that is often used in this literature.  Instead of trying to make a causal empirical claim, in this section, we only aim to highlight the issues that stem from aggregating sub-treatment variables; and, as discussed above, these issues continue to apply whether or not Assumption \ref{ass:unconfoundedness} holds.

The aggregated treatment variable in our application is \textit{total hours of enrichment activity}, which is an aggregation of four sub-treatments: Lessons, Structured Sports, Volunteering, and Before \& After School Programs.  Each sub-treatment represents the average number of hours per week that the child spent on that specific activity, and the sub-treatment vector represents the bundle of sub-treatments each child was exposed to. In our data, we observe each child's participation in each sub-treatment.  We round the sub-treatments to their nearest half-hour. We then sum across all four sub-treatments for each individual so that the aggregate variable $D=(S_1+S_2+S_3+S_4)$ is the total amount of enrichment activities.  The outcome of interest is noncognitive skill, a normalized index of socio-emotional and behavioral ratings with mean zero and standard deviation of one. Larger values indicate better noncognitive scores. We follow \citet{CCaetanoEtAl2024} in constructing this variable; see that paper for more details. See Supplementary Appendix \ref{subsubsec:ApplicationDataDescription} for summary statistics and more details on how we constructed the data used in our application (\cite{CCCDSupp2025}).  In the main text, to simplify the discussion, we focus on a subsample of low socio-economic status children in 2019.\footnote{We opted for illustrating the analysis in this smaller subsample because it has sufficiently few different sub-treatment values, allowing us to report Table \ref{tab:algorithm_results} in the paper.}  In Supplementary Appendix \ref{supapp:full-sample-analysis} (\cite{CCCDSupp2025}), we provide the results for the full sample used in \citet{CCaetanoEtAl2024}.

Observing the sub-treatments is important for our analysis.  Below, we often temporarily ignore that we observe the sub-treatments and act as if we only had access to the aggregated treatment.  Then, exploiting the fact that we actually do observe the sub-treatments, we are able to diagnose how much aggregation itself affects the results.  It is also worth mentioning that, although the decompositions that we discussed in previous sections were written in terms of population quantities, the same arguments can be applied to their sample analogues.  We also report standard errors below, but we mainly emphasize the point estimates---because the sample, sub-treatments, and outcomes are the same across estimators, differences in results reflect real differences in the estimands rather than noise. As such, statistical significance is not the primary lens through which to assess these differences; instead, differences in the point estimates themselves reflect how much varying the estimands matters in practice.

\subsection{Sub-treatment Diagnostics} \label{subsec:application-diagnostics}

We begin by examining evidence of incongruency across different values of the aggregated treatment. Specifically, we investigate how the composition of sub-treatments varies with the level of total enrichment, recalling from Proposition \ref{prop:decreasing-means-implies-incongruency} that a sub-treatment whose mean declines in the aggregated treatment implies the presence of incongruency.

\begin{figure}[t]
\caption{Average Amount of Sub-treatments across Each Level of Aggregated Treatment}
\vspace{-8mm}
\label{fig:AvgSTxD_plot}
\begin{center}
\includegraphics[scale=0.49]{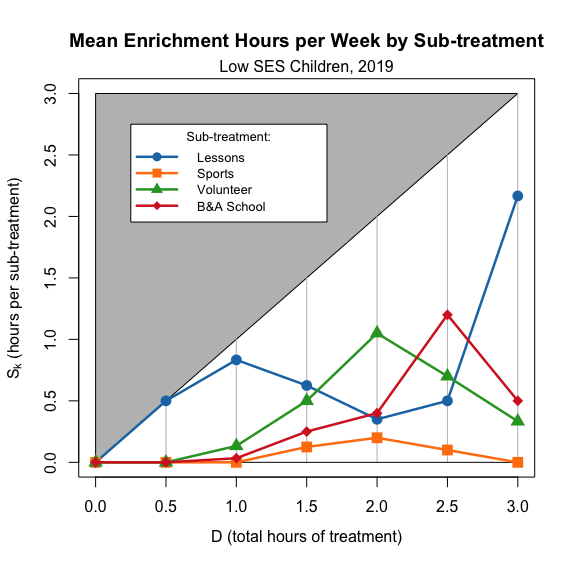}
\end{center}
\begin{justify}
{\small \textit{Notes:} The figure displays the average amount of each sub-treatment as a function of the total amount of treatment, $D$.}
\end{justify}
\end{figure}

Figure \ref{fig:AvgSTxD_plot} displays the mean of each sub-treatment across every level of the aggregated treatment $D$. The plot shows the mean number of hours for each type of enrichment activity (each sub-treatment) as the total hours of enrichment activity increases. The 45-degree line represents, for each value $D$ in the horizontal axis, the vertical sum of hours across all sub-treatments, which naturally equals the total number of hours spent on enrichment, $D$. For example, at $D=0.5$, the average amount of hours spent on lessons is 0.5, which is 100\% of the total enrichment for that value of $D$. At $D=1$, we observe that the average number of hours for lessons increases relative to $D=0.5$, but this sub-treatment is no longer the only sub-treatment that is experienced at $D=1$. 

More interestingly, the mean for the lessons sub-treatment declines from $D=1$ to $D=1.5$. From Proposition \ref{prop:decreasing-means-implies-incongruency}, this is evidence of incongruency, implying that the aggregate marginal effect at $D=1.5$ cannot avoid putting weight on incongruent marginal sub-treatment effects. Notice that these violations continue for marginal increases from $D=1.5$ to $D=2.0$ for lessons; from $D=2.0$ to $D=2.5$ for both sports and volunteering sub-treatments; and from $D=2.5$ to $D=3.0$ for all sub-treatments except lessons. This suggests that aggregate marginal effects (including regressions interpreted as marginal effects) are hard to interpret---as discussed above, the incongruent $\textrm{MATT}^-$ terms that will show up here include hard-to-interpret substitution effects or, equivalently, congruent $\textrm{MATT}^+$'s with negative weights.

Although the sub-treatment plot can tell us where weight on incongruent comparisons is inescapable, it does not explain \textit{how much} incongruity there is.  To answer this question, we next find the minimally incongruent weights by solving the linear program in Section \ref{sec:congruent-weight-search}. 
These weights provide an interpretation of the aggregate marginal effect with minimal incongruity.  The results from solving the problem are displayed in Table \ref{tab:algorithm_results}, which lists all marginal pairs of sub-treatment vectors at each $D=d$ from the data.  In line with the results from Figure \ref{fig:AvgSTxD_plot}, the minimally incongruent weights put weight on incongruent marginal sub-treatment effects for the aggregate marginal effects at $D=1.5, 2.0, 2.5,$ and $3.0$. Moreover, for $D=1.5,2.0,$ and $2.5$, the weight on incongruent marginal sub-treatment effects is substantial, ranging from 30-40\% of the total weight. Even more strikingly, all of the weight falls on incongruent comparisons between $D=2.5$ and $D=3.0$ because there are no available congruent sub-treatment vectors. 

\begin{table}[ht]
    \centering
    \caption{Minimally Incongruent Weights}
    \label{tab:algorithm_results}
    \scalebox{0.84}{%
    \begin{threeparttable}
        \begin{tabular}{@{}l l c c c c r@{}}
            \toprule
            & \multicolumn{1}{c}{\large $s_d$} & 
              \multicolumn{1}{c}{\large $s_{d-1}$} & 
              \multicolumn{1}{c}{\large $\in\mathcal M^-(d)$} & 
              \multicolumn{1}{c}{\large $\delta(s_d,s_{d-1})$} & 
              \multicolumn{1}{c}{\large $w^\star(s_d,s_{d-1})$} & 
              \multicolumn{1}{r}{\large \textit{Incongruent Wt.\,(\%)} } \\
            \bottomrule
            \midrule

            \textit{D=0.5} & & & & & & \\ 
                  & \mapD{1},\;\mapD{0},\;\mapD{0},\;\mapD{0} 
                  & \mapD{0},\;\mapD{0},\;\mapD{0},\;\mapD{0} 
                  &  & $-0.219$ & $1.000$ & \cellcolor{white}\textit{0.00\%} \\ 
            \midrule

            \textit{D=1.0} & & & & & & \\ 
                  & \mapD{1},\;\mapD{0},\;\mapD{0},\;\mapD{1} 
                  & \mapD{1},\;\mapD{0},\;\mapD{0},\;\mapD{0} 
                  &  & $0.388$  & $0.067$ & \cellcolor{white}  \\ 
                  & \mapD{1},\;\mapD{0},\;\mapD{1},\;\mapD{0} 
                  & \mapD{1},\;\mapD{0},\;\mapD{0},\;\mapD{0} 
                  &  & $-0.227$ & $0.267$ & \cellcolor{white}  \\ 
                  & \mapD{2},\;\mapD{0},\;\mapD{0},\;\mapD{0} 
                  & \mapD{1},\;\mapD{0},\;\mapD{0},\;\mapD{0} 
                  &  & $-0.018$ & $0.667$ & \cellcolor{white}\textit{0.00\%} \\
            \midrule

            \textit{D=1.5} & & & & & & \\ 
            \rowcolor{gray!10}
                  & \mapD{0},\;\mapD{0},\;\mapD{0},\;\mapD{3} 
                  & \mapD{1},\;\mapD{0},\;\mapD{0},\;\mapD{1} 
                  & $\times$ & $0.085$  & $0.004$ & \cellcolor{white}  \\ 
                  & \mapD{1},\;\mapD{0},\;\mapD{0},\;\mapD{2} 
                  & \mapD{1},\;\mapD{0},\;\mapD{0},\;\mapD{1} 
                  &  & $0.340$  & $0.063$ & \cellcolor{white}  \\ 
            \rowcolor{gray!10}
                  & \mapD{0},\;\mapD{0},\;\mapD{0},\;\mapD{3} 
                  & \mapD{1},\;\mapD{0},\;\mapD{1},\;\mapD{0} 
                  & $\times$ & $0.701$  & $0.017$ & \cellcolor{white}  \\ 
                  & \mapD{1},\;\mapD{0},\;\mapD{2},\;\mapD{0} 
                  & \mapD{1},\;\mapD{0},\;\mapD{1},\;\mapD{0} 
                  &  & $-0.000$ & $0.188$ & \cellcolor{white}  \\ 
                  & \mapD{1},\;\mapD{1},\;\mapD{1},\;\mapD{0} 
                  & \mapD{1},\;\mapD{0},\;\mapD{1},\;\mapD{0} 
                  &  & $0.752$  & $0.063$ & \cellcolor{white}  \\ 
            \rowcolor{gray!10}
                  & \mapD{0},\;\mapD{0},\;\mapD{0},\;\mapD{3} 
                  & \mapD{2},\;\mapD{0},\;\mapD{0},\;\mapD{0} 
                  & $\times$ & $0.491$  & $0.104$ & \cellcolor{white}  \\ 
            \rowcolor{gray!10}
                  & \mapD{0},\;\mapD{0},\;\mapD{3},\;\mapD{0} 
                  & \mapD{2},\;\mapD{0},\;\mapD{0},\;\mapD{0} 
                  & $\times$ & $0.236$  & $0.188$ & \cellcolor{white}  \\ 
            \rowcolor{gray!10}
                  & \mapD{0},\;\mapD{3},\;\mapD{0},\;\mapD{0} 
                  & \mapD{2},\;\mapD{0},\;\mapD{0},\;\mapD{0} 
                  & $\times$ & $0.030$  & $0.063$ & \cellcolor{white}  \\ 
                  & \mapD{3},\;\mapD{0},\;\mapD{0},\;\mapD{0} 
                  & \mapD{2},\;\mapD{0},\;\mapD{0},\;\mapD{0} 
                  &  & $0.061$  & $0.313$ & \cellcolor{white}\textit{37.51\%} \\
            \midrule

            \textit{D=2.0} & & & & & & \\ 
                  & \mapD{0},\;\mapD{0},\;\mapD{0},\;\mapD{4} 
                  & \mapD{0},\;\mapD{0},\;\mapD{0},\;\mapD{3} 
                  &  & $-0.170$ & $0.125$ & \cellcolor{white}  \\ 
                  & \mapD{0},\;\mapD{0},\;\mapD{4},\;\mapD{0} 
                  & \mapD{0},\;\mapD{0},\;\mapD{3},\;\mapD{0} 
                  &  & $-0.022$ & $0.075$ & \cellcolor{white}  \\ 
                  & \mapD{1},\;\mapD{0},\;\mapD{3},\;\mapD{0} 
                  & \mapD{0},\;\mapD{0},\;\mapD{3},\;\mapD{0} 
                  &  & $-0.141$ & $0.113$ & \cellcolor{white}  \\ 
                  & \mapD{0},\;\mapD{4},\;\mapD{0},\;\mapD{0} 
                  & \mapD{0},\;\mapD{3},\;\mapD{0},\;\mapD{0} 
                  &  & $-0.104$ & $0.063$ & \cellcolor{white}  \\ 
            \rowcolor{gray!10}
                  & \mapD{0},\;\mapD{0},\;\mapD{4},\;\mapD{0} 
                  & \mapD{1},\;\mapD{0},\;\mapD{0},\;\mapD{2} 
                  & $\times$ & $-0.532$ & $0.025$ & \cellcolor{white}  \\ 
            \rowcolor{gray!10}
                  & \mapD{0},\;\mapD{4},\;\mapD{0},\;\mapD{0} 
                  & \mapD{1},\;\mapD{0},\;\mapD{0},\;\mapD{2} 
                  & $\times$ & $-0.820$ & $0.038$ & \cellcolor{white}  \\ 
                  & \mapD{1},\;\mapD{0},\;\mapD{3},\;\mapD{0} 
                  & \mapD{1},\;\mapD{0},\;\mapD{2},\;\mapD{0} 
                  &  & $0.304$  & $0.188$ & \cellcolor{white}  \\ 
            \rowcolor{gray!10}
                  & \mapD{0},\;\mapD{0},\;\mapD{4},\;\mapD{0} 
                  & \mapD{1},\;\mapD{1},\;\mapD{1},\;\mapD{0} 
                  & $\times$ & $-0.329$ & $0.063$ & \cellcolor{white}  \\ 
            \rowcolor{gray!10}
                  & \mapD{0},\;\mapD{0},\;\mapD{0},\;\mapD{4} 
                  & \mapD{3},\;\mapD{0},\;\mapD{0},\;\mapD{0} 
                  & $\times$ & $0.260$  & $0.075$ & \cellcolor{white}  \\ 
            \rowcolor{gray!10}
                  & \mapD{0},\;\mapD{0},\;\mapD{4},\;\mapD{0} 
                  & \mapD{3},\;\mapD{0},\;\mapD{0},\;\mapD{0} 
                  & $\times$ & $0.154$  & $0.138$ & \cellcolor{white}  \\ 
                  & \mapD{4},\;\mapD{0},\;\mapD{0},\;\mapD{0} 
                  & \mapD{3},\;\mapD{0},\;\mapD{0},\;\mapD{0} 
                  &  & $-0.712$ & $0.100$ & \cellcolor{white}\textit{33.75\%} \\
            \midrule

            \textit{D=2.5} & & & & & & \\ 
                  & \mapD{1},\;\mapD{0},\;\mapD{0},\;\mapD{4} 
                  & \mapD{0},\;\mapD{0},\;\mapD{0},\;\mapD{4} 
                  &  & $-0.417$ & $0.200$ & \cellcolor{white}  \\ 
            \rowcolor{gray!10}
                  & \mapD{0},\;\mapD{0},\;\mapD{0},\;\mapD{5} 
                  & \mapD{0},\;\mapD{0},\;\mapD{4},\;\mapD{0} 
                  & $\times$ & $0.058$  & $0.200$ & \cellcolor{white}  \\ 
                  & \mapD{1},\;\mapD{0},\;\mapD{4},\;\mapD{0} 
                  & \mapD{0},\;\mapD{0},\;\mapD{4},\;\mapD{0} 
                  &  & $-0.762$ & $0.100$ & \cellcolor{white}  \\ 
            \rowcolor{gray!10}
                  & \mapD{1},\;\mapD{1},\;\mapD{0},\;\mapD{3} 
                  & \mapD{0},\;\mapD{4},\;\mapD{0},\;\mapD{0} 
                  & $\times$ & $0.601$  & $0.100$ & \cellcolor{white}  \\ 
                  & \mapD{1},\;\mapD{0},\;\mapD{4},\;\mapD{0} 
                  & \mapD{1},\;\mapD{0},\;\mapD{3},\;\mapD{0} 
                  &  & $-0.642$ & $0.100$ &  \\ 
                  & \mapD{2},\;\mapD{0},\;\mapD{3},\;\mapD{0} 
                  & \mapD{1},\;\mapD{0},\;\mapD{3},\;\mapD{0} 
                  &  & $0.687$  & $0.200$ & \cellcolor{white}  \\ 
            \rowcolor{gray!10}
                  & \mapD{1},\;\mapD{1},\;\mapD{0},\;\mapD{3} 
                  & \mapD{4},\;\mapD{0},\;\mapD{0},\;\mapD{0} 
                  & $\times$ & $0.601$  & $0.100$ & \cellcolor{white}\textit{40.00\%} \\ 
            \midrule

            \textit{D=3.0} & & & & & & \\ 
            \rowcolor{gray!10}
                  & \mapD{5},\;\mapD{0},\;\mapD{1},\;\mapD{0} 
                  & \mapD{0},\;\mapD{0},\;\mapD{0},\;\mapD{5} 
                  & $\times$ & $0.138$  & $0.200$ & \cellcolor{white} \\ 
            \rowcolor{gray!10}
                  & \mapD{5},\;\mapD{0},\;\mapD{1},\;\mapD{0} 
                  & \mapD{1},\;\mapD{0},\;\mapD{0},\;\mapD{4} 
                  & $\times$ & $0.507$  & $0.200$ & \cellcolor{white}  \\ 
            \rowcolor{gray!10}
                  & \mapD{5},\;\mapD{0},\;\mapD{1},\;\mapD{0} 
                  & \mapD{1},\;\mapD{0},\;\mapD{4},\;\mapD{0} 
                  & $\times$ & $0.959$  & $0.200$ & \cellcolor{white}  \\ 
            \rowcolor{gray!10}
                  & \mapD{3},\;\mapD{0},\;\mapD{0},\;\mapD{3} 
                  & \mapD{1},\;\mapD{1},\;\mapD{0},\;\mapD{3} 
                  & $\times$ & $-0.747$ & $0.133$ & \cellcolor{white}  \\ 
            \rowcolor{gray!10}
                  & \mapD{5},\;\mapD{0},\;\mapD{1},\;\mapD{0} 
                  & \mapD{1},\;\mapD{1},\;\mapD{0},\;\mapD{3} 
                  & $\times$ & $-0.117$ & $0.067$ & \cellcolor{white}  \\ 
            \rowcolor{gray!10}
                  & \mapD{3},\;\mapD{0},\;\mapD{0},\;\mapD{3} 
                  & \mapD{2},\;\mapD{0},\;\mapD{3},\;\mapD{0} 
                  & $\times$ & $-1.001$ & $0.200$ & \cellcolor{white}\textit{100.00\%} \\
            \midrule
            \bottomrule
        \end{tabular} 
    \begin{tablenotes}
        \footnotesize
        \item \textit{Notes:} The table presents the set of \textit{minimally incongruent weights} derived from the optimization problem presented in Section \ref{sec:congruent-weight-search}. The components of the sub-treatment vectors are in the following order: (Lessons, Sports, Volunteering, B\&A School). The $\in \mathcal{M}^{-}(d)$ column indicates if the pairs of sub-treatments are incongruent. The $\delta(s_d,s_{d-1})$ column reports the difference in mean outcomes at $s_d$ and $s_{d-1}$. The $w^{\star}(s_d,s_{d-1})$ column reports the minimally incongruent weight. The \textit{``Incongruent Wt.''} column reports the minimal percentage of possible weight on incongruent comparisons at that $D=d$. Any positive percentage indicates where incongruent comparisons are unavoidable. There are 95 total unique comparisons of sub-treatment vectors that can be made with this sample (19 congruent and 76 incongruent). The set of minimally incongruent weights has 38 of these comparisons (17 congruent and 21 incongruent).  
    \end{tablenotes}
    \end{threeparttable}%
    }%
\end{table}

\subsection{Regression and Target Parameter Estimation}
Next, we move to estimating the effects of enrichment activities, comparing the different approaches that we have discussed throughout the paper. Table \ref{tab:parameter_estimates} reports estimates of several scalar parameters summarizing the effects of enrichment activities on a child's noncognitive skills.  All estimates in Table \ref{tab:parameter_estimates} come from plug-in estimators of the target parameters considered in the paper. Each estimate is in terms of the number of standard deviations away from the mean noncognitive skill score (zero).

\FloatBarrier

Panel I contains an estimate of the coefficient on total enrichment activities from the regression in Equation \eqref{eqn:reg}, which represents the best linear approximation between noncognitive skill and total enrichment activity.  This regression represents the most common way to summarize the relationship between enrichment activities and noncognitive skills. The estimated coefficient is $\hat{\alpha}_1=-0.061$ and is not statistically different from zero at standard levels of significance.\footnote{This estimate and others reported in this section are qualitatively similar to the ones in \citet{CCaetanoEtAl2024}, which finds negative effects of enrichment on noncognitive skills. However, we note that their bunching/selection-on-unobservables identification strategy is very different from the approach in this section.} Given our discussion on the presence of incongruency in our application above, it follows that, if one aims to interpret this parameter in marginal terms, then it consists of incongruent comparisons as described in Section \ref{sec:identification-with-unobserved-subtreatments}.  The regression coefficient also inherits the implicit regression weights discussed in Remark \ref{rem:regression-marginal} above.

\begin{table}[t!]
    \centering
    \begin{threeparttable}
    \caption{Overall Aggregate Parameter Estimates}
    \label{tab:parameter_estimates}
    \begin{tabular}{@{}l l c c c@{}}
        \toprule
         & Parameter & Estimate & $S$ Data & Incongruity \\
        \bottomrule
        \midrule

        \textit{I. Regression} & & & & \\ 
              & $\alpha_1$                         
              & -0.061 (0.042) 
              &       
              & $\times$ \\ 
        \midrule

        \textit{II. Marginal} & & & & \\ 
              & $\E[ \; \delta (D) \;|D>0]$    
              & -0.040 (0.035) 
              &       
              & $\times$ \\ 
              & $\E[ \textrm{AMATT}^{+}(D) |D>0]$
              & -0.087 (0.063)
              & $\times$
              &          \\ 
        \midrule

        \textit{III. Non-marginal} & & & & \\ 
              & $\E[ \textrm{AATT}(D) |D>0]$                   
              & -0.167 (0.068)
              &          
              &          \\ 
              & $\E[ \textrm{AATT}(D)/D |D>0]$
              & -0.218 (0.095) 
              &          
              &          \\ 
        \midrule
        \toprule
        Observations 
            & \multicolumn{1}{l}{214} 
            &  
            &    
            &  \\ 
        \bottomrule
    \end{tabular}

    \begin{tablenotes}
        \footnotesize
        \item \textit{Notes:} Parameter estimates of different target parameters on children’s noncognitive skills. Standard errors in parentheses obtained by bootstrap (1000 iterations). The ``$S$ Data'' column indicates if sub-treatment data are required for estimation. The ``Incongruity'' column indicates when incongruent comparisons are present in the parameter. The estimates of $\E[\textrm{AMATT}^{+}(D) |D>0]$ use the scaled product weights from Equation \eqref{eqn:congruent-product-weight}.
    \end{tablenotes}
    \end{threeparttable}
\end{table}

Next, Panel II  reports estimates of two alternative overall marginal effects parameters.  The first is $\E[\delta(D)|D>0]$, where $\delta(d) :=\E[Y|D=d] - \E[Y|D=d-1]$ is the average aggregate marginal effect.  Like the regression coefficient from Panel I, this parameter includes incongruent marginal sub-treatment effects, but it does not inherit the implicit regression weights (i.e., it is a non-parametric summary of the aggregate marginal effects).  Our estimate of $\E[\delta(D)|D>0]$ is -0.040.  The estimated value of this parameter is closer to zero than the regression coefficient, which indicates that the regression weights put relatively more weight on values of the aggregated treatment with larger marginal effects in magnitude, in comparison to weighting by the distribution of $D$.
The second estimate in Panel II is for $\E[\textrm{AMATT}^{+}(D)|D>0]$; our estimate of this parameter is -0.087.   As discussed above, this parameter is a weighted average of all congruent marginal sub-treatment effects. Among the parameters reported in this section, this parameter has the strongest claim to being the best way to summarize the effects of marginal increases in the sub-treatments, though estimating it does require observing a sub-treatment vector satisfying Assumption \ref{ass:sutva2}. 

\begin{figure}[t!] 
\caption{$d$-Specific Aggregate Parameter Estimates}
\vspace{-4mm}
\label{fig:MATT_and_AATT_estimates}
\centering
\begin{subfigure}{.49 \textwidth}
  \centering
  \includegraphics[width=1.05\linewidth]{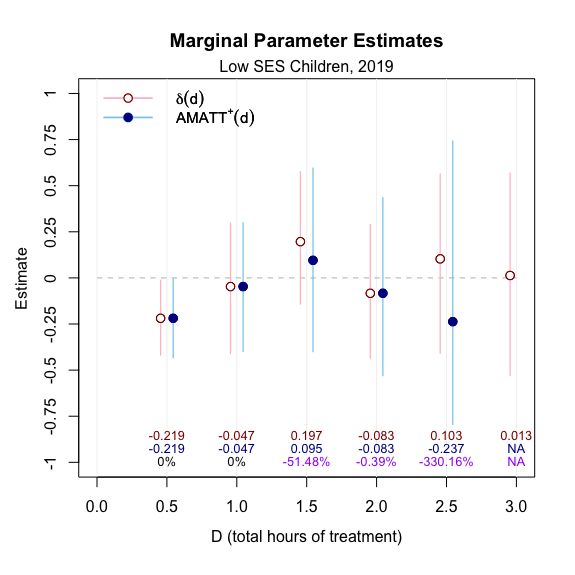}
  \caption{Aggregate Marginal ATT.}
  \label{fig:sub2_estimates}
\end{subfigure}
\begin{subfigure}{.49 \textwidth}
  \centering
  \includegraphics[width=1.05\linewidth]{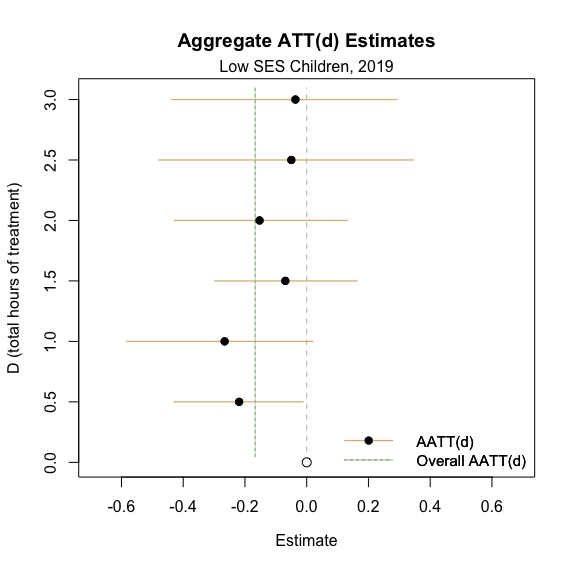}
  \caption{Aggregate ATT.}
  \label{fig:sub1_estimates}
\end{subfigure} \hfill
\begin{justify}
{\small \textit{Notes:} The figure provides estimates of the target parameters discussed in this paper for enrichment activity amounts on noncognitive skills in children of low socioeconomic status in 2019.  Panel (a) displays estimates of $\delta(d)$ and $\textrm{AMATT}^{+}(d)$ with 95\% confidence intervals across all $D=d$. The value of each estimate, and the percent change from $\delta(d)$ to $\textrm{AMATT}^{+}(d)$, are listed at the bottom of Panel (a): $\delta(d)$---top; $\textrm{AMATT}^{+}(d)$---middle; and, percent change---bottom.
At $D=3.0$ hours of total enrichment, no congruent comparisons were possible between $D=2.5$ and $D=3.0$ in the data; hence, the value of \texttt{NA} is reported for $\textrm{AMATT}^{+}(d=3)$. The estimates of $\textrm{AMATT}^{+}(d)$ use the scaled product weights from Equation \eqref{eqn:congruent-product-weight}.  Panel (b) displays estimates of $\textrm{AATT}(d)$ and 95\% confidence intervals across all $D=d$, and an overall $\textrm{AATT}$ which weights the $\textrm{AATT}(d)$'s by the distribution of the aggregated treatment $D$. }
\end{justify}
\end{figure}

Comparing the previous estimates to our estimate of $\E[\textrm{AMATT}^{+}(D)|D>0]$ provides a good way to assess how much incongruency matters in our application.  Relative to our estimate of $\E[\delta(D)|D>0]$, it is more than twice as large in magnitude.  The difference in these estimates is fully explained by whether or not they include incongruent comparisons (since they both weight across the aggregated treatment using the observed distribution of $D$), indicating that incongruency has a large impact on the aggregated estimates. To further decompose this difference, Panel (a) of Figure \ref{fig:MATT_and_AATT_estimates} reports estimates of $\delta(d)$ and $\textrm{AMATT}^{+}(d)$ across different values of $d$. Recall that, according to Figure \ref{fig:AvgSTxD_plot}, $\delta(d)$ must include those incongruent comparisons at the disaggregated level for values of $d \in \{ 1.5, 2.0, 2.5, 3.0 \}$, which could potentially lead to notable differences in the estimates of these parameters at those values of the aggregated treatment. This difference is most striking at $D=2.5$, where our estimate of $\delta(d=2.5)$ is 0.103, while our estimate of $\textrm{AMATT}^{+}(d=2.5)$ is -0.237.  In addition, because there are no available congruent marginal sub-treatment effects at $D=3$, $\textrm{AMATT}^{+}(d=3)$ cannot be estimated and does not contribute to the estimate of $\E[\textrm{AMATT}^{+}(D)|D>0]$. There are other smaller differences at other values of $d$.  Together, these differences explain the disparity between the two overall estimates in the middle panel of Table \ref{tab:parameter_estimates}. Interestingly, relative to $\E[\delta(D)|D>0]$, the estimate of $\alpha_1$ is closer to our estimate of $\E[\textrm{AMATT}^{+}(D)|D>0]$.  There are two sources of differences between these two estimates: $\alpha_1$ includes incongruent marginal sub-treatment effects and inherits a weighting scheme from the regression.  Here, these two issues work in opposite directions---incongruency pushes the estimates towards zero, while the regression weights push it farther from zero---resulting in the estimate of $\E[\textrm{AMATT}^{+}(D)\mid D > 0]$ being closer to that of $\alpha_1$ than to that of $\E[\delta(D)\mid D > 0]$.

To summarize, there are two main explanations for the variation in the estimated values of the various marginal effect parameters considered here.  First, for parameters that are averages of the aggregate marginal effects, it is not possible to avoid incongruent comparisons.  This is a byproduct of large changes in the composition of sub-treatments across different values of the aggregated treatment.  Second, there appears to be notable heterogeneity in sub-treatment-specific marginal effects.  Estimates of congruent sub-treatment marginal range from  -0.762 to 0.752 (these are reported in Table \ref{tab:algorithm_results}), implying that variations in weights across different summary marginal effect parameters are important.\footnote{To be clear, our estimates of sub-treatment-specific marginal effects are likely to be quite imprecise.  Nevertheless, mechanically, these estimates contribute non-trivially to the estimates of the summary marginal effect parameters that we have emphasized here.}

Finally, in Panel III of Table \ref{tab:parameter_estimates}, we report estimates of the non-marginal parameters that we discussed in Section \ref{sec:DATE}.  Unlike $\E[\textrm{AMATT}^{+}(D)|D>0]$, neither of these parameters requires the sub-treatments to be observed; moreover, they do not suffer from non-unique weights, nor do they include incongruent comparisons.  The first of these, $\E[\textrm{AATT}(D)|D>0]$, is the overall aggregate average treatment effect on the treated for those who participated in the treatment. Our estimate is -0.167.  This is notably larger in magnitude than any of the preceding estimates, though it is hard to directly compare them, as the underlying building blocks are different.  Here, the underlying building blocks are $\textrm{AATT}(d)$---among sub-treatments that aggregate to $d$, the average of the effect of each sub-treatment relative to being untreated.  Estimates of $\textrm{AATT}(d)$ are displayed in Panel (b) of Figure \ref{fig:MATT_and_AATT_estimates} for different values of $d$. Averaged across sub-treatments, we find our estimates of the effect of participating in different types of enrichment activities are largest in magnitude among those who participate in few enrichment activities (i.e., for $D=0.5$ and $1.0$). 
The last parameter considered in the table is $\E[\frac{\textrm{AATT}(D)}{D} \big| D>0]$; our estimate of this parameter is -0.218.  This parameter summarizes the average treatment effect on the treated per unit of treatment received. Recall that, as in Section \ref{sec:datt-regression}, $\alpha_1$ can be interpreted as a weighted average of the same scaled baseline-to-$d$ building blocks.  But our estimate of $\alpha_1$ is very different from our estimate of $\E[\frac{\textrm{AATT}(D)}{D} \big| D>0]$---our estimate of $\E[\frac{\textrm{AATT}(D)}{D} \big| D>0]$ is over three times as large in magnitude.  This difference is fully explained by differences in the weighting schemes for each case.  In our view, if a researcher did not insist on reporting a marginal effect parameter, the non-marginal effect parameters considered in Panel III would provide a good way to summarize the effects of enrichment activities on noncognitive skills. This would be especially true for an application where the types of enrichment activities that the child participated in were not observed.  However, using a regression to summarize these types of parameters seems to perform poorly, at least relative to calculating each $\textrm{AATT}(d)$ individually and then manually averaging them. 


\FloatBarrier

\section{Conclusion} \label{subsec:Conclusion}

Aggregated treatment variables are widely used in empirical work focused on causal inference. They often streamline the narrative of the paper, simplify empirical strategies, accommodate data limitations, and improve precision in estimation. But as we have shown in this paper, these conveniences can come at a cost: the marginal effects of an aggregated treatment can be difficult to interpret. The root of the problem lies in the fact that comparisons across different values of an aggregated treatment can include incongruent sub-treatment comparisons. We have shown that these comparisons lead to negative weight issues, so that even if all marginal sub-treatment effects were positive, one could still find a negative aggregate marginal effect. 

Our paper also provides a set of solutions to this underappreciated problem in the causal inference literature.  First, we proposed non-marginal estimands that avoid incongruent comparisons altogether and which can be implemented even when sub-treatment data are unavailable.  This provides a general path forward even when SUTVA fails for the treatment variable the researcher wishes to use.  Second, when sub-treatment data are available, researchers can also continue focusing on marginal estimands by applying desirable weighting schemes that restrict attention solely to congruent comparisons. With this in mind, we note that reporting non-marginal estimands has an underappreciated advantage: it offers greater robustness to violations of SUTVA, which may lead cautious researchers to consider non-marginal effects as a conservative complement to marginal estimands.

In order to emphasize issues related to aggregation, many of our results were in the context of no selection (Assumption \ref{ass:unconfoundedness}), an ideal setting for causal inference. 
The insights developed in our paper, however,  extend beyond this setting. They are also relevant for other identification strategies---including selection on observables, instrumental variables, regression discontinuity designs, difference-in-differences, and bunching. Relative to our setting---where all units are exchangeable and thus comparable (in expectation)---these strategies impose restrictions on the set of permissible comparisons used to estimate causal effects. As a result, they may either attenuate or exacerbate the extent to which incongruent sub-treatment comparisons enter the estimand, relative to the baseline case studied in this paper. A fuller understanding of how aggregation interacts with these identification strategies remains an important direction for future work.



\printbibliography



\appendix


\section{Defining Sub-treatments} \label{sec:define-subtreatment}

In the main text, we invoked Assumption \ref{ass:sutva2}, which rules out hidden versions of the sub-treatments. We maintained this assumption for the sub-treatment vector $S$ specified by the researcher.
Still, in any given application, it may not be obvious how to define this sub-treatment vector, and this section provides some clarifying discussion on this point.  

To start with, we introduce some extra notation.  Given a researcher-specified definition of sub-treatments, let $\mathcal{V}$ denote the set of versions of the treatment.  In our running example from the main text,
\begin{align*}
    \mathcal{V} = \{\text{homework}, \text{music}, \text{sports} \}.
\end{align*}
Note that, in the main text, we mainly worked in terms of sub-treatment vectors that were elements of $\mathcal{S}$.  Relative to elements of $\mathcal{V}$, an element of $\mathcal{S}$ also contains information about the \textit{amount} of each sub-treatment.  

Now, consider a more disaggregated notion of the versions of the treatment, $\tilde{\mathcal{V}}$.  What we mean by ``more disaggregated'' is that there exists a \textit{function} $\tilde{\mathcal{V}} \mapsto \mathcal{V}$ such that the elements of $\tilde{\mathcal{V}}$ are all either the same as some element in $\mathcal{V}$ or represent versions of an element of $\mathcal{V}$.  For example,
\begin{align*}
    \tilde{\mathcal{V}} = \{ \text{homework with mom}, \text{homework with dad}, \text{music}, \text{sports} \},
\end{align*}
where the elements in $\tilde{\mathcal{V}}$ \textit{homework with mom} and \textit{homework with dad} both map to \textit{homework} in $\mathcal{V}$.  
Define $\tilde{\mathcal{S}}$ to be the set of sub-treatment vectors corresponding to the versions of treatment in $\tilde{\mathcal{V}}$.  It immediately follows that, if Assumption \ref{ass:sutva2} holds with respect to the sub-treatment vectors in $\mathcal{S}$, then a version of Assumption \ref{ass:sutva2} holds with respect to the sub-treatment vectors in $\tilde{\mathcal{S}}$; i.e., if no hidden versions of the sub-treatments (the second part of SUTVA) holds with respect to one notion of the sub-treatments, then it necessarily holds with respect to a more disaggregated notion of the sub-treatments.  This is an argument for favoring the more disaggregated notion of the sub-treatments, as it invokes a weaker version of Assumption \ref{ass:sutva2}.

Continuing with the same example, notice that it is possible to create even further sub-versions of the treatment.  For example, a child doing homework with their mother can be disaggregated into doing homework with their mother in the morning, doing homework with their mother at night, etc. The discussion above implies that a version of Assumption \ref{ass:sutva2} would hold with respect to this definition of sub-treatments as well.  This line of thinking can go on and on, becoming more specific and granular at each step, up to the point of all treatments being fully distinct from each other; e.g., a child doing homework with their mother, Jane Doe, at 12:39pm on June 20, 2025.\footnote{Our discussion echoes the Epidemiology literature on this point. For instance, \citet{Hernan2016} states that 
``The process of precisely specifying (versions of treatment) never ends. Version \scalebox{0.87}{$\mathrm{\#}$}1,000,000 would be very long but still imprecise. It is impossible to provide an absolutely precise definition of a version of treatment.''} 

The discussion above highlights how strong Assumption \ref{ass:sutva2} with respect to low-dimensional sub-treatments may be in many applications---let alone the version of this assumption for the aggregated treatment, which motivated the analysis of this paper in the first place. It does not seem farfetched that, in our running example, Assumption \ref{ass:sutva2} is only plausible for an extremely disaggregated notion of the sub-treatments.  For example, the potential outcome of a child doing an hour of homework with their mother could also depend on the time of day or the day of the week (or other things).  We conjecture that similar arguments would apply for many treatments in economics.

Observing this degree of granularity is very rare, and, given the implausibility of this assumption for most versions of treatment, it is possible that the researcher observes treatment at some level of detail, yet still judges those versions to be too coarse for Assumption \ref{ass:sutva2} to plausibly hold. The researcher in this case may want to pivot to report non-marginal estimands, which have robustness properties with respect to Assumption \ref{ass:sutva2}---it requires this assumption to be valid only for a \textit{conceptualizable} finite vector of sub-treatments that does not need to be observed.

However, what happens if the researcher wants to identify a marginal estimand?  Given the discussion above, is there any reason that this researcher should not use the most disaggregated notion of the sub-treatment vector available in the data? While it is true that working with very disaggregated notions of the sub-treatment vector does make the corresponding version of Assumption \ref{ass:sutva2} more plausible, there are important tradeoffs: when Assumption \ref{ass:sutva2} holds with respect to a more aggregated notion of the sub-treatments, using that more aggregated notion of the sub-treatments suffers less from issues related to incongruity that we emphasized in the main text.  This suggests that, all else equal, it is desirable to use the most aggregated notion of the sub-treatment vector that satisfies Assumption \ref{ass:sutva2}. In that sense, the test discussed in Section \ref{sec: testing SUTVA} can be implemented to detect whether there is enough evidence against Assumption \ref{ass:sutva2} for a given level of definition of $S$, provided the researcher also observes a more disaggregated notion of sub-treatment vector $\tilde{S}$. A failure to reject this test may justify using $S$ as the notion of sub-treatment for the analysis. In the extreme case, if the researcher fails to reject Assumption \ref{ass:sutva2} with respect to the aggregated treatment $D$, this may justify not conducting the analyses suggested in this paper. In that case, none of the issues related to incongruent comparisons that we emphasized in the main text would apply, and the researcher can work directly with the aggregated treatment rather than the sub-treatments (even if they are observed).
\section{Additional Results} \label{app:additional-results}

This section contains additional details for several results that were briefly discussed in the main text.

\subsection{Minimal Assumptions for Causal Interpretation} \label{sec:minimal-assumptions}
For some settings, weaker assumptions than Assumption \ref{ass:unconfoundedness} can be applied to achieve identification. Here, we introduce weaker assumptions that we may use to give $\E[Y|D=d]-\E[Y|D=d-1]$ and $\E[Y|D=d]-\E[Y|D=0]$ a causal interpretation. 
\begin{assumption}[No Local Selection] \label{ass:local-unconfoundedness} For any $d \in \D_{>0}$, and for any $s_d \in \ST_d$ and $s_{d-1} \in \ST_{d-1}$, $\Big(Y(s_d), Y(s_{d-1})\Big) \independent S|D \in \{d-1,d\}$.
\end{assumption}

Assumption \ref{ass:local-unconfoundedness} is the local version of Assumption \ref{ass:unconfoundedness}, where the distribution of potential outcomes is the same across sub-treatment groups among those sub-treatment groups that have aggregated treatment equal to $d$ or $d-1$ only.  Because this assumption is local, it is weaker than Assumption \ref{ass:unconfoundedness}, yet it still implies that $\E[Y(s)|S=s'] = \E[Y|S=s]$, but only among sub-treatments $s, s' \in \mathcal{M}(d)$, rather than any sub-treatments.

Next, we consider an assumption that is weaker than Assumption \ref{ass:unconfoundedness} that may still be used to identify the non-marginal causal effect parameters $\textrm{ATT}(s_d)$. 
\begin{assumption}[No Selection on Untreated Potential Outcomes] 
    \label{ass:untreated-potential-outcomes} $Y(0) \independent S$.
\end{assumption}
Assumption \ref{ass:untreated-potential-outcomes} says that untreated potential outcomes are independent of sub-treatments.  It is implied by Assumption \ref{ass:unconfoundedness} but not by Assumption \ref{ass:local-unconfoundedness}.  It is helpful to compare Assumption \ref{ass:untreated-potential-outcomes} to Assumption \ref{ass:local-unconfoundedness}.  One dimension in which Assumption \ref{ass:untreated-potential-outcomes} is weaker is that it only involves untreated potential outcomes rather than potential outcomes of different sub-treatments.  This allows for certain forms of selection into a particular sub-treatment that are ruled out by Assumption \ref{ass:local-unconfoundedness}.  On the other hand, Assumption \ref{ass:untreated-potential-outcomes} requires some degree of similarity (i.e., in terms of untreated potential outcomes) across units that experience very different sub-treatments, which is not required for the local version of unconfoundedness in Assumption \ref{ass:local-unconfoundedness}; see \citet{Lewis1973} for an argument that the nearest, or most similar, world/sub-treatment to the one observed should be primarily considered as the counterfactual.  



\subsection{Interpreting Regressions with Marginal Building Blocks} \label{sec:regressions-with-marginal-building-blocks}
Since regressions are frequently used in estimation in applications with aggregated treatments, this section provides results on interpreting regressions that include the aggregated treatment as a regressor, formalizing some of the claims made in the main text.\footnote{As discussed in the main text, all of our identification results above can hold conditional on observed covariates, which is likely to be an important extension for many applications.  There would be considerable differences for interpreting regressions that also include covariates as they would introduce extra issues even if there were no sub-treatments (see, for example, \citet{angrist-1998}, \citet{aronow-samii-2016}, \citet{sloczynski-2022}, and \citet{hahn-2023}).  We leave this extension for future work.} We consider interpreting $\alpha_1$ from the regression of the outcome on the aggregated treatment variable in Equation \eqref{eqn:reg}. The following result provides a decomposition of $\alpha_1$ in terms of aggregate marginal effects.

\begin{proposition} \label{prop:regression-with-marginal-building-block}
    Whether sub-treatments are observed or unobserved, $\alpha_1$ from the regression in Equation \eqref{eqn:reg} can be decomposed as
    \begin{align*}
        \alpha_1 &= \sum_{d=1}^{\Bar{N}} \omega^{reg}(d) \cdot \Big( \E[Y|D=d] - \E[Y|D=d-1] \Big) 
    \end{align*}
    where the regression weights are
    \begin{align*}
        \omega^{reg}(d) := \frac{\big(\E[D|D \geq d] - \E[D]\big) \cdot \P(D \geq d)}{\Var(D)}
    \end{align*}
    and satisfy the properties: (i) $\omega^{reg}(d) \geq 0$ for all values of $d \in \D_{>0}$, (ii) $\displaystyle \sum_{d=1}^{\bar{N}} \omega^{reg}(d) = 1$, and (iii) $\omega^{reg}(d)$ is decreasing in distance from $\E[D]$. 
\end{proposition}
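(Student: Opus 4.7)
The plan is to start from the OLS estimand $\alpha_1 = \mathrm{Cov}(Y,D)/\Var(D)$ and reorganise the numerator so that first differences of $d\mapsto\E[Y|D=d]$ appear explicitly. The main tool is Abel summation, the discrete analogue of the integration-by-parts step in \citet{yitzhaki-1996}. By the law of iterated expectations, $\mathrm{Cov}(Y,D)=\sum_{d=0}^{\bar{N}}\E[Y|D=d](d-\E[D])\P(D=d)$. I then telescope $\E[Y|D=d]=\E[Y|D=0]+\sum_{k=1}^{d}\{\E[Y|D=k]-\E[Y|D=k-1]\}$ and substitute. The $\E[Y|D=0]$ piece disappears because $\sum_{d}(d-\E[D])\P(D=d)=0$. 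Interchanging the order of summation so that $k$ is the outer index (with $d\ge k$) gives
\begin{align*}
\mathrm{Cov}(Y,D) = \sum_{k=1}^{\bar{N}}\big(\E[Y|D=k]-\E[Y|D=k-1]\big)\sum_{d\ge k}(d-\E[D])\P(D=d),
\end{align*}
and the inner sum equals $\P(D\ge k)\bigl(\E[D|D\ge k]-\E[D]\bigr)$. Dividing through by $\Var(D)$ yields the stated decomposition with weights $\omega^{reg}(d)$.

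Next I verify (i)--(iii). For (i), since $D$ is non-negative and integer-valued, $\E[D|D\ge d]\ge d$ while $\E[D|D<d]<d$ for $d\in\D_{>0}$; hence $\E[D|D\ge d]\ge\E[D]$, and the prefactor $\P(D\ge d)$ is non-negative, so $\omega^{reg}(d)\ge 0$. For (ii), using the identity $\sum_{d=1}^{\bar{N}}\mathbf{1}\{D\ge d\}=D$, I compute
\begin{align*}
\sum_{d=1}^{\bar{N}}\P(D\ge d)\bigl(\E[D|D\ge d]-\E[D]\bigr) = \sum_{d=1}^{\bar{N}}\E\bigl[(D-\E[D])\mathbf{1}\{D\ge d\}\bigr] = \E\bigl[(D-\E[D])D\bigr] = \Var(D),
\end{align*}
so the weights sum to one. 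For (iii), a direct subtraction yields the telescoping identity
\begin{align*}
\omega^{reg}(d) - \omega^{reg}(d+1) = \frac{(d-\E[D])\P(D=d)}{\Var(D)},
\end{align*}
which is non-positive for $d\le\E[D]$ and non-negative for $d\ge\E[D]$. Therefore $\omega^{reg}$ is non-decreasing on $\{d:d\le\E[D]\}$ and non-increasing on $\{d:d\ge\E[D]\}$, which is the sense in which it is ``decreasing in distance from $\E[D]$.''

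The argument is largely mechanical. The only spot that requires care is the swap of summation order in the Abel step and the boundary bookkeeping at $d=0$ and $d=\bar{N}$---in particular, confirming that the $d=0$ contribution to the original covariance is absorbed by the constant $\E[Y|D=0]$ term in the telescoping and that the resulting decomposition is correctly indexed from $d=1$ to $\bar{N}$. After that, properties (i)--(iii) follow from short, self-contained calculations, with (iii) being the most informative because it pins down the single-peaked shape of the weights via the explicit first-difference identity above.
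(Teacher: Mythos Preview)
Your proof is correct and takes essentially the same approach as the paper, which explicitly describes its argument as ``the discrete analog of a well-known result in \citet{yitzhaki-1996}'' (i.e., replacing integration by parts with Abel summation). Your verification of properties (i)--(iii), including the first-difference identity $\omega^{reg}(d)-\omega^{reg}(d+1)=(d-\E[D])\P(D=d)/\Var(D)$ that pins down the single-peaked shape, is clean and matches the intended reasoning.
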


The proof is provided in the Supplementary Appendix (\cite{CCCDSupp2025}).  Proposition \ref{prop:regression-with-marginal-building-block} amounts to the discrete analog of a well-known result in \citet{yitzhaki-1996} on interpreting regressions like the one in Equation \eqref{eqn:reg} with a continuous regressor (see also \citet{callaway-goodman-santanna-2025} for some related results on interpreting regressions with an ordered, discrete regressor in the context of difference-in-differences). Notice that all of the terms in the expression for $\alpha_1$ involve the aggregated treatment variable, not the sub-treatments.  In particular, $\alpha_1$ is equal to a weighted average of $\E[Y|D=d]-\E[Y|D=d-1]$.  The weights are all positive, which is a good property and one that does not always hold for interpreting regressions in various other contexts (e.g., regressions that include covariates (\citet{hahn-2023}), regressions with multiple treatments (\citet{goldsmith-hull-kolesar-2024}), and two-way fixed effects regressions (\citet{chaisemartin-dhaultfoeuille-2020})).  The other notable property of the regression weights is that they are largest for values of the aggregated treatment that are close to the mean of $D$.  This indicates that a byproduct of summarizing the effects of the treatment using a regression is that certain effects of marginal increases in the sub-treatments are systematically weighed more heavily than others, depending on their corresponding amount of the aggregated treatment.  This is likely to be an undesirable property of the regression weights in most applications.

The last thing to mention in this section is that all of our previous results on interpreting $\E[Y|D=d] - \E[Y|D=d-1]$ in the presence of sub-treatments continue to apply to all of the terms that show up in the expression for $\alpha_1$.  For example, combining the expression for $\alpha_1$ with the result in Theorem \ref{thm:causal-aggregate-marginal-comparison} implies that comparisons across incongruent sub-treatment vectors can contribute to $\alpha_1$.

Taken together, the discussion here implies that running a regression using the aggregated treatment as a regressor (i) does not alleviate issues related to incongruent sub-treatment vectors, and (ii) also introduces a somewhat strange implicit weighting scheme to combine information across different values of the aggregated treatment.  Issue (i) corresponds exactly to the issues that we discussed in Section \ref{sec:matt}.  Issue (ii) can be directly addressed, simply by averaging $\E[Y|D=d] - \E[Y|D=d-1]$ over the distribution of $D$, rather than inheriting the weighting scheme from the regression.\footnote{Recall from Section \ref{subsec:auxiliary-assumptions-to-rule-out-incongruency} that $\E[Y|D=d]-\E[Y|D=d-1]$ recovered the average marginal causal effect of the sub-treatments under Assumption \ref{ass:homogeniety}, which said that the marginal effect of all sub-treatments at a given level of the treatment was constant, $\beta_d$.  In order for the regression in Equation \eqref{eqn:reg} to recover the average marginal causal effect of the sub-treatments requires strengthening this assumption so that $\beta_d = \beta$, which is constant across $d$.  This argument holds by combining the arguments in Section \ref{subsec:auxiliary-assumptions-to-rule-out-incongruency} with the result in Proposition \ref{prop:regression-with-marginal-building-block}.}


\subsection{Identification of \texorpdfstring{$\mathbf{\widetilde{\textrm{AMATT}^{\boldsymbol{+}}}(d)}$}{AMATTilde+(d)}} \label{app:id-matt-plus}  
Finally, in this section, we discuss identification of $\widetilde{\textrm{AMATT}^+}(d)$, which was defined in Equation \eqref{eqn:amatt-plus}.  Relative to previous identification results, in order to recover $\widetilde{\textrm{AMATT}^+}(d)$, we need to additionally identify $\P\big(S(d)=s_d, S(d-1)=s_{d-1} | D\in \{d,d-1\}\big)$ for $(s_d,s_{d-1}) \in \mathcal{M}^+(d)$.

\begin{assumption}[Latent Sub-treatment Independence] \label{ass:latent-subtreatment-independence}
For all $d \in \D_{>0}$,
\begin{align*}
    S(d) \independent S(d-1) \big| D \in \{d-1,d\}.
\end{align*}
\end{assumption}

Assumption \ref{ass:latent-subtreatment-independence} says that the sub-treatment actually experienced at, say, the lower level of the aggregated treatment is not informative about which sub-treatment would have been experienced at the higher level of the aggregated treatment. It is difficult to think of examples where this assumption is plausible in the types of applications that we have emphasized in the paper, particularly where the aggregated treatment is a summary of the sub-treatments rather than itself causal---this is a main reason why we have relegated this assumption (and identifying $\widetilde{\textrm{AMATT}^+}(d)$) to the appendix.  Perhaps the leading case where this assumption would be plausible is one in which the sub-treatments are assigned in a sequential procedure where the aggregated treatment is randomly assigned in the first step and then the sub-treatments are randomly assigned conditional on the assignment of the aggregated treatment; though, at least to some extent, this assignment mechanism goes against the notion of aggregated treatment that we have considered, as it is more aligned to the mediation analysis case. Also, notice that a by-product of Assumption \ref{ass:latent-subtreatment-independence}, is that, by construction, there would be incongruent latent types that occur with positive probability, implying that this assumption is incompatible with Assumption \ref{ass:no-incongruity} from the main text.

\begin{proposition} \label{prop:matt-identification} Under Assumptions \ref{ass:sutva2}, \ref{ass:unconfoundedness}, \ref{ass:no-sorting-on-D} and \ref{ass:latent-subtreatment-independence}, if sub-treatments are observed, $\widetilde{\mathrm{AMATT}^+}(d)$ is identified and given by
{ \small 
\begin{align*}
    \widetilde{\mathrm{AMATT}^+}(d) = \sum_{(s_d,s_{d\tightminus 1}) \in \mathcal{M}^+(d)} \left(  \frac{\P\big(S\tightequals s_d\big|D\tightequals d\big) \times \P(S\tightequals s_{d-1} \big| D\tightequals d\tightminus 1\big)}{\displaystyle \sum_{(s_d',s_{d\tightminus 1}') \in \mathcal{M}^+(d)}  \P\big(S\tightequals s_d' \big|D\tightequals d\big) \times \P(S\tightequals s_{d\tightminus 1}' \big| D\tightequals d\tightminus 1\big) } \right) \cdot \Big( \E[Y|S\tightequals s_d] - \E[Y|S\tightequals s_{d\tightminus 1}] \Big).
\end{align*}
}
\end{proposition}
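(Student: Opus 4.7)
The plan is to show two things: first, that the MATT$^{+}$ terms in the definition of $\widetilde{\mathrm{AMATT}^+}(d)$ equal the observable mean differences $\E[Y|S=s_d]-\E[Y|S=s_{d-1}]$, and second, that the latent-type weights $\tilde w^+(s_d,s_{d-1})$ collapse to the normalized product of two conditional sub-treatment frequencies. The first half is immediate from earlier results: under Assumptions \ref{ass:sutva2} and \ref{ass:unconfoundedness}, Proposition \ref{prop:comparison-of-subtreatments} (cited after Assumption \ref{ass:unconfoundedness}) already gives $\mathrm{MATT}^+(s_d,s_{d-1}) = \E[Y|S=s_d] - \E[Y|S=s_{d-1}]$ for any $(s_d,s_{d-1})\in\mathcal{M}^+(d)$, and the sub-treatments being observed ensures these conditional means are identified from the data.

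The substantive step is rewriting the joint latent probability in the numerator of $\tilde w^+$. The plan is to apply Assumption \ref{ass:latent-subtreatment-independence} first, which factors
\[
\P\bigl(S(d)=s_d,\,S(d-1)=s_{d-1}\,\bigm|\,D\in\{d-1,d\}\bigr) = \P\bigl(S(d)=s_d\,\bigm|\,D\in\{d-1,d\}\bigr)\cdot\P\bigl(S(d-1)=s_{d-1}\,\bigm|\,D\in\{d-1,d\}\bigr).
\]
Next, I would use Assumption \ref{ass:no-sorting-on-D} ($S(\mathbf{d}) \independent D$, which implies $S(d)\independent D$ marginally for each coordinate) to drop the conditioning on $D\in\{d-1,d\}$, so that $\P(S(d)=s_d\mid D\in\{d-1,d\}) = \P(S(d)=s_d) = \P(S(d)=s_d\mid D=d)$. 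The final link is consistency: when $D=d$, the realized sub-treatment is $S=S(d)$, so $\P(S(d)=s_d\mid D=d) = \P(S=s_d\mid D=d)$. The analogous chain applied to $S(d-1)$ gives $\P(S(d-1)=s_{d-1}\mid D\in\{d-1,d\}) = \P(S=s_{d-1}\mid D=d-1)$.

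Substituting these two identities into both the numerator and (after re-indexing) the denominator of $\tilde w^+(s_d,s_{d-1})$ converts the latent-type weight into exactly the normalized product weight stated in the proposition, and combining with the first half finishes the identification formula. The main obstacle, such as it is, will be the second step: I have to be careful that $S(d)\independent D$ and $S(d-1)\independent D$ individually follow from the joint independence $S(\mathbf d)\independent D$ in Assumption \ref{ass:no-sorting-on-D}, and then invoke consistency to pass from the latent potential-type distribution to the observed conditional sub-treatment distribution without implicitly relying on Assumption \ref{ass:no-incongruity}, which the proposition does not (and cannot, given Assumption \ref{ass:latent-subtreatment-independence}) assume. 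Everything else is algebraic rewriting.
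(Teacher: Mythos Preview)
Your proposal is correct and takes essentially the same approach the paper uses: factor the joint latent probability via Assumption \ref{ass:latent-subtreatment-independence}, then apply the no-sorting and consistency chain $\P(S(d)=s_d\mid D\in\{d,d-1\})=\P(S(d)=s_d\mid D=d)=\P(S=s_d\mid D=d)$ exactly as in the visible proof of Proposition \ref{prop:no-incongruent-behavior}, and finally invoke Proposition \ref{prop:comparison-of-subtreatments} for the $\mathrm{MATT}^+$ terms. Your care about deriving the marginal independences $S(d)\independent D$ from the joint $S(\mathbf d)\independent D$ and avoiding Assumption \ref{ass:no-incongruity} is well placed and matches the paper's own remarks.
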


The proof is provided in \ref{app:proofs-marginal-regression} of the Supplementary Appendix (\cite{CCCDSupp2025}).  Proposition \ref{prop:matt-identification} shows that $\widetilde{\textrm{AMATT}^+}(d)$ is identified when the sub-treatments are observed and under Assumption \ref{ass:latent-subtreatment-independence}.  It is equal to a weighted average of the comparison of mean outcomes among congruent sub-treatment vectors corresponding to the aggregate amount of the aggregated treatment being equal to $d$ or $d-1$.  The weights are the same as the scaled product weights suggested in the discussion of $\textrm{AMATT}^{+}(d)$ in Equation \eqref{eqn:congruent-product-weight} above.

\section{Proofs} \label{app:proofs}

This section provides proofs for all the results in the paper.


\subsection{Proofs of Results from Section \ref{subsec:causal-framework}} \label{app:proofs-causal-framework}

The following proposition relates the observed difference of mean outcomes across sub-treatment vectors with adjacent levels of the aggregated treatment to causal quantities. 
\begin{proposition} \label{prop:comparison-of-subtreatments} 
Under Assumption \ref{ass:sutva2}, the difference in mean outcomes between sub-treatment vectors at adjacent levels of the aggregated treatment can be decomposed as
\begin{align*}
    \E[Y|S=s_d] - \E[Y|S=s_{d-1}] &= \mathrm{MATT}(s_d,s_{d-1}) + B(s_d, s_{d-1})
\end{align*}
where
\begin{align*}
    B(s_d,s_{d-1}) &:= \E[Y(s_{d-1})|S=s_d ] - \E[Y(s_{d-1})|S=s_{d-1}] 
\end{align*}
which is a selection bias term.  If, in addition, Assumption \ref{ass:unconfoundedness} (or Assumption \ref{ass:local-unconfoundedness}) holds, then $B(s_d,s_{d-1}) = 0$, so that 
\begin{align*}
    \mathrm{MATT}(s_d,s_{d-1}) = \E[Y|S = s_d] - \E[Y|S=s_{d-1}]
\end{align*}
\end{proposition}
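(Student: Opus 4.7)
The plan is to use the standard add-and-subtract decomposition of observed mean differences into a causal contrast and a selection bias term, which is essentially the single-version analog of the classical Rubin-style decomposition, adapted to sub-treatment vectors.

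First, I would invoke Assumption \ref{ass:sutva2} to swap the observed outcome for the potential outcome at the realized sub-treatment: since $S=s_d$ implies $Y = Y(s_d)$, we have $\E[Y \mid S = s_d] = \E[Y(s_d) \mid S = s_d]$, and analogously $\E[Y \mid S = s_{d-1}] = \E[Y(s_{d-1}) \mid S = s_{d-1}]$. Second, I would add and subtract the counterfactual mean $\E[Y(s_{d-1}) \mid S = s_d]$ on the right-hand side, giving
\begin{align*}
    \E[Y|S=s_d] - \E[Y|S=s_{d-1}] &= \Big(\E[Y(s_d)|S=s_d] - \E[Y(s_{d-1})|S=s_d]\Big) \\
    & \quad + \Big(\E[Y(s_{d-1})|S=s_d] - \E[Y(s_{d-1})|S=s_{d-1}]\Big).
\end{align*}
The first bracket is $\mathrm{MATT}(s_d,s_{d-1})$ by definition, and the second is exactly $B(s_d,s_{d-1})$, which gives the claimed decomposition.

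For the second part, under Assumption \ref{ass:unconfoundedness} the potential outcome $Y(s_{d-1})$ is independent of $S$, so $\E[Y(s_{d-1}) \mid S = s_d] = \E[Y(s_{d-1})] = \E[Y(s_{d-1}) \mid S = s_{d-1}]$, forcing $B(s_d,s_{d-1}) = 0$. For the local version in Assumption \ref{ass:local-unconfoundedness}, I would note that conditioning on $\{S=s_d\}$ is equivalent to conditioning on $\{S=s_d, D \in \{d-1,d\}\}$ (because $s_d \in \ST_d$), and likewise for $s_{d-1}$; the local independence of $Y(s_{d-1})$ from $S$ within the event $\{D \in \{d-1,d\}\}$ then yields the same conclusion. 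Plugging $B = 0$ into the decomposition delivers $\mathrm{MATT}(s_d,s_{d-1}) = \E[Y \mid S = s_d] - \E[Y \mid S = s_{d-1}]$.

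There is no real obstacle here; the argument is algebraic. The only bit to be careful with is the second part, where one must verify that conditioning on a specific $s_d \in \ST_d$ is automatically a conditioning event contained in $\{D \in \{d-1,d\}\}$, so that the local independence assumption can actually be applied to kill the bias term.
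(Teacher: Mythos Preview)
Your proposal is correct and follows essentially the same route as the paper: the first part is the identical add-and-subtract decomposition, and for the second part both you and the paper exploit that $\{S=s_d\}\subseteq\{D\in\{d-1,d\}\}$ to invoke the (local) independence and kill the bias term. The only cosmetic difference is that the paper rewrites $\mathrm{MATT}(s_d,s_{d-1})$ forward into the observed difference, whereas you show $B(s_d,s_{d-1})=0$ directly and substitute back; these are equivalent.
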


The first part of Proposition \ref{prop:comparison-of-subtreatments} shows that the comparison of means between sub-treatment vectors with adjacent levels of the aggregated treatment can be decomposed into a marginal average treatment effect on the treated term and a selection bias term. The second part shows that, under Assumption \ref{ass:unconfoundedness}, the same comparison of means is equal to $\textrm{MATT}(s_d,s_{d-1})$. 
The result in Proposition \ref{prop:comparison-of-subtreatments} holds irrespective of whether $s_d$ and $s_{d-1}$ are congruent sub-treatments. 

\begin{proof}[\textbf{Proof of Proposition \ref{prop:comparison-of-subtreatments}}]
    To show the first part, notice that
    \begin{align*}
        \E[Y|S=s_d] - \E[Y|S=s_{d-1}] &= \E[Y(s_d) - Y(s_{d-1}) | S=s_d ] \\ 
        & \hspace{25pt} + \Big( \E[Y(s_{d-1})|S=s_d ] - \E[Y(s_{d-1})|S=s_{d-1}] \Big) \\
        &= \mathrm{MATT}(s_d, s_{d-1}) + B(s_d, s_{d-1})
    \end{align*}
    where the first equality holds by writing observed outcomes in terms of their corresponding potential outcomes, and then by adding and subtracting $\E[Y(s_{d-1}) | S=s_d ]$; and the second equality holds by the definitions of $\textrm{MATT}(s_d,s_{d-1})$ and $B(s_d, s_{d-1})$.

    For the second part, notice that
    \begin{align*}
        \mathrm{MATT}(s_d, s_{d-1}) &= \E[Y(s_d) | S=s_d] - \E[Y(s_{d-1}) | S=s_d] \\
        &= \E[Y(s_d) | S=s_d, D=d~\textrm{or}~D=d-1] \\
        &\hspace{50pt} - \E[Y(s_{d-1}) | S=s_d, D=d~\textrm{or}~D=d-1] \\
        &= \E[Y(s_d) | S=s_d, D=d~\textrm{or}~D=d-1] \\
        &\hspace{50pt} - \E[Y(s_{d-1}) | S=s_{d-1}, D=d~\textrm{or}~D=d-1] \\
        &= \E[Y(s_d) | S=s_d] - \E[Y(s_{d-1}) | S=s_{d-1}] \\
        &= \E[Y|S=s_d] - \E[Y|S=s_{d-1}]
    \end{align*}
    where the first equality hold by the definition of $\textrm{MATT}(s_d,s_{d-1})$; the second equality holds because $D$ is fully determined by $S$ (and $S$ being equal to $s_d$ or $s_{d-1}$ implies that $D$ is either equal to $d$ or $d-1$); the third equality holds by Assumption \ref{ass:unconfoundedness} (or Assumption \ref{ass:local-unconfoundedness}); the fourth equality holds again because $S$ fully determines $D$; and the last equality holds by writing potential outcomes in terms of their observed counterparts.
\end{proof}

\bigskip

\begin{proposition} \label{prop:comparison-of-neighborly-subtreatments} 
Under Assumption \ref{ass:sutva2}, the difference in mean outcomes between sub-treatment vectors at identical levels of the aggregated treatment can be decomposed as
\begin{align*}
    \E[Y|S=s_d] - \E[Y|S=s_{d}'] &= \mathrm{SATT}(s_d,s_{d}') + SB(s_d, s_{d}')
\end{align*}
where
\begin{align*} 
    SB(s_d,s_{d}') &:= \E[Y(s_d')|S=s_d] - \E[Y(s_d')|S=s_d'] 
\end{align*}
which is a selection bias term.  If, in addition, Assumption \ref{ass:unconfoundedness} (or Assumption \ref{ass:local-unconfoundedness}) holds, then $SB(s_d,s_{d}') = 0$, so that $\mathrm{SATT}$ is identified and is given by
\begin{align*}
    \mathrm{SATT}(s_d,s_{d'}) = \E[Y|S = s_d] - \E[Y|S=s_{d}']
\end{align*}
\end{proposition}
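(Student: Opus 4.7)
The plan is to mirror almost verbatim the structure of the proof of Proposition~\ref{prop:comparison-of-subtreatments}, since the statement is the within-level analog of a between-level result. The only real difference is that both $s_d$ and $s_d'$ now lie in the same aggregation set $\mathcal{S}_d$, so the role played by $\{d-1,d\}$ in the earlier proof is played by $\{d\}$ here.

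For the first part (the decomposition), I would start from $\E[Y|S=s_d]-\E[Y|S=s_d']$, invoke Assumption~\ref{ass:sutva2} to replace each observed outcome with the potential outcome corresponding to the realized sub-treatment vector, giving $\E[Y(s_d)|S=s_d]-\E[Y(s_d')|S=s_d']$, and then add and subtract $\E[Y(s_d')|S=s_d]$. The first pair becomes $\E[Y(s_d)-Y(s_d')\mid S=s_d]=\textrm{SATT}(s_d,s_d')$ by definition of SATT, and the leftover pair is exactly $SB(s_d,s_d')$ by its definition. This step is purely algebraic and uses only the consistency side of SUTVA.

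For the second part (identification under no selection), I would show $SB(s_d,s_d')=0$. Under Assumption~\ref{ass:unconfoundedness} this is immediate: $Y(s_d')\independent S$ implies $\E[Y(s_d')|S=s_d]=\E[Y(s_d')|S=s_d']$. Under the local version, the cleaner route mirrors Proposition~\ref{prop:comparison-of-subtreatments}: note that $S=s_d$ and $S=s_d'$ each pin down $D=d$ (since $s_d,s_d'\in\mathcal{S}_d$), so conditioning on $D=d$ is redundant; local independence of $(Y(s_d),Y(s_d'))$ given $S$ within $\{D=d\}$ then lets one swap the conditioning sub-treatment without changing the conditional expectation. Substituting back into the decomposition from part one yields the stated identification result $\textrm{SATT}(s_d,s_d')=\E[Y|S=s_d]-\E[Y|S=s_d']$.

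The only genuine subtlety --- and the one place I would pause --- is that Assumption~\ref{ass:local-unconfoundedness} is stated with adjacent levels $\{d-1,d\}$ and pairs a potential outcome at level $d$ with one at level $d-1$, not two at the same level. For the SATT application we need independence of $(Y(s_d),Y(s_d'))$ from $S$ restricted to $\{D=d\}$. This is not a literal instance of the stated local assumption, but it is in the same spirit and is implied by (a) the global Assumption~\ref{ass:unconfoundedness}, or (b) a mild in-level analog of Assumption~\ref{ass:local-unconfoundedness}. I would therefore state the second part under Assumption~\ref{ass:unconfoundedness} as the main case, and add a one-line parenthetical noting that the within-level local variant suffices. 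Everything else is a direct copy of the earlier proof's bookkeeping.
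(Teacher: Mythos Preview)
Your approach is correct and mirrors the paper's proof of Proposition~\ref{prop:comparison-of-subtreatments} exactly as intended: rewrite observed outcomes as potential outcomes via Assumption~\ref{ass:sutva2}, add and subtract $\E[Y(s_d')\mid S=s_d]$, and then kill the selection-bias term under no selection.

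Your hesitation about Assumption~\ref{ass:local-unconfoundedness} is unnecessary, though. As stated, that assumption quantifies over all $s_d\in\mathcal{S}_d$, so its marginal implication gives $Y(s_d')\independent S\mid D\in\{d-1,d\}$ for any $s_d'\in\mathcal{S}_d$; since $S=s_d$ and $S=s_d'$ both force $D=d\in\{d-1,d\}$, the extra conditioning is redundant and $\E[Y(s_d')\mid S=s_d]=\E[Y(s_d')\mid S=s_d']$ follows directly---no need to invoke a separate ``within-level local variant.''
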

\noindent The proof is provided in \ref{supapp:proofs-identification-problems} of the Supplementary Appendix (\cite{CCCDSupp2025}).


\subsection{Proofs of Results from Section \ref{sec:identification-with-unobserved-subtreatments}} \label{app:proofs-identification-problems}

\begin{proposition} \label{prop:marginal-decomposition} For any weighting function $w(s_d,s_{d-1})$ such that
\begin{itemize}
    \item [(i)] $\sum_{s_d \in \mathcal{S}_d} w(s_d,s_{d-1}) = \P(S=s_{d-1}|D=d-1)$
    \item [(ii)] $\sum_{s_{d-1} \in \mathcal{S}_{d-1}} w(s_d,s_{d-1}) = \P(S=s_d|D=d)$
\end{itemize}
\begin{align*}
    \E[Y|D=d] - \E[Y|D=d-1] &= \sum_{(s_{d-1}, s_{d}) \in \mathcal{M}^+(d) } w(s_d,s_{d-1}) \cdot \Big( \E[Y|S=s_d] - \E[Y|S=s_{d-1}]\Big) \\
    &+\sum_{(s_{d-1}, s_{d}) \in \mathcal{M}^-(d) } w(s_d,s_{d-1}) \cdot \Big( \E[Y|S=s_d] - \E[Y|S=s_{d-1}]\Big)
\end{align*}
\end{proposition}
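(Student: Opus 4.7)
The claim is a purely descriptive identity: it does not use Assumption \ref{ass:sutva2} or \ref{ass:unconfoundedness}, only the law of iterated expectations together with the two marginal constraints on the weights. My plan is therefore to express each of $\E[Y|D=d]$ and $\E[Y|D=d-1]$ as a single sum over sub-treatment vectors consistent with that aggregated level, then inflate each single sum into a double sum over $\mathcal{M}(d) = \mathcal{S}_d \times \mathcal{S}_{d-1}$ by using constraint (ii) on the first and constraint (i) on the second, and finally split the resulting difference of double sums along the partition $\mathcal{M}(d) = \mathcal{M}^+(d) \cup \mathcal{M}^-(d)$.

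Concretely, first I would write, by the law of total probability,
\[
\E[Y|D=d] \;=\; \sum_{s_d \in \mathcal{S}_d} \E[Y|S=s_d]\cdot \P(S=s_d|D=d),
\]
and analogously for $\E[Y|D=d-1]$ in terms of sums over $s_{d-1} \in \mathcal{S}_{d-1}$. This uses only that $D$ is a deterministic function of $S$ (hence $\{S=s_d\} \subseteq \{D=d\}$), which is built into the setup and requires no causal assumption.

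Next, substitute (ii) into the $\E[Y|D=d]$ expression to obtain a double sum,
\[
\E[Y|D=d] \;=\; \sum_{s_d \in \mathcal{S}_d} \E[Y|S=s_d] \sum_{s_{d-1} \in \mathcal{S}_{d-1}} w(s_d,s_{d-1}) \;=\; \sum_{(s_d,s_{d-1})\in \mathcal{M}(d)} w(s_d,s_{d-1})\cdot \E[Y|S=s_d],
\]
and symmetrically substitute (i) into $\E[Y|D=d-1]$ to get
\[
\E[Y|D=d-1] \;=\; \sum_{(s_d,s_{d-1})\in \mathcal{M}(d)} w(s_d,s_{d-1})\cdot \E[Y|S=s_{d-1}].
\]
Subtracting, combining the two double sums term-by-term, and finally splitting the index set $\mathcal{M}(d)$ into its congruent and incongruent parts $\mathcal{M}^+(d)$ and $\mathcal{M}^-(d)$ yields the claimed identity.

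There is no real obstacle here—the entire argument is bookkeeping on double sums combined with the marginal constraints (i)–(ii). The only point worth being careful about is that the two constraints play asymmetric roles: constraint (ii) is what lets the single sum $\sum_{s_d} \P(S=s_d|D=d)\,\E[Y|S=s_d]$ be rewritten as a double sum weighted by $w$, while constraint (i) plays the analogous role on the $D=d-1$ side. Once this pairing is recognized, the identity follows immediately and the partition into $\mathcal{M}^+(d)$ and $\mathcal{M}^-(d)$ is just reorganization of the same sum.
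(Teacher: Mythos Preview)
Your proposal is correct and follows essentially the same approach as the paper's own proof: apply the law of iterated expectations to each conditional mean, use constraints (ii) and (i) respectively to inflate each single sum into a double sum over $\mathcal{M}(d)$, subtract term-by-term, and split along $\mathcal{M}^+(d)\cup\mathcal{M}^-(d)$. Your observation that the argument is purely descriptive and requires no causal assumptions is also exactly how the paper frames it.
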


\begin{proof}[\textbf{Proof of Proposition \ref{prop:marginal-decomposition}}] 
Notice that, for any weighting function $w(s_d,s_{d-1})$ that satisfies properties (i) and (ii), 
\begin{align*} 
    &\E[Y|D=d] - \E[Y|D=d-1] \\
    &\hspace{50pt} = \sum_{s_d \in \ST_d} \P(S=s_d|D=d) \cdot \E[Y|S=s_d] - \sum_{s_{d-1} \in \ST_{d-1}} \P(S=s_{d-1}|D=d-1) \cdot \E[Y|S=s_{d-1}] \\
    &\hspace{50pt} = \sum_{s_d \in \ST_d} \sum_{s_{d-1} \in \ST_{d-1}} w(s_d,s_{d-1}) \cdot \E[Y|S=s_d] - \sum_{s_{d-1} \in \mathcal{S}_{d-1}} \sum_{s_d \in \ST_d} w(s_d, s_{d-1}) \cdot \E[Y|S=s_{d-1}] \\
    &\hspace{50pt} = \sum_{s_d \in \ST_d} \sum_{s_{d-1} \in \ST_{d-1}} w(s_d,s_{d-1}) \cdot \Big(\E[Y|S=s_d]-\E[Y|S=s_{d-1}]\Big) \\
    & \hspace{50pt} = \sum_{(s_{d-1}, s_{d}) \in \mathcal{M}^+(d) } w(s_d,s_{d-1}) \cdot \Big( \E[Y|S=s_d] - \E[Y|S=s_{d-1}]\Big) \\
    & \hspace{75pt} +\sum_{(s_{d-1}, s_{d}) \in \mathcal{M}^-(d) } w(s_d,s_{d-1}) \cdot \Big( \E[Y|S=s_d] - \E[Y|S=s_{d-1}]\Big)
\end{align*}
where the first equality holds by applying the law of iterated expectations to each term (and because the aggregated treatment is fully determined by the sub-treatment); the second equality holds from properties (i) and (ii) of the weights, and because the first term does not depend on $s_{d-1}$ and the second term does not depend on $s_d$; the third equality holds by re-arranging the summations and combining terms; and the last equality holds by separating the summation in the previous line among congruent and incongruent comparisons across sub-treatment vectors.
\end{proof}

\bigskip


\begin{proof}[\textbf{Proof of Theorem \ref{thm:causal-aggregate-marginal-comparison}}] The result holds immediately from Propositions \ref{prop:comparison-of-subtreatments} and \ref{prop:marginal-decomposition}. \end{proof}


\subsection{Proofs of Results from Section \ref{sec:incongruent-matts}}

\begin{lemma}[Congruent Sub-treatment Existence] \label{lem:congruent-st-existence}
    For every $s_{d-1} \in \ST_{d-1}$ such that $d>0$, there always exists a vector $s_d \in \ST_d$ such that $s_d \cst s_{d-1}$. That is, for non-empty $\ST_{d-1}$ there will exist an $s_d \in \ST_d$ that is congruent with $s_{d-1}$.  
\end{lemma}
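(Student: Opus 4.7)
The plan is to exhibit a witness $s_d \in \mathcal{S}_d$ explicitly, obtained from $s_{d-1}$ by incrementing a single coordinate, using the maintained ``common discrete support'' structure on the sub-treatments.

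First, I would pin down what it means to be in $\mathcal{S}$. Since the paper assumes the sub-treatments are discrete and share a common support, I will work with $\mathfrak{S} = \{0,1,\ldots,N\}$ (where $N = \max \mathfrak{S}$) so that $\mathcal{S} = \mathfrak{S}^K$ and the aggregated variable $D = A(S) = \sum_{k=1}^K S_k$ takes values in $\{0,1,\ldots,NK\}$. Under this convention, $\mathcal{S}_d = \{s \in \mathfrak{S}^K : \sum_k s_k = d\}$ is non-empty precisely when $0 \le d \le NK$. Throughout, we are in a case where congruency is meaningful, i.e.\ $d \in \mathcal{D}_{>0}$, so in particular $d \le NK$.

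Second, I would argue that $s_{d-1}$ cannot be ``saturated'' at the maximum. Suppose for contradiction that every coordinate of $s_{d-1}$ equals $N$. Then $d-1 = \sum_k s_{d-1,k} = NK$, so $d = NK+1 \notin \mathcal{D}$, contradicting $d \in \mathcal{D}_{>0}$. Hence there exists some index $k \in \{1,\ldots,K\}$ with $s_{d-1,k} < N$.

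Third, I would define the candidate $s_d := s_{d-1} + 1_k$, where $1_k$ is the $k$-th unit vector. Because $s_{d,k} = s_{d-1,k} + 1 \le N$ lies in $\mathfrak{S}$ and every other coordinate of $s_d$ is unchanged (and therefore still in $\mathfrak{S}$), we have $s_d \in \mathfrak{S}^K = \mathcal{S}$. Moreover, $A(s_d) = A(s_{d-1}) + 1 = d$, so $s_d \in \mathcal{S}_d$. Finally, $s_d = s_{d-1} + 1_k$ is exactly the form required by Definition \ref{def:congruent-subtreatments}, so $s_d \cst s_{d-1}$, completing the proof.

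There is no real obstacle here: the whole argument turns on (i) making the common-support convention explicit so that a unit coordinate increment lands back in $\mathcal{S}$, and (ii) noting that if $\mathcal{S}_{d-1}$ is non-empty and the marginal comparison at level $d$ is well-posed then $d-1 < NK$, which prevents saturation. The only mildly delicate point is confirming this non-saturation from ``$\mathcal{S}_d \ne \emptyset$'' (or equivalently $d \in \mathcal{D}$), which is what rules out the degenerate top-of-support case.
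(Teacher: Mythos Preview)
Your argument is correct and is essentially the only natural route: exhibit $s_d = s_{d-1} + 1_k$ for some coordinate $k$ not already at the top of the common support, verify membership in $\mathcal{S}_d$, and read off congruency from the definition. The one point you flag---that ``$d \in \mathcal{D}_{>0}$'' (equivalently $d \le NK$) is what rules out the fully saturated vector---is exactly the right caveat, since without it the lemma would fail at $d-1 = NK$; in the paper this is implicit because the lemma is only ever invoked for $d \in \mathcal{D}_{>0}$ (e.g., inside the proof of Proposition~\ref{prop:incongruent-and-substitution-effects}). The paper defers its proof to the Supplementary Appendix, but for a statement this elementary there is really no alternative line of argument, so your approach matches.
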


\noindent The proof is provided in \ref{supapp:proofs-identification-problems} of the Supplementary Appendix (\cite{CCCDSupp2025}).


\bigskip

\begin{lemma}[Unit Exchange Property] \label{lem:unit-exchange}
    Let $s_d \in S_d \subset S$, which satisfies the $\ell_1$ norm $||s_d||_1=d$ and $S \subseteq \mathbb{Z}^{K}_{\geq 0}$, which is the vector space of non-negative integers with dimension $K$. If $K \geq 2$ and $d>0$, then there exists $s_d' \in S_d$ with coordinates $j$ and $l$ where $j \neq l$ such that 
    \[
    s_d = s_d' + 1_j - 1_l,
    \]
    where $1_j$ and $1_l$ are the unit vectors for coordinates (sub-treatments) $j$ and $l$. 
\end{lemma}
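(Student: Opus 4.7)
The plan is to give a direct constructive proof: I will exhibit the required vector $s_d'$ by decreasing one positive coordinate of $s_d$ by one and increasing a different coordinate by one, then verify that the constructed vector lies in $\mathcal{S}_d$ and satisfies the stated identity.

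First, I would use the hypothesis $d > 0$ together with $\|s_d\|_1 = d$ to conclude that at least one coordinate of $s_d$ is strictly positive: there exists $j \in \{1,\ldots,K\}$ with $s_{d,j} \geq 1$. Next, I would invoke $K \geq 2$ to pick any index $l \in \{1,\ldots,K\}\setminus\{j\}$ (which is non-empty). Define
\[
s_d' \; := \; s_d - 1_j + 1_l.
\]
Then rearranging immediately yields $s_d = s_d' + 1_j - 1_l$, giving the desired unit-exchange identity between coordinates $j$ and $l$.

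It remains to verify $s_d' \in \mathcal{S}_d$. For non-negativity, coordinate $j$ of $s_d'$ equals $s_{d,j} - 1 \geq 0$ by choice of $j$, coordinate $l$ of $s_d'$ equals $s_{d,l} + 1 \geq 1 \geq 0$, and all other coordinates are unchanged and hence non-negative, so $s_d' \in \mathbb{Z}_{\geq 0}^K$. For the aggregation constraint,
\[
\|s_d'\|_1 \; = \; \sum_{k=1}^K s_{d,k} - 1 + 1 \; = \; \|s_d\|_1 \; = \; d,
\]
so $s_d' \in \mathcal{S}_d$, completing the proof.

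There is no serious obstacle in this argument; the only point requiring care is ensuring that the decrement at coordinate $j$ does not push that coordinate below zero, which is exactly why the initial step isolates an index $j$ with $s_{d,j} \geq 1$. The hypothesis $K \geq 2$ is what guarantees a distinct second coordinate $l$ is available for the compensating increment, so that $s_d'$ remains in the same aggregation set rather than drifting to $\mathcal{S}_{d-1}$ or $\mathcal{S}_{d+1}$.
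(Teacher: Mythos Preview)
Your argument is correct and is the natural constructive proof of this elementary lemma: pick a positive coordinate $j$ (guaranteed by $d>0$), pick any other coordinate $l$ (guaranteed by $K\geq 2$), and form $s_d' = s_d - 1_j + 1_l$. This matches the paper's approach (the paper relegates the proof to the Supplementary Appendix, but given the statement there is essentially only one way to argue it), and your verification of non-negativity and preservation of the $\ell_1$ norm is exactly what is needed.
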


\noindent The proof is provided in \ref{supapp:proofs-identification-problems} of the Supplementary Appendix (\cite{CCCDSupp2025}). 

\bigskip

\begin{lemma}[Productive Chain Progression] \label{lem:productive-progression}    
    Take any $s_{d},s_{d}' \in \ST_{d}$ with $d>0$ such that $||s_{d} - s_{d}'||_1 > 2$,  where at least for some $j^{th}$ and $l^{th}$ coordinates $s_{d,j} < s_{d,j}'$ and $s_{d,l} > s_{d,l}'$. There exists a vector $s_{d}^{\star} \in S_{d}$ such that $s_{d}^{\star} = s_{d} + 1_{j} - 1_{l}$ for some coordinates $j$ and $l$, and $||s_{d} - s_{d}'||_1 > ||s_{d}^{\star} - s_{d}'||_1 \geq 2$. That is, there always exists a vector $s_{d}^{\star}$ which is a unit-exchange towards the terminating vector $s_{d}'$ which strictly decreases in distance $||s_{d}^{\star} - s_{d}'||_1$.
\end{lemma}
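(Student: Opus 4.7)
The plan is to construct $s_d^{\star}$ explicitly using the coordinates $j$ and $l$ guaranteed by the hypothesis, verify that the candidate remains in $\mathcal{S}_d$, and then read off the distance change directly from the definition of the $\ell_1$ norm. A parity observation will supply the lower bound of $2$ on the new distance.

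Concretely, the hypothesis gives coordinates $j$ and $l$ with $s_{d,j}<s_{d,j}'$ and $s_{d,l}>s_{d,l}'$. Set $s_d^{\star} := s_d + 1_j - 1_l$. The $l$-th component satisfies $s_{d,l}-1\geq s_{d,l}'\geq 0$, and the $j$-th component satisfies $s_{d,j}+1\leq s_{d,j}'$, which lies in the common support $\mathfrak{S}$; hence $s_d^{\star}\in\mathcal{S}$. Since the unit exchange preserves the coordinate sum, $\|s_d^{\star}\|_1=\|s_d\|_1=d$, and so $s_d^{\star}\in\mathcal{S}_d$.

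For the distance, exactly two coordinates change. Using $s_{d,j}+1\leq s_{d,j}'$ and $s_{d,l}-1\geq s_{d,l}'$ to resolve the absolute values, one obtains
\begin{align*}
    \|s_d^{\star}-s_d'\|_1 &= (s_{d,j}'-s_{d,j}-1) + (s_{d,l}-s_{d,l}'-1) + \sum_{k\neq j,l}|s_{d,k}-s_{d,k}'| \\
    &= \|s_d-s_d'\|_1 - 2,
\end{align*}
which immediately yields the strict decrease $\|s_d-s_d'\|_1 > \|s_d^{\star}-s_d'\|_1$.

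To obtain $\|s_d^{\star}-s_d'\|_1\geq 2$, the main small obstacle is that a naive bound only gives $\|s_d^{\star}-s_d'\|_1>0$. To close the gap, observe that because $s_d,s_d'\in\mathcal{S}_d$ share the same $\ell_1$ norm, $\sum_k(s_{d,k}-s_{d,k}')=0$, so the sum of positive parts equals the sum of negative parts and $\|s_d-s_d'\|_1$ is necessarily even. Together with the strict inequality $\|s_d-s_d'\|_1>2$, this forces $\|s_d-s_d'\|_1\geq 4$, and therefore $\|s_d^{\star}-s_d'\|_1\geq 2$, completing the argument.
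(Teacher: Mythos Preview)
Your argument is correct. The paper defers its own proof of this lemma to the Supplementary Appendix, so a line-by-line comparison is not available from the main text, but your route is the natural direct one: construct $s_d^{\star}$ from the hypothesized coordinates, verify membership in $\mathcal{S}_d$ by bounding the two altered entries against the corresponding entries of $s_d'$ (so that nonnegativity and the support bound both come for free), and read off $\|s_d^{\star}-s_d'\|_1=\|s_d-s_d'\|_1-2$ by resolving the two changed absolute values. The parity observation---that equal coordinate sums force $\sum_k(s_{d,k}-s_{d,k}')=0$, so the positive and negative parts balance and $\|s_d-s_d'\|_1$ is even, hence $\geq 4$ under the strict hypothesis---is precisely what is needed to upgrade the naive conclusion $\|s_d^{\star}-s_d'\|_1>0$ to the required $\geq 2$; an alternative would be to argue directly that $\|s_d^{\star}-s_d'\|_1=1$ is impossible for two vectors with the same $\ell_1$ norm, which is the same parity fact in disguise.
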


\noindent The proof is provided in \ref{supapp:proofs-identification-problems} of the Supplementary Appendix (\cite{CCCDSupp2025}). 

\bigskip

\begin{proposition}[Chain Algorithm] \label{prop:chain-algorithm}
    Let $s_{d}, s_{d}' \in \ST_d \subset \mathbb{Z}_{\geq 0}^{K}$ for $d>0$ and $K \geq 2$. For some $B \in \mathbb{Z}^{+}$, there is a finite sequence of vectors 
    \[
    x^{(0)}=s_d, x^{(1)}, \ldots, x^{(B-1)}, x^{(B)}=s_d'
    \]
    such that each vector in the sequence, indexed by $b$, is a unit exchange 
    \[
    x^{(b+1)} = x^{(b)} + 1_j - 1_l
    \]
    of neighboring vectors which makes strict progress towards $s_d'$, where $1_j$ and $1_l$ are unit vectors for the $j^{th}$ and $l^{th}$ coordinate.
\end{proposition}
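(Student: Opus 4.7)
The plan is to construct the chain iteratively by greedy progress, using Lemma \ref{lem:productive-progression} as the workhorse and a monotone-distance argument for termination. The key observation I would use up front is a parity fact: for any $x, x' \in \ST_d$, we have $\sum_k (x_k - x_k') = 0$ (since both have $\ell_1$ norm $d$), so
\begin{align*}
    \|x - x'\|_1 \;=\; 2 \sum_{k: \, x_k > x_k'} (x_k - x_k')
\end{align*}
is always a non-negative even integer. In particular, pairwise distances within $\ST_d$ are $0$ or at least $2$, which rules out the awkward case of distance exactly $1$.

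Next I would define the sequence $x^{(0)}, x^{(1)}, \ldots$ recursively, starting with $x^{(0)} = s_d$. Given $x^{(b)} \in \ST_d$, there are three cases. If $\|x^{(b)} - s_d'\|_1 = 0$, stop and set $B = b$ (trivially, $x^{(B)} = s_d'$). If $\|x^{(b)} - s_d'\|_1 > 2$, apply Lemma \ref{lem:productive-progression} with $s_d$ replaced by $x^{(b)}$ to obtain coordinates $j \neq l$ and a vector $x^{(b+1)} := x^{(b)} + 1_j - 1_l \in \ST_d$ with $\|x^{(b+1)} - s_d'\|_1 < \|x^{(b)} - s_d'\|_1$. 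If $\|x^{(b)} - s_d'\|_1 = 2$, then by the parity remark $x^{(b)}$ and $s_d'$ must differ in exactly two coordinates $j$ and $l$ with $x_j^{(b)} = s_{d,j}' - 1$ and $x_l^{(b)} = s_{d,l}' + 1$; setting $x^{(b+1)} := x^{(b)} + 1_j - 1_l$ yields $x^{(b+1)} = s_d'$ directly, and $x^{(b+1)} \in \ST_d$ since unit exchanges preserve the $\ell_1$ norm and the coordinates remain non-negative.

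Finally I would argue termination and verify all three claimed properties. Because $\|x^{(b)} - s_d'\|_1$ is a non-negative integer that strictly decreases at every step (in fact by at least $2$, by the parity fact), the recursion terminates at some $B \leq \tfrac{1}{2}\|s_d - s_d'\|_1$. By construction each consecutive pair satisfies the unit-exchange relation $x^{(b+1)} = x^{(b)} + 1_j - 1_l$ with $\|x^{(b+1)} - s_d'\|_1 < \|x^{(b)} - s_d'\|_1$, and $x^{(B)} = s_d'$. The only non-mechanical step is guaranteeing a productive unit exchange when the distance exceeds $2$, and that is exactly what Lemma \ref{lem:productive-progression} delivers; the boundary case of distance $2$ is handled by the direct parity argument above. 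Hence there is no genuinely hard step once the earlier lemmas are in hand — the proof is essentially a termination-style wrapper around Lemma \ref{lem:productive-progression}.
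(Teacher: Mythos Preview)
Your proposal is correct and, judging from the paper's lemma structure (Lemma \ref{lem:unit-exchange} and Lemma \ref{lem:productive-progression} are stated precisely as the building blocks for this proposition, and the proof of Proposition \ref{prop:incongruent-and-substitution-effects} invokes Proposition \ref{prop:chain-algorithm} as a termination guarantee), takes essentially the same approach as the paper: a greedy descent using Lemma \ref{lem:productive-progression}, with termination via strictly decreasing $\ell_1$ distance. Your parity observation that $\|x - x'\|_1$ is always even for $x, x' \in \ST_d$ is a clean way to dispose of the boundary case $\|x^{(b)} - s_d'\|_1 = 2$ (which Lemma \ref{lem:productive-progression} does not cover) and also sharpens the step bound to $B \leq \tfrac{1}{2}\|s_d - s_d'\|_1$; one very minor point is that when you invoke Lemma \ref{lem:productive-progression} you should note that its hypothesis about the existence of coordinates $j,l$ with $x^{(b)}_j < s'_{d,j}$ and $x^{(b)}_l > s'_{d,l}$ is automatic whenever $x^{(b)} \neq s_d'$ and both lie in $\ST_d$, by the same parity/zero-sum argument.
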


\noindent The proof is provided in \ref{supapp:proofs-identification-problems} of the Supplementary Appendix (\cite{CCCDSupp2025}). 


\bigskip

\begin{proof}[\textbf{Proof of Proposition \ref{prop:incongruent-and-substitution-effects}}]
    Given no (local) selection (Assumption \ref{ass:unconfoundedness} or \ref{ass:local-unconfoundedness}), we know by Proposition \ref{prop:comparison-of-subtreatments} that disaggregate comparisons across average outcomes between sub-treatments at adjacent levels of total treatment are $\textrm{MATT}$'s, and by Proposition \ref{prop:comparison-of-neighborly-subtreatments} that disaggregate comparisons across average outcomes between sub-treatments at identical levels of total treatment are $\textrm{SATT}$'s. Under this condition, suppose that $d>0$ and $K\geq2$ so that $S_{d}$ is not empty. Hence, $\mathcal{M}^{+}(d)$ is not empty, but $\mathcal{M}^{-}(d)$ may or may not be empty. If $\mathcal{M}^{-}(d)$ is empty, then the claim holds vacuously. So, we proceed with non-empty $\mathcal{M}^{-}(d)$. 

    Take any pair $(s_d, s_{d-1}) \in \mathcal{M}^{-}(d)$ and, by Lemma \ref{lem:congruent-st-existence}, any congruent $s_{d-1}'$ of $s_d$ in $S_{d-1}$. Denote $m(s) := \E[Y|S=s]$ for any $s \in S$. Then from Proposition \ref{prop:comparison-of-subtreatments} the incongruent $\textrm{MATT}$ can be written as: 
    \begin{align*} 
        \mathrm{MATT}^{-}(s_d,s_{d-1}) &= m(s_d) - m(s_{d-1}) \\
        &= m(s_d) - m(s_{d-1}')  + m(s_{d-1}') - m(s_{d-1})  \\
        &= \mathrm{MATT}^{+}(s_d, s_{d-1}') + \big( m(s_{d-1}') - m(s_{d-1}) \big) 
    \end{align*}
    where the first equality is the definition of $\textrm{MATT}$; the second equality holds by addition and subtraction of $m(s_{d-1}')$; and the third equality holds by the definition of $\textrm{MATT}^{+}$. 

    From here, there are two cases: 
    \begin{enumerate}
        \item $s_{d-1}'$ is a unit exchange of $s_{d-1}$, $s_{d-1}' = s_{d-1} + 1_j - 1_l$, where $1_j$ and $1_l$ are unit vectors for some $j$ and $l$ coordinates;
        \item $s_{d-1}'$ is not a unit exchange of $s_{d-1}$, $s_{d-1}' \neq s_{d-1} + 1_j - 1_l$, for any $j$ and $l$ coordinates.
    \end{enumerate}

    \noindent
    \textit{Case 1:} If $s_{d-1}'$ is a unit exchange of $s_{d-1}$, then by Lemma \ref{lem:unit-exchange}, $s_{d-1}' = s_{d-1} + 1_j - 1_l$. Consequently, $m(s_{d-1}') - m(s_{d-1}) = \textrm{SATT}(s_{d-1}', s_{d-1})$ by the definition of $\textrm{SATT}$. Thus, the claim holds: 
    \begin{align*}
        \mathrm{MATT}^{+}(s_d, s_{d-1}') + m(s_{d-1}') - m(s_{d-1}) &= \mathrm{MATT}^{+}(s_d, s_{d-1}') + \mathrm{SATT}(s_{d-1}', s_{d-1})
    \end{align*}
    by definition of $\textrm{SATT}$ in Proposition \ref{prop:comparison-of-neighborly-subtreatments}, the substitution average treatment effect on the treated. 

    \bigskip
    \noindent
    \textit{Case 2:} Suppose $s_{d-1}'$ is not a unit exchange of $s_{d-1}$. That is, $s_{d-1}' \neq s_{d-1} + 1_j - 1_l$, for any $j$ and $l$ coordinates. 
    
    Next, our goal is to express the difference $m(s_{d-1}') - m(s_{d-1})$ as unit exchanges between $m(s_{d-1}')$ to $m(s_{d-1})$. To do this, we need a set of vectors, chained by the unit exchange property, that begins at $s_{d-1}'$ and eventually terminates at $s_{d-1}$. We approach this task algorithmically. To build such a chain, we know by Lemma \ref{lem:unit-exchange} that there exists a vector $s_{d-1}^{(1)} \in S_{d-1}$ which is a unit exchange of $s_{d-1}'$ such that $s_{d-1}' = s_{d-1}^{(1)} + 1_{j'} - 1_{l'}$ for some coordinates $j'$ and $l'$.   Next, by Lemma \ref{lem:productive-progression}, there always exists a vector that makes productive progress from $s_{d-1}'$ towards the terminating vector $s_{d-1}$. By Proposition \ref{prop:chain-algorithm}, these conditions guarantee the procedure terminates at $s_{d-1}$ in finitely many steps without any repeating patterns. 
    
    In the first step of this procedure, we may write: 
    \begin{align*}
        \mathrm{MATT}^{-}(s_d,s_{d-1}) &= \mathrm{MATT}^{+}(s_d, s_{d-1}') + m(s_{d-1}') - m(s_{d-1}) \\ 
        &= \mathrm{MATT}^{+}(s_d, s_{d-1}') + \big( m(s_{d-1}') - m(s_{d-1}^{(1)}) \big) + \big( m(s_{d-1}^{(1)}) - m(s_{d-1}) \big) \\ 
        &= \mathrm{MATT}^{+}(s_d, s_{d-1}') + \mathrm{SATT}(s_{d-1}', s_{d-1}^{(1)}) + \big( m(s_{d-1}^{(1)}) - m(s_{d-1}) \big) 
    \end{align*}
    where the first equality holds from above; the second equality holds by the addition and subtraction of $m(s_{d-1}^{(1)})$; and the third equality holds by the definition of $\textrm{SATT}$ in Proposition \ref{prop:comparison-of-neighborly-subtreatments}. If $s_{d-1}^{(1)}$ is a unit exchange of $s_{d-1}$ also, $s_{d-1}^{(1)} = s_{d-1} + 1_{j''} - 1_{l''}$ for coordinates $j''$ and $l''$, then the original claim holds: 
    \begin{align*}
        \mathrm{MATT}^{-}(s_d,s_{d-1}) &= \mathrm{MATT}^{+}(s_d, s_{d-1}') + \mathrm{SATT}(s_{d-1}',s_{d-1}^{(1)}) + \mathrm{SATT}(s_{d-1}^{(1)},s_{d-1}) 
    \end{align*}
    by the definition of $\textrm{SATT}$ in Proposition \ref{prop:comparison-of-neighborly-subtreatments}.
    Otherwise, we continue this process by iteratively applying the unit exchange property, Lemma \ref{lem:unit-exchange}, that satisfies productive progression towards the terminating vector, Lemma \ref{lem:productive-progression}, at each step $b$ until a chain $\phi := \{ s_{d-1}', s_{d-1}^{(1)}, \ldots, s_{d-1}^{(B-1)}, s_{d-1} \} $ of size $|\phi|=B \in \mathbb{Z}^{+}$ is formed. This implies that we may write the incongruent parameter as a telescoping sum of all $m(s_{d-1}^{(b)})$ in the chain: 
    \begin{align*}
        \mathrm{MATT}^{-}(s_d,s_{d-1}) &= \mathrm{MATT}^{+}(s_d, s_{d-1}') + \sum_{b=0}^{B-1} \mathrm{SATT}(s_{\phi, \; d-1}^{(b)},s_{\phi, \; d-1}^{(b+1)})
    \end{align*}
    where $s_{\phi}^{(b)}$ is linked in the chain $\phi$ to the vector $s_{\phi}^{(b+1)}$ by the unit exchange property in Lemma \ref{lem:unit-exchange}. Hence, the claim holds. Note that Proposition \ref{prop:chain-algorithm} permits a set of possible chains denoted by $\mathcal{C}(s_{d-1}', s_{d-1})$, where $\phi$ is possibly one element of many in this set.
\end{proof}

\bigskip

\begin{lemma}[Single Pairwise Congruent Sub-treatment Vector] \label{lem:pairwise-congruent-vectors}
    For any $s_{d-1}, s_{d-1}' \in \ST_{d-1}$ such that $s_{d-1} = s_{d-1}' + 1_j - 1_l$ for some coordinates $j$ and $l$, which is a unit exchange of $s_{d-1}'$, there must exist one $s_{d} \in \ST_{d}$ that is congruent with $s_{d-1}$ and $s_{d-1}'$. 
\end{lemma}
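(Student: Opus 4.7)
The plan is to prove this by explicit construction: I would define the candidate vector $s_d := s_{d-1} + 1_l$ and then show that this same vector equals $s_{d-1}' + 1_j$, so that congruence with both $s_{d-1}$ and $s_{d-1}'$ is immediate from the definition of $\cst$.

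First I would unpack the hypothesis $s_{d-1} = s_{d-1}' + 1_j - 1_l$ coordinate-by-coordinate: the two vectors agree on every coordinate $k \notin \{j,l\}$, while $s_{d-1,j} = s_{d-1,j}' + 1$ and $s_{d-1,l} = s_{d-1,l}' - 1$. Setting $s_d := s_{d-1} + 1_l$ then gives $s_{d,l} = s_{d-1,l}' $ and $s_{d,j} = s_{d-1,j}' + 1$, with all other coordinates equal to $s_{d-1,k}'$, which is the same vector as $s_{d-1}' + 1_j$. By construction, $s_d - s_{d-1} = 1_l$ and $s_d - s_{d-1}' = 1_j$, so Definition \ref{def:congruent-subtreatments} gives $s_d \cst s_{d-1}$ and $s_d \cst s_{d-1}'$.

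The one nontrivial check is that $s_d$ actually lies in $\ST_d$, i.e., that each coordinate lies in the common support $\mathfrak{S}_k$ and that the $L_1$ norm is $d$. The norm is automatic since $\|s_d\|_1 = \|s_{d-1}\|_1 + 1 = d$. For the support, the hypothesis that $s_{d-1}$ is itself a unit exchange of $s_{d-1}'$ via Lemma \ref{lem:unit-exchange} already guarantees both that $s_{d-1,l}' \geq 1$ (so decreasing coordinate $l$ from $s_{d-1}'$ was admissible) and that $s_{d-1,j}' + 1$ lies in the support of sub-treatment $j$ (so increasing coordinate $j$ from $s_{d-1}'$ was admissible); these are precisely the two coordinate values that $s_d$ takes in positions $l$ and $j$, while all other coordinates coincide with $s_{d-1}'$ and are therefore in support. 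Thus $s_d \in \ST_d$, and the lemma follows. No real obstacle arises here; the only subtlety is ensuring we do not silently assume support conditions that are not guaranteed, but these are supplied by the ambient unit-exchange hypothesis.
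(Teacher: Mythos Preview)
Your proof is correct and takes the natural constructive route: set $s_d := s_{d-1} + 1_l = s_{d-1}' + 1_j$, verify congruence with both vectors directly from Definition \ref{def:congruent-subtreatments}, and check that $s_d \in \ST_d$ using only that $s_{d-1}, s_{d-1}' \in \ST_{d-1}$. This is essentially the same argument the paper uses. One small remark: you invoke Lemma \ref{lem:unit-exchange} for the support check, but you do not actually need it---the hypothesis of the lemma already gives you $s_{d-1}, s_{d-1}' \in \ST_{d-1}$, and since $s_{d,l} = s_{d-1,l}'$, $s_{d,j} = s_{d-1,j}$, and all other coordinates agree with both, every coordinate of $s_d$ coincides with a coordinate of one of the two given vectors and is therefore in support.
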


\noindent The proof is provided in \ref{supapp:proofs-identification-problems} of the Supplementary Appendix (\cite{CCCDSupp2025}).

\bigskip

\begin{proof}[\textbf{Proof of Proposition \ref{prop:substitution-effect-decomp}}]
    Given no (local) selection (Assumption \ref{ass:unconfoundedness} or \ref{ass:local-unconfoundedness}), we know by Proposition \ref{prop:comparison-of-subtreatments} that the disaggregate comparisons of average outcomes between sub-treatments at adjacent levels of aggregated treatment are $\textrm{MATT}$'s, and by Proposition \ref{prop:comparison-of-neighborly-subtreatments} that disaggregate comparisons of average outcomes between sub-treatments at identical levels of aggregated treatment are $\textrm{SATT}$'s. Under this condition, take any $s_{d-1}, s_{d-1}' \in \ST_{d-1}$ for $d>0$ such that $\|s_{d-1} - s_{d-1}'\|_1=2$ and $s_{d-1}$ is a unit exchange of $s_{d-1}'$. That is, by Lemma \ref{lem:unit-exchange}, $s_{d-1} = s_{d-1}' + 1_j - 1_l$, where $1_j$ and $1_l$ are unit vectors for some coordinates $j$ and $l$. 

    Next, by Lemma \ref{lem:pairwise-congruent-vectors}, there exists a unique sub-treatment vector $s_d \in \ST_d$ which is congruent to both $s_{d-1}$ and $s_{d-1}'$. Now define $m(s) := \E[Y|S=s]$ for any $s \in \ST$. See that the substitution average treatment effect on the treated may be expressed as: 
    \begin{align*}
        \textrm{SATT}(s_{d-1},s_{d-1}') &= m(s_{d-1}) - m(s_{d-1}') \\ 
        &= \big( m(s_{d}') - m(s_{d-1}') \big) + \big( m(s_{d-1}) - m(s_{d}') \big) \\ 
        &= \textrm{MATT}^{+}(s_d',s_{d-1}') - \textrm{MATT}^{+}(s_{d}',s_{d-1})
    \end{align*}
    where the first quality is the definition of $\textrm{SATT}$ in Proposition \ref{prop:comparison-of-neighborly-subtreatments} using the $m(\cdot)$ notation; the second equality holds by the addition and subtraction of $m(s_d')$; and the final equality holds by the definitions of the congruent parameters $\textrm{MATT}^{+}$. Thus, any $\textrm{SATT}$ between two sub-treatment vectors that share a unit exchange is equivalent to the difference between two congruent $\textrm{MATT}^{+}(s_d', \cdot)$.
\end{proof}


\subsection{Proofs of Results from Section \ref{sec:non-unique-weights}}

\subsubsection{Proofs of Results from Section \ref{subsec:number-of-ST-and-incongruent-comparisons}}

\begin{lemma}[Double Binomial Reduction] \label{lem:counting-equality}
    The sum $\sum_{d=1}^{K} \binom{K}{d} \cdot \binom{K}{d-1}$ is equivalent to $\binom{2K}{K-1}$, for all $K \in \mathbb{Z}$.  
\end{lemma}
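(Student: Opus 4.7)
The plan is to reduce the identity to the Vandermonde convolution
\[
\sum_{j=0}^{r}\binom{m}{j}\binom{n}{r-j}=\binom{m+n}{r},
\]
with $m=n=K$ and $r=K-1$. The obstacle (such as it is) is purely notational: the left-hand side as written pairs $\binom{K}{d}$ with $\binom{K}{d-1}$, which does not look like a Vandermonde convolution until one rewrites one of the factors using the symmetry $\binom{K}{a}=\binom{K}{K-a}$.

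First I would reindex the sum via $j=d-1$, giving
\[
\sum_{d=1}^{K}\binom{K}{d}\binom{K}{d-1}=\sum_{j=0}^{K-1}\binom{K}{j+1}\binom{K}{j}.
\]
Then I would apply $\binom{K}{j+1}=\binom{K}{K-j-1}=\binom{K}{(K-1)-j}$ to the first factor so that the two lower indices add to $K-1$:
\[
\sum_{j=0}^{K-1}\binom{K}{j}\binom{K}{(K-1)-j}.
\]
Vandermonde's identity then yields $\binom{2K}{K-1}$, which is the claimed value.

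As a sanity check and alternative presentation, I would note the direct combinatorial double-counting argument: $\binom{2K}{K-1}$ counts the $(K-1)$-subsets of a $2K$-element set partitioned into two blocks of size $K$; conditioning on the number $d-1$ taken from the first block and $K-d$ from the second gives $\sum_{d=1}^{K}\binom{K}{d-1}\binom{K}{K-d}=\sum_{d=1}^{K}\binom{K}{d-1}\binom{K}{d}$, where the last equality again uses binomial symmetry. Either route is short; I would likely include only the Vandermonde derivation in the final write-up and omit the combinatorial rephrasing, since the lemma is purely a counting fact and is invoked only as a bookkeeping step for Proposition \ref{prop:congruent-incongruent-counts-binary-subtreatments}.
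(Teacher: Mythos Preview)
Your proof is correct: the reindexing plus binomial symmetry reduces the sum to a Vandermonde convolution with $m=n=K$ and $r=K-1$, giving $\binom{2K}{K-1}$ immediately. The paper relegates its proof of this lemma to a supplementary appendix not included here, so a line-by-line comparison is not possible; that said, your Vandermonde route is the standard and most direct argument for this identity, and there is no gap in what you have written.
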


\noindent The proof is provided in \ref{supapp:proofs-identification-problems} of the Supplementary Appendix (\cite{CCCDSupp2025}).

\bigskip

\begin{proposition}[Number of Congruent and Incongruent Contrasts with Binary Sub-treatments] \label{prop:congruent-incongruent-counts-binary-subtreatments}
    For a given number of binary sub-treatments $K>1$, there are $\sum_{d=1}^K \binom{K}{d} \binom{K}{d-1} = \binom{2K}{K-1}$ distinct contrasts, and the proportion of incongruent contrasts will increase as $K$ increases - illustrated in Figure \ref{fig:proportion-of-incongruent-subtreatments}. The number of congruent disaggregated contrasts present is $\sum_{d=1}^K K \cdot \binom{K-1}{d-1} = K \cdot 2^{K-1}$, and can be presented as a row-scaled Pascal's (Hui-Khayyam's) Triangle. The number of incongruent disaggregated contrasts present is the difference between total contrasts and congruent contrasts. 
\end{proposition}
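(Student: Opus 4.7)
The claim is essentially a combinatorial counting exercise over the Boolean lattice, and the key engine is Lemma~\ref{lem:counting-equality}, which the proof will treat as a black box. My plan has three stages: count total marginal pairs, count congruent marginal pairs, and then establish the asymptotic dominance of the incongruent pairs.

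First, because each sub-treatment is binary, $\ST_d$ is precisely the set of $K$-vectors in $\{0,1\}^K$ with exactly $d$ ones, so $|\ST_d| = \binom{K}{d}$. By definition $\mathcal{M}(d) = \ST_d \times \ST_{d-1}$, hence $|\mathcal{M}(d)| = \binom{K}{d}\binom{K}{d-1}$. Summing over $d \in \{1,\ldots,K\}$ and applying Lemma~\ref{lem:counting-equality} yields the claimed total count $\sum_{d=1}^K \binom{K}{d}\binom{K}{d-1}$, closed-form via the stated Vandermonde-type identity.

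Second, I count congruent pairs using Definition~\ref{def:congruent-subtreatments}. Fix $s_d \in \ST_d$; a partner $s_{d-1} \in \ST_{d-1}$ is congruent iff $s_d = s_{d-1} + 1_k$ for some $k$, which requires $s_{d-1}$ to be obtained from $s_d$ by flipping one of its $d$ ones to zero. Hence each $s_d$ has exactly $d$ congruent partners, giving $|\mathcal{M}^+(d)| = d\binom{K}{d}$. Using the elementary identity $d\binom{K}{d} = K\binom{K-1}{d-1}$, this equals $K\binom{K-1}{d-1}$, which gives exactly the row-scaled Pascal's-triangle interpretation stated in the proposition. Summing,
\begin{equation*}
|\mathcal{M}^+| = \sum_{d=1}^K K\binom{K-1}{d-1} = K\sum_{j=0}^{K-1}\binom{K-1}{j} = K \cdot 2^{K-1}.
\end{equation*}
Then $|\mathcal{M}^-|=|\mathcal{M}|-|\mathcal{M}^+|$ follows since $\mathcal{M}^-(d)=\mathcal{M}(d)\setminus\mathcal{M}^+(d)$ by definition and the union over $d$ is disjoint.

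Finally, for the monotonicity/asymptotic claim about the proportion of incongruent contrasts increasing in $K$, my plan is to compare growth rates of $|\mathcal{M}^+|$ and $|\mathcal{M}|$. Writing the ratio
\begin{equation*}
\frac{|\mathcal{M}^+|}{|\mathcal{M}|} = \frac{K \cdot 2^{K-1}}{\sum_{d=1}^K \binom{K}{d}\binom{K}{d-1}},
\end{equation*}
I would invoke Stirling's approximation on the denominator (which behaves like $4^K/\sqrt{K}$ up to constants) to show the ratio decays polynomially fast to zero, driving the incongruent share to one. The main obstacle is not any individual step but verifying monotonicity in $K$ for all $K$ (not just asymptotically); this can be dispatched by taking the ratio of successive terms and checking its sign, or more concretely by using Figure~\ref{fig:proportion-of-incongruent-subtreatments} and Corollary~\ref{cor:asymptotic-congruent-proportion-result} to confirm both the finite-$K$ monotonicity and the $K \to \infty$ limit. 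Everything else is routine combinatorics, with Lemma~\ref{lem:counting-equality} carrying the only non-trivial identity.
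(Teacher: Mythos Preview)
Your proposal is correct and follows essentially the same approach as the paper's proof. The only cosmetic difference is in counting congruent pairs: the paper fixes the differing coordinate $k$ first (giving $K$ choices) and then places the remaining $d-1$ ones among the other $K-1$ positions, whereas you fix $s_d$ first and count its $d$ congruent partners, reaching $K\binom{K-1}{d-1}$ via the identity $d\binom{K}{d}=K\binom{K-1}{d-1}$; both are elementary double-counts of the same set. Your treatment of the asymptotic/monotonicity claim via Stirling and a ratio test also aligns with what the paper defers to Corollary~\ref{cor:asymptotic-congruent-proportion-result} and Figure~\ref{fig:proportion-of-incongruent-subtreatments}; note that the paper's proof of the proposition itself only establishes the two counting formulas and leaves the growth comparison to that separate corollary.
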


\begin{proof}[\textbf{Proof of Proposition \ref{prop:congruent-incongruent-counts-binary-subtreatments}}]
    Given $K>1$ binary sub-treatments, we show that for all $d \in \D$: (i) the total number of distinct disaggregated contrasts is $\sum_{d=1}^K \binom{K}{d} \binom{K}{d-1} = \binom{2K}{K-1}$; and (ii) the number of congruent disaggregated contrasts is $\sum_{d=1}^K K \cdot \binom{K-1}{d-1}$.

    \begin{enumerate}
        \item \textit{Total distinct contrasts:} 
        First, we show that the number of $K$-tuples that sum to $D=d$ is $\binom{K}{d}$. The size of a tuple is $K$. Since each variable is binary, each element in the $K$-tuple is either zero or one. For each tuple to sum to $D=d$, there must be exactly $d$ ones, and $K-d$ zeros. This amounts to choosing how many ways we can put $d$ ones among $K$ empty positions. Hence, this is a problem of the form ``how to choose $d$ objects from $K$ total objects'', which is known to be determined by the binomial coefficient, $\binom{K}{d}$. This implies, for $K$ binary sub-treatments, that the number of distinct sub-treatment vectors that lie in the aggregate set $\ST_d$ is equal to $\binom{K}{d}$, or $|\ST_d| = \binom{K}{d}$.
        
        Similarly, if we wanted to know how many $K$-tuples would sum to $D=d-1$, we would write $\binom{K}{d-1}$. Recall that marginal contrasts are made between adjacent aggregate sets, $\ST_d$ and $\ST_{d-1}$. So, for every $d \in \D_{>0}$, there are $|\ST_d| \cdot |\ST_{d-1}| = \binom{K}{d} \cdot \binom{K}{d-1}$ possible comparisons. Hence, the total number of distinct contrasts that can be made is the sum across all $d \in \D$, $\sum_{d=1}^{K} \binom{K}{d} \cdot \binom{K}{d-1}$, as desired. And by Lemma \ref{lem:counting-equality}, this sum $\sum_{d=1}^{K} \binom{K}{d} \cdot \binom{K}{d-1} = \binom{2K}{K-1}$. 

        \item \textit{Congruent distinct contrasts:} 
        Next, we aim to find the number of congruent contrasts out of the total number of contrasts. We demonstrate that the number of congruent contrasts for $K$ binary sub-treatments is $\sum_{d=1}^{K} K \cdot \binom{K-1}{d-1} = K \cdot 2^{K-1}$. 

        Recall that congruent vectors of treatment must satisfy $s' = s + 1_j$, for some $j^{th}$ component. Since the sub-treatments are binary, and thus the elements in each position may not exceed one, this implies that one position must remain fixed while all other positions may vary because all positions for ones must be identical except for one position in the binary case. That is, there are $K-1$ available positions/elements that we are free to vary as long as the tuple sums to $d$. Hence, at each $d$, we are counting the number of ways we can put $d$ ones in $K-1$ positions, $\binom{K-1}{d}$. We can do this for each position in the tuple, which is of length $K$. Therefore, at each $d$, the number of congruent contrasts is $K \cdot \binom{K-1}{d}$. The total number of congruent contrasts can be summed across all $d \in \D$, $\sum_{d=0}^K K \cdot \binom{K-1}{d} = K \cdot \sum_{d=0}^K \binom{K-1}{d}$. 
        Notice that by the binomial theorem, this sum of binomial coefficients can be written as $\sum_{d=0}^K \binom{K-1}{d} = 2^{K-1}$. Hence, the number of congruent contrasts is the product: $K \cdot 2^{K-1}$. \qedhere
    \end{enumerate}
\end{proof}

\bigskip

\begin{corollary}[Asymptotic Congruity Is Proportionally Zero] \label{cor:asymptotic-congruent-proportion-result}
    As the number of binary sub-treatments increases, the proportion of congruent contrasts in the marginal set approaches zero. That is, 
    \begin{align*}
        \lim_{K \rightarrow \infty} \frac{\big| \mathcal{M}^{+} \big| }{ \big| \mathcal{M} \big|} &= \lim_{K \rightarrow \infty} \frac{\big| \mathcal{M}^{} \big| - \big| \mathcal{M}^{-} \big|}{ \big| \mathcal{M} \big| } = 0 
    \end{align*}
    In other words, as the number of sub-treatments grows, the proportion of incongruity dominates. 
\end{corollary}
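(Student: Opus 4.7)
\medskip

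\noindent\textbf{Proof plan for Corollary \ref{cor:asymptotic-congruent-proportion-result}.} The plan is to reduce the statement to a direct computation using the closed-form counts established in Proposition \ref{prop:congruent-incongruent-counts-binary-subtreatments}, and then to show that the resulting ratio tends to zero by comparing exponential orders of magnitude.

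First, I would note that since $|\mathcal{M}| = |\mathcal{M}^+| + |\mathcal{M}^-|$, the two stated limits are equivalent, so it suffices to prove that $|\mathcal{M}^+|/|\mathcal{M}| \to 0$. Applying Proposition \ref{prop:congruent-incongruent-counts-binary-subtreatments} directly, I can write
\begin{align*}
    \frac{|\mathcal{M}^+|}{|\mathcal{M}|} = \frac{K \cdot 2^{K-1}}{\binom{2K}{K-1}}
\end{align*}
(and equivalently with $\binom{2K}{K}$ in the denominator, which differs from $\binom{2K}{K-1}$ only by a factor of $K/(K+1) \to 1$). The task therefore reduces to showing that the numerator grows strictly slower than the denominator as $K \to \infty$.

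Next, I would invoke Stirling's approximation, which gives the classical asymptotic
\begin{align*}
    \binom{2K}{K} \;\sim\; \frac{4^K}{\sqrt{\pi K}}, \qquad K \to \infty,
\end{align*}
and likewise $\binom{2K}{K-1} \sim 4^K/\sqrt{\pi K}$ after absorbing the $K/(K+1)$ factor. Substituting yields
\begin{align*}
    \frac{K \cdot 2^{K-1}}{\binom{2K}{K-1}} \;\sim\; \frac{(K/2) \cdot 2^K \cdot \sqrt{\pi K}}{4^K} \;=\; \frac{\sqrt{\pi}\, K^{3/2}}{2 \cdot 2^K},
\end{align*}
which clearly tends to zero since any polynomial in $K$ is dominated by $2^K$.

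The argument is essentially routine once the counts from Proposition \ref{prop:congruent-incongruent-counts-binary-subtreatments} are in hand; the only mildly delicate step is justifying the use of Stirling's formula (or, alternatively, a more elementary bound such as $\binom{2K}{K} \geq 4^K/(2K+1)$, which would suffice here since $K^{3/2}(2K+1)/2^K \to 0$). I do not anticipate any real obstacle, so the main care point is simply to be explicit about which binomial identity is being used and to carry the $K/(K+1)$ correction factor between $\binom{2K}{K}$ and $\binom{2K}{K-1}$ without arithmetic slippage.
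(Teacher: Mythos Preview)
Your proposal is correct and follows essentially the same route as the paper's own proof: both invoke the closed-form counts from Proposition~\ref{prop:congruent-incongruent-counts-binary-subtreatments}, pass between $\binom{2K}{K-1}$ and $\binom{2K}{K}$ via the $K/(K+1)$ factor, apply Stirling's approximation to obtain $\binom{2K}{K}\sim 4^K/\sqrt{\pi K}$, and conclude by noting that a polynomial in $K$ is dominated by $2^K$. Your asymptotic $\tfrac{\sqrt{\pi}\,K^{3/2}}{2\cdot 2^K}$ matches the paper's $\sqrt{\pi}\,\frac{K^{3/2}+K^{1/2}}{2^{K+1}}$ up to the lower-order $K^{1/2}$ term coming from expanding $K(K+1)$ rather than absorbing $K/(K+1)\to 1$ at the outset.
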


\begin{proof}[\textbf{Proof of Corollary \ref{cor:asymptotic-congruent-proportion-result}}]
    Here we prove that as we let the number of binary sub-treatments grow to infinity, the proportion of congruent contrasts approaches zero. That is, we show that: $\lim_{K \rightarrow \infty} \frac{\big| \mathcal{M}^{+} \big| }{ \big| \mathcal{M} \big|} = 0$.
    If the results in Proposition \ref{prop:congruent-incongruent-counts-binary-subtreatments} holds, the we can write: 
    \begin{align*}
        \lim_{K \rightarrow \infty} \frac{\big| \mathcal{M}^{+} \big| }{ \big| \mathcal{M} \big|} &= \lim_{K \rightarrow \infty} \frac{K \cdot \sum_{d=1}^K \binom{K-1}{d-1}}{\sum_{d=1}^K \binom{K}{d} \binom{K}{d-1}} 
        = \lim_{K \rightarrow \infty} \frac{K \cdot 2^{K-1}}{ \binom{2K}{K-1} } 
    \end{align*}
    which are the results from Proposition \ref{prop:congruent-incongruent-counts-binary-subtreatments}.
    Next, see that the denominator can be rewritten as: 
    \begin{align*}
        \binom{2K}{K-1} &= \frac{(2K)!}{(K-1)! (2K-(K-1))!} = \left( \frac{K}{K+1} \right) \frac{(2K)!}{K! (2K-K)!} = \left( \frac{K}{K+1} \right) \cdot \binom{2K}{K}
    \end{align*}
    This implies that we are trying to find the following limit: 
    \begin{align}
        \lim_{K \rightarrow \infty} & \frac{K \cdot 2^{K-1}}{ \left( \frac{K}{K+1} \right) \cdot \binom{2K}{K} } \label{eq:corA1.limit} 
    \end{align}
    Before taking the limit, notice that we can restate the denominator by Stirling's approximation formula for factorials. First, note that: 
    \begin{align}
    (2K)! &\sim \sqrt{2 \pi (2K)} \cdot \left( \frac{2K}{e} \right)^{2K} 
    = \left( \frac{\sqrt{\pi} \cdot 2^{2K+1} \cdot K^{\frac{4K+1}{2}}}{e^{2K}} \right) \label{eq:corA1.A} 
    \end{align}
    and similarly, 
    \begin{align}
        K! &\sim \sqrt{2 \pi K} \cdot \left( \frac{K}{e} \right)^{K} 
        = \left( \frac{\sqrt{2 \pi} \cdot K^{\frac{2K+1}{2}}}{e^{K}} \right) 
        \label{eq:corA1.B} 
    \end{align}
    where Stirling's formula states, for some $n>0$, that $n! \sim \sqrt{2 \pi n} \cdot \left( \frac{n}{e} \right)^{n}$, and $e$ denotes Euler's constant. Both (\ref{eq:corA1.A}) and (\ref{eq:corA1.B}) imply that the denominator of the limit (\ref{eq:corA1.limit}) can be approximated well by the following expression:
    \begin{align}
        \binom{2K}{K} &= \frac{(2K)!}{K! K!} \sim \frac{ \frac{ \sqrt{\pi} \cdot 2^{2K+1} \cdot K^{(4K+1)/2} }{e^{2K}} }{ \left( \frac{ \sqrt{2 \pi} K^{(2K+1)/2} }{e^K} \right)^2 } = \frac{ \sqrt{\pi} \cdot 2^{2K+1} \cdot K^{(4K+1)/2} }{ 2 \pi \cdot K^{2K+1} } = \pi^{-1/2} \cdot 2^{2K} \cdot K^{-1/2} \label{eq:corA1.C} 
    \end{align}
    Therefore, by (\ref{eq:corA1.C}) above, the limit (\ref{eq:corA1.limit}) becomes: 
    \begin{align*}
        \lim_{K \rightarrow \infty} \frac{K \cdot 2^{K-1}}{ \left( \frac{K}{K+1} \right) \cdot \binom{2K}{K} } &\sim \lim_{K \rightarrow \infty} \frac{ K \cdot 2^{2K-1} }{ \pi^{-1/2} \cdot 2^{2K} \cdot K^{-1/2} } \\
        &= \lim_{K \rightarrow \infty} \sqrt{\pi} \cdot \frac{ K (K+1) }{K^{1/2}} \cdot 2^{(K-1) - 2K} \\ 
        &= \sqrt{\pi} \cdot \lim_{K \rightarrow \infty} \frac{ K^{3/2} + K^{1/2} }{ 2^{K+1} } = 0
    \end{align*}
    where the first line holds by applying Stirling's formula; the second and third lines hold by algebra; and the fourth equality holds since the numerator of the limit grows at the polynomial rate and the denominator grows at the exponential rate. This implies that as the number of binary sub-treatments increases, the proportion of congruent contrasts tends to zero, as claimed. 
\end{proof}

\bigskip

\begin{proposition}[Number of Congruent and Incongruent Contrasts with Trinary Sub-treatments] \label{prop:congruent-incongruent-counts-trinary-subtreatments}
    For a given number of trinary sub-treatments $K>1$ with common support $\{0,1,2\}$, there are $\sum_{d}^K \binom{K}{d}_2 = \sum_{d=1}^K \sum_{r=0}^{\lfloor d/2 \rfloor} \binom{K}{r} \cdot \binom{K-r}{d-2r}$ distinct contrasts, and the proportion of incongruent contrasts will increase as $K$ increases - illustrated in Figure \ref{fig:sub2}. The number of congruent disaggregated contrasts present is the row-sum of a row-scaled extended trinomial triangle, whose elements are the sum of the top-three adjacent elements. The number of incongruent disaggregated contrasts present is the difference between total contrasts and congruent contrasts.  
\end{proposition}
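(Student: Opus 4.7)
The plan is to mirror the binary counting argument in Proposition \ref{prop:congruent-incongruent-counts-binary-subtreatments}, replacing each binomial coefficient with its trinomial analog. Throughout, let $A(x;K) := (1+x+x^2)^K$ and define $\binom{K}{d}_2 := [x^d]A(x;K)$; these coefficients obey the three-term recurrence
\begin{align*}
\binom{K}{d}_2 = \binom{K-1}{d}_2 + \binom{K-1}{d-1}_2 + \binom{K-1}{d-2}_2,
\end{align*}
which is read off from $A(x;K) = (1+x+x^2)A(x;K-1)$ and is the ``extended trinomial triangle'' referenced in the statement.

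First, I would show $|\mathcal{S}_d| = \binom{K}{d}_2$ by a multinomial argument: each $s \in \mathcal{S}_d$ is determined by the triple $(n_0,n_1,n_2)$ of coordinate-value counts with $n_0+n_1+n_2 = K$ and $n_1+2n_2 = d$, so setting $r = n_2$ forces $n_1 = d-2r$, $n_0 = K-d+r$, feasible for $0 \leq r \leq \lfloor d/2 \rfloor$ with $d-r \leq K$; the multinomial count $\binom{K}{r}\binom{K-r}{d-2r}$ summed in $r$ gives $\binom{K}{d}_2$. Since every element of $\mathcal{S}_d \times \mathcal{S}_{d-1}$ belongs to $\mathcal{M}(d)$ by construction, the total contrast count is $|\mathcal{M}| = \sum_{d=1}^{2K}\binom{K}{d}_2\binom{K}{d-1}_2$, and this admits the clean closed form $\binom{2K}{2K+1}_2$ obtained from $[x^1]\bigl(A(x;K)\,A(1/x;K)\bigr) = [x^{2K+1}]A(x;2K)$ via $A(x;K)\,A(1/x;K) = x^{-2K}A(x;2K)$---the trinary analog of Lemma \ref{lem:counting-equality}.

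Next, for the congruent count I would bijectively parametrize $\mathcal{M}^+(d)$ by pairs $(s_{d-1},k)$ with $s_{d-1} \in \mathcal{S}_{d-1}$ and $s_{d-1,k} \in \{0,1\}$, so that $s_d = s_{d-1}+1_k$ stays in $\{0,1,2\}^K$. Marking the distinguished coordinate yields the generating function $f(x) = K(1+x)\,A(x;K-1)$, from which $|\mathcal{M}^+(d)| = [x^{d-1}]f(x)$ and $|\mathcal{M}^+| = f(1) = 2K \cdot 3^{K-1}$. The ``row-scaled extended trinomial triangle'' description then falls out: $[x^{d-1}]f(x)$ inherits the three-term recurrence from $A(\cdot;K-1)$, row-scaled by the factor $K$ (distinguished-coordinate bookkeeping) and truncated by the factor $(1+x)$ (restriction of the marked coordinate to $\{0,1\}$). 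The incongruent total is then $|\mathcal{M}^-| = |\mathcal{M}| - |\mathcal{M}^+|$.

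The main obstacle is careful bookkeeping at the boundaries of the feasibility range---where $d-r \leq K$ truncates the inner $r$-sum and where $d \in \{0,2K\}$ forces degenerate counts---and translating the generating-function identities into the precise triangle recurrence asserted in the statement. The generating-function machinery sidesteps most of the combinatorial case analysis, but exhibiting the explicit row-scaling factor requires writing out several initial rows to confirm the pattern. An asymptotic analog of Corollary \ref{cor:asymptotic-congruent-proportion-result} then follows by comparing $2K\cdot 3^{K-1}$ against the dominant trinomial entry $\binom{2K}{2K}_2 \sim 3^{2K}/\sqrt{K}$ (up to constants) via Stirling's formula.
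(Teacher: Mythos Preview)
Your proposal is correct and follows essentially the same template the paper uses for the binary case (Proposition \ref{prop:congruent-incongruent-counts-binary-subtreatments}): first count $|\mathcal{S}_d|$ combinatorially, then multiply adjacent layer sizes for $|\mathcal{M}|$, then count congruent pairs by fixing the incremented coordinate and letting the remaining $K-1$ coordinates vary. The paper's own proof is deferred to the Supplementary Appendix and so is not available for direct comparison, but based on the binary analog and the wording of the statement, your structure matches.

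The main stylistic difference is that you lean on generating functions---writing $|\mathcal{S}_d|=[x^d](1+x+x^2)^K$, the marked generating function $K(1+x)(1+x+x^2)^{K-1}$ for $|\mathcal{M}^+(d)|$, and the convolution identity $A(x;K)A(1/x;K)=x^{-2K}A(x;2K)$ for the closed form of $|\mathcal{M}|$---whereas the paper's binary proof uses direct multinomial counting and a Vandermonde-type lemma. Your approach is cleaner for the trinary case and makes the recurrence claim transparent: since $(1+x)(1+x+x^2)^{K-1}=(1+x+x^2)\cdot(1+x)(1+x+x^2)^{K-2}$, the unscaled coefficients $g_K(d)=[x^{d-1}](1+x)(1+x+x^2)^{K-1}$ satisfy the three-term recurrence $g_K(d)=g_{K-1}(d)+g_{K-1}(d-1)+g_{K-1}(d-2)$, and $|\mathcal{M}^+(d)|=K\cdot g_K(d)$ is the row-scaled version. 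Two small points: your upper summation limit $2K$ is correct (the statement's $K$ appears to be a typo), and the feasibility constraint $d-r\leq K$ you flag is automatically handled by the convention $\binom{K-r}{d-2r}=0$ when $d-2r>K-r$.
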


\noindent The proof is provided in the Supplementary Appendix (\cite{CCCDSupp2025}).


\subsubsection{Proofs of Results from Section \ref{subsec:auxiliary-assumptions-to-rule-out-incongruency}}

\begin{proposition}[Homogeneity of Effects] \label{prop:homogeneity-of-effects} 
    Under Assumptions \ref{ass:sutva2}-\ref{ass:homogeniety}, and for any $d \in \D_{>0}$, 
    \begin{align*}
        \E[Y|D=d] - \E[Y|D=d-1] = \beta_d.
    \end{align*} 
\end{proposition}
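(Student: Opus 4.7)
The plan is to combine Theorem \ref{thm:causal-aggregate-marginal-comparison} (the decomposition of the aggregate marginal effect as a weighted sum of $\mathrm{MATT}^+$ and $\mathrm{MATT}^-$ parameters) with the second part of Proposition \ref{prop:substitution-effect-decomp} (which writes each $\mathrm{MATT}^-$ as a path-dependent algebraic sum of $\mathrm{MATT}^+$ terms), so that Assumption \ref{ass:homogeniety} can be invoked to collapse every building block to the common value $\beta_d$.

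Concretely, I would proceed in three steps. First, apply Theorem \ref{thm:causal-aggregate-marginal-comparison} (with any admissible weighting function $w$, e.g.\ the product weights $w(s_d,s_{d-1}) = \P(S=s_d|D=d)\cdot\P(S=s_{d-1}|D=d-1)$) to write
\begin{align*}
    \E[Y|D=d]-\E[Y|D=d-1] &= \sum_{(s_d,s_{d-1}) \in \mathcal{M}^+(d)} w(s_d,s_{d-1})\cdot\mathrm{MATT}^+(s_d,s_{d-1}) \\
    &\qquad + \sum_{(s_d,s_{d-1}) \in \mathcal{M}^-(d)} w(s_d,s_{d-1})\cdot\mathrm{MATT}^-(s_d,s_{d-1}).
\end{align*}
Second, substitute into every incongruent term the representation from the second part of Proposition \ref{prop:substitution-effect-decomp}: for a chosen chain $\phi \in \mathcal{C}(s_{d-1}',s_{d-1})$,
\begin{align*}
    \mathrm{MATT}^-(s_d,s_{d-1}) &= \mathrm{MATT}^+(s_d,s_{d-1}') + \sum_{b=0}^{B-1}\mathrm{MATT}^+(s_{\phi,d}^{(b)},s_{\phi,d-1}^{(b)}) - \sum_{b=0}^{B-1}\mathrm{MATT}^+(s_{\phi,d}^{(b)},s_{\phi,d-1}^{(b+1)}).
\end{align*}
Under Assumption \ref{ass:homogeniety}, every $\mathrm{MATT}^+$ between adjacent aggregate levels $d$ and $d-1$ equals $\beta_d$, so the two chained sums cancel and each $\mathrm{MATT}^-(s_d,s_{d-1})=\beta_d$ as well. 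Thus every building block in the decomposition equals $\beta_d$.

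Third, factor $\beta_d$ out of the remaining weighted sum and verify that the weights sum to one, which follows directly from properties (i)--(ii) of Theorem \ref{thm:causal-aggregate-marginal-comparison}: summing (ii) over $s_d$ yields $\sum_{(s_d,s_{d-1})\in\mathcal{M}(d)} w(s_d,s_{d-1}) = \sum_{s_d\in\mathcal{S}_d}\P(S=s_d|D=d) = 1$. This gives $\E[Y|D=d]-\E[Y|D=d-1]=\beta_d$ and completes the argument.

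The only subtlety -- and the step I expect to require the most care -- is handling the chain decomposition of the incongruent terms cleanly: Proposition \ref{prop:substitution-effect-decomp} supplies a specific chain $\phi$ with $B$ links, and one needs that both telescoping sums contain exactly $B$ terms so they cancel identically under homogeneity. This is immediate from the way the two sums are indexed in that proposition, but it is worth stating explicitly. Everything else is routine bookkeeping.
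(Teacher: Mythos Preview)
Your proposal is correct and follows essentially the same route as the paper: the paper also uses Proposition \ref{prop:substitution-effect-decomp} to show that under Assumption \ref{ass:homogeniety} every $\mathrm{MATT}^-$ collapses to $\beta_d$ (writing it as $\beta_d + B\cdot\beta_d - B\cdot\beta_d = \beta_d$), and then invokes Theorem \ref{thm:causal-aggregate-marginal-comparison} together with the fact that the weights sum to one. The only cosmetic difference is that the paper first reduces the $\mathrm{MATT}^-$'s and then plugs into the decomposition, whereas you decompose first and substitute second, but the content is identical.
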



\begin{proof}[\textbf{Proof of Proposition \ref{prop:homogeneity-of-effects}}]
    Under Assumption \ref{ass:homogeniety}, we have that $\textrm{MATT}^+(s_d, s_{d-1}) = \beta_d$ for all $(s_d, s_{d-1}) \in \mathcal{M}^+(d)$.  Furthermore, by Proposition \ref{prop:substitution-effect-decomp}, for any $(s_d,s_{d-1}) \in \mathcal{M}^{-}(d)$ and any $\phi \in \mathcal{C}(s_d,s_{d-1})$,
    \begin{align*}
        \mathrm{MATT}^{-}(d) &= \mathrm{MATT}^+(s_d,s_{d-1}') + \sum_{ b=0 }^{B-1} \mathrm{MATT}^{+}(s_{\phi, \; d}^{(b)},s_{\phi, \; d-1}^{(b)}) - \sum_{ b=0 }^{B-1} \mathrm{MATT}^{+}(s_{\phi, \; d}^{(b)},s_{\phi, \; d-1}^{(b+1)}) \\ 
        &= \beta_d + |B| \cdot \beta_d - |B| \cdot \beta_d = \beta_d
    \end{align*}
    Hence, all marginal effects are $\beta_d$ at $D=d$. Thus, from Theorem \ref{thm:causal-aggregate-marginal-comparison},  $\E[Y|D=d] - \E[Y|D=d-1] = \beta_d$, as desired.
\end{proof}

\bigskip

\begin{proposition}[No Incongruent Behavior] \label{prop:no-incongruent-behavior}
    Under Assumptions \ref{ass:sutva2}, \ref{ass:unconfoundedness}, \ref{ass:no-sorting-on-D}, and \ref{ass:no-incongruity}, no incongruent comparisons are present in the aggregate contrast, $\E[Y|D=d] - \E[Y|D=d-1]$. 
\end{proposition}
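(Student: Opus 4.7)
The plan is to exhibit a specific weighting scheme that both satisfies conditions (i) and (ii) of Theorem~\ref{thm:causal-aggregate-marginal-comparison} and assigns zero weight to every pair in $\mathcal{M}^-(d)$. Once such weights are produced, Theorem~\ref{thm:causal-aggregate-marginal-comparison} immediately yields a representation of $\E[Y|D=d]-\E[Y|D=d-1]$ whose only nonzero summands are congruent $\textrm{MATT}^+$ terms, which is exactly the content of the proposition.

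The natural candidate comes from the joint latent-type distribution,
$$
w^{\circ}(s_d,s_{d-1}) := \P\big(S(d)=s_d,\; S(d-1)=s_{d-1} \,\big|\, D\in\{d,d-1\}\big).
$$
Assumption~\ref{ass:no-incongruity} forces $w^{\circ}(s_d,s_{d-1})=0$ on $\mathcal{M}^-(d)$ and $w^{\circ}\geq 0$ on $\mathcal{M}^+(d)$, so the ``no incongruency'' requirement is built in. To verify the marginal conditions, I would sum $w^{\circ}$ over $s_d$ to obtain $\P(S(d-1)=s_{d-1}\mid D\in\{d,d-1\})$. Since Assumption~\ref{ass:no-sorting-on-D} makes the full latent path $S(\mathbf{d})$ independent of $D$, this marginal equals both $\P(S(d-1)=s_{d-1})$ and $\P(S(d-1)=s_{d-1}\mid D=d-1)$. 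The consistency property embedded in the potential sub-treatment notation gives $S=S(d-1)$ on $\{D=d-1\}$, so the last probability is exactly $\P(S=s_{d-1}\mid D=d-1)$, establishing condition (i). Condition (ii) follows by the symmetric argument, summing $w^{\circ}$ over $s_{d-1}$ and invoking independence of $S(d)$ from $D$.

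Substituting $w^{\circ}$ into Theorem~\ref{thm:causal-aggregate-marginal-comparison} then eliminates the incongruent summand term-by-term, completing the argument. The main subtlety I anticipate is notational rather than mathematical: one must be explicit about the consistency link between the latent path $S(\mathbf{d})$ and the realized $S$ on each event $\{D=d\}$, and must invoke Assumption~\ref{ass:no-sorting-on-D} at the level of the joint object $(S(d),S(d-1))$---not merely each marginal---so that the conditioning event $\{D\in\{d,d-1\}\}$ drops harmlessly out of every probability. Note that Assumption~\ref{ass:unconfoundedness} plays no role in the construction of the weights themselves; it only enters through Theorem~\ref{thm:causal-aggregate-marginal-comparison} to attach a causal interpretation to the underlying $\textrm{MATT}^+$ terms that remain after $w^{\circ}$ has zeroed out the incongruent ones.
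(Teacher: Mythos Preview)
Your proposal is correct and follows essentially the same approach as the paper: both define the weights as $w(s_d,s_{d-1})=\P\big(S(d)=s_d,\,S(d-1)=s_{d-1}\mid D\in\{d,d-1\}\big)$, use Assumption~\ref{ass:no-incongruity} to zero out the incongruent pairs, and use Assumption~\ref{ass:no-sorting-on-D} together with $S=S(d)$ on $\{D=d\}$ to verify the marginal conditions of Theorem~\ref{thm:causal-aggregate-marginal-comparison}. Your remark that Assumption~\ref{ass:unconfoundedness} enters only through Theorem~\ref{thm:causal-aggregate-marginal-comparison} is also consistent with the paper's argument.
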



\begin{proof}[\textbf{Proof of Proposition \ref{prop:no-incongruent-behavior}}]
    Under Assumptions \ref{ass:no-sorting-on-D} and \ref{ass:no-incongruity}, there exists a latent treatment joint distribution $F(S(d),S(d-1))$ of the sub-treatments where for all $d \in \D$: 
    \begin{align*}
        \P\big(S(d)=s_d,S(d-1)=s_{d-1} \big| D \in \{d, d-1\}\big) &= 0, ~\textrm{if } (s_{d}, s_{d-1}) \in \mathcal{M}^{-}(d) \\
        \P\big(S(d)=s_d,S(d-1)=s_{d-1} \big| D \in \{d, d-1\}\big) &\geq 0, ~\textrm{if } (s_{d}, s_{d-1}) \in \mathcal{M}^{+}(d)
    \end{align*}
    Under conditions (i) and (ii), $w(s_d, s_{d-1}) = \P(S(d) = s_d, S(d-1)=s_{d-1}|D \in \{d,d-1\})$. 
    
    Next, we show that the above probabilities satisfy the criteria of Proposition \ref{prop:marginal-decomposition} and put no weight on incongruent comparisons. To show this, first note that the sum: 
    \begin{align*}
        & \sum_{s_{d-1} \in \ST_{d-1}} \P\big(S(d)=s_d,S(d-1)=s_{d-1} \big| D \in \{d, d-1\}\big) \\ 
        & \hspace{10mm} = \P\big(S(d)=s_d \big| D \in \{d, d-1\}\big) = \P\big(S(d)=s_d \big| D=d \big) = \P\big(S=s_d \big| D=d \big),
    \end{align*}
    where the first equality holds by the law of total probability; the second equality holds by Assumption \ref{ass:no-sorting-on-D}; and the last equality holds by writing potential outcomes in terms of their observed counterparts.
    Similarly, we have: 
    \begin{align*}
        & \sum_{s_{d} \in \ST_{d}} \P\big(S(d)=s_d,S(d-1)=s_{d-1} \big| D \in \{d, d-1\}\big) = \P\big(S(d-1)=s_{d-1} \big| D \in \{d, d-1\}\big) \\ 
        & \hspace{10mm} = \P\big(S(d-1)=s_{d-1} \big| D=d-1 \big) = \P\big(S=s_{d-1} \big| D=d-1 \big),
    \end{align*}
    where the first equality holds by the law of total probability; the second equality holds by Assumption \ref{ass:no-sorting-on-D}; and the last equality holds by writing potential outcomes in terms of their observed counterparts. Both results fulfill the requirements of Proposition \ref{prop:marginal-decomposition}.
    
    Lastly, we demonstrate that the assumptions imply that there are no weights on incongruent comparisons. Denote $m(s_d,s_{d-1}) := \E[Y|S=s_d] - \E[Y|S=s_{d-1}]$. This implies that the marginal contrast decomposition becomes:
    \begin{align*}
        \E[Y|D=d] &- \E[Y|D=d-1] = \sum_{(s_{d-1}, s_{d}) \in \mathcal{M}^+(d) } w(s_d,s_{d-1}) \Big( \E[Y|S=s_d] - \E[Y|S=s_{d-1}]\Big) \\
        & \hspace{40mm} +\sum_{(s_{d-1}, s_{d}) \in \mathcal{M}^-(d) } w(s_d,s_{d-1}) \Big( \E[Y|S=s_d] - \E[Y|S=s_{d-1}]\Big) \\ 
        &= \sum_{(s_{d-1}, s_{d}) \in \mathcal{M}^+(d) } w(s_d,s_{d-1}) \cdot m(s_d, s_{d-1}) + \sum_{(s_{d-1}, s_{d}) \in \mathcal{M}^-(d) } w(s_d,s_{d-1}) \cdot m(s_d, s_{d-1}) \\
        &= \sum_{(s_{d-1}, s_{d}) \in \mathcal{M}^+(d) } \P\big(S(d)=s_d,S(d-1)=s_{d-1} \big| D \in \{d, d-1\}\big) \cdot m(s_d, s_{d-1})
    \end{align*}
    where the first equality holds by Proposition \ref{prop:marginal-decomposition}; the second equality holds by definition of $m(s_d,s_{d-1})$; and the third equality is implied by the condition on $F(S(d),S(d-1))$. Thus, no incongruent comparisons exist in the aggregate marginal contrast as required.
\end{proof}

\subsubsection{Proofs of Results from Section \ref{subsec:settings-with-guaranteed-incongruency}}
\begin{proof}[\textbf{Proof of Proposition \ref{prop:decreasing-means-implies-incongruency}}]
    We prove the result by contrapositive. Consequently, this proof shows that sets of weights which are non-zero for only congruent pairs of sub-treatment vectors must imply that for any sub-treatment $k$, and all $d \in \D_{>0}$, $\E[S_k | D=d] \; \geq \; \E[S_k | D=d-1]$.  

    For any sub-treatment indexed by $k \in \{1, \ldots, K\}$, and any $d \in \D$, see that we may write the difference in conditional means of sub-treatment $S_k$ for adjacent values of aggregated treatment $D$ as: 
    \begin{align}
        & \E[S_k | D=d] - \E[S_k | D=d-1] \notag \\ 
        &= \E[ \; \E[S_k |S_d=s_d, D=d] \;| D=d] - \E[ \; \E[S_k |S_{d-1}=s_{d-1}, D=d-1] \;| D=d-1] \notag \\
        &= \sum_{s_d \in \ST_d} \P(S=s_d | D=d) \cdot \E[S_k |S_d=s_d, D=d] \notag \\
        & \hspace{10mm} - \sum_{s_{d-1} \in \ST_{d-1}} \P(S=s_{d-1} | D=d-1) \cdot \E[S_k |S_d=s_{d-1}, D=d-1] \notag \\
        &= \sum_{s_d \in \ST_d} \P(S=s_d | D=d) \cdot g_k(s_d) \notag \\
        & \hspace{10mm} - \sum_{s_{d-1} \in \ST_{d-1}} \P(S=s_{d-1} | D=d-1) \cdot g_k(s_{d-1}) \notag \\
        &= \sum_{s_d \in \ST_d} \P(S=s_d | D=d) \cdot \Big( \sum_{s_{d-1} \in \ST_{d-1}} \P(S=s_{d-1}|D=d-1) \Big) \cdot g_k(s_d) \notag \\
        & \hspace{10mm} - \sum_{s_{d-1} \in \ST_{d-1}} \Big( \sum_{s_{d} \in \ST_{d}} \P(S=s_{d}|D=d) \Big) \cdot \P(S=s_{d-1} | D=d-1) \cdot g_k(s_{d-1}) \notag \\ 
        &= \sum_{s_{d} \in \ST_{d}} \sum_{s_{d} \in \ST_{d}} w(s_d,s_{d-1}) \cdot \Big[ g_k(s_d) - g_k(s_{d-1}) \Big] \label{eqn:3.3A} 
    \end{align}
    where the first equality holds by the law of iterated expectations on $S_d$; the second equality holds by the definition of conditional probability; the third equality holds by defining $g_k(s_d) := \E[S_k \big| S_d=s_d, D=d]$ 
    as the value of the $k^{th}$ sub-treatment given the vector of sub-treatment $s_d$ which aggregates to $D=d$; and the fourth equality holds by the second axiom of probability (unity); the fifth equality follows by algebra and by the definition of the class of weighting functions with the properties outlined in Proposition \ref{prop:marginal-decomposition}.

    In Equation \eqref{eqn:3.3A}, we defined $g_k(s_d)$ as the value of the $k^{th}$ sub-treatment when the vector of sub-treatments was $s_d$. Notice that this is a simple constant, the realized value of $S_k$, which lies at the $k^{th}$  component of the vector $s_d$: 
    \begin{align}
        g_k(s_d) &= \E[S_k|S=s_d, D=d] = \E[S_k|S=s_d] = \sum_{s_k \in \ST_k} \P(S_k=s_k | S=s_d) \cdot s_k \notag \\ 
        &= \P(S_k=s_k | S=s_d) \cdot s_k = s_k \label{eqn:3.3B1} 
    \end{align}
    where the first equality holds by definition of $g_k(s_d)$; the second equality holds since $d$ is fully determined by $s_d$; the third equality holds by the definition of expectation; the fourth equality holds because the only realization that $S_k$ can take is the value of $S_k$ from the sub-treatment vector $s_d$ found in the conditioning set; and the fifth equality holds since $\P(S_k=s_k | S=s_d)=1$ due to the previous fact.

    Now, given the result in Equation \eqref{eqn:3.3B1}, see that if a pair of sub-treatment vectors is congruent, $(s_d,s_{d-1}) \in \mathcal{M}^{+}(d)$, then
    \begin{equation}
        g_k(s_d) - g(s_{d-1}) = s_{d,k} - s_{d-1,k} = \begin{cases}
            1, \text {if } s_{d,k} > s_{d-1,k} \\ 
            0, \text{ if } s_{d,k} = s_{d-1,k}
        \end{cases}
        \label{eqn:3.3B2} 
    \end{equation}
    which holds by the definition of congruency. Any two sub-treatment vectors which can be written as $s_d = s_{d-1} + 1_k$, where $1_k$ is a unit vector for the $k^{th}$ sub-treatment, are considered congruent and lie in the set $\mathcal{M}^{+}(d)$ at any $d \in \D$. 
    
    Since the weights in Equation \eqref{eqn:3.3A} are the same as in Proposition \ref{prop:marginal-decomposition}, then they are positive and sum to one. This, together with Equation \eqref{eqn:3.3B2}, implies that if weights were only non-negative for congruent sub-treatment vectors, then \eqref{eqn:3.3A} would be non-negative. Thus, take any weighting function which satisfies the properties in Proposition \ref{prop:marginal-decomposition} and puts non-negative weight on congruent sub-treatment vectors only, and denote it as $w^{+}(\cdot,\cdot)$. From \eqref{eqn:3.3A}, for any sub-treatment $k$, we have: 
    \begin{align}
        \sum_{s_{d} \in \ST_{d}} \sum_{s_{d} \in \ST_{d}} w(s_d,s_{d-1}) \cdot \Big[ g_k(s_d) - g_k(s_{d-1}) \Big] &= \sum_{s_{d} \in \ST_{d}} \sum_{s_{d} \in \ST_{d}} w^{+}(s_d,s_{d-1}) \cdot \Big[ g_k(s_d) - g_k(s_{d-1}) \Big] \notag \\
        &= \sum_{(s_d,s_{d-1}) \in \mathcal{M}^{+}(d)} w^{+}(s_d,s_{d-1}) \cdot 1\Big\{ g_k(s_d) > g_k(s_{d-1}) \Big\} \notag \\ 
        &\geq 0 \label{eqn:3.3C} 
    \end{align}
    where the first line holds since only positive weights are placed on the congruent comparisons of sub-treatments, denoted by $w^{+}(\cdot,\cdot)$; the second equality holds by the result in \eqref{eqn:3.3B2}, since differences in sub-treatment $k$ must be either one or zero for congruent vectors of sub-treatments; and the inequality in the third line holds because the weights must be non-negative by Proposition \ref{prop:marginal-decomposition}.

    Lastly, the inequality in \eqref{eqn:3.3C} implies that, for any weighting scheme that applies positive weight to only congruent pairs of sub-treatments, the difference in conditional means for any sub-treatment $k$ in \eqref{eqn:3.3A} must satisfy 
    \begin{align*}
        \E[S_k | D=d] - \E[S_k | D=d-1] \geq 0 &\iff \E[S_k | D=d] \geq \E[S_k | D=d-1], 
    \end{align*}
    which states the claim, as desired. As $d$ increases, the means of sub-treatments must not decline in order to rationalize fully congruent weighting schemes. Violations of this inequality signal that weight on incongruent sub-treatments is inevitable. 
\end{proof}


\subsection{Proofs of Results from Section \ref{subsec:identification-for-nonmarginal-parameters}}

\begin{proposition} \label{prop:datt-identification} 
Under Assumption \ref{ass:sutva2}, the difference in mean outcomes for the group that experienced sub-treatment $s_d$ relative to the untreated group can be decomposed as 
\begin{align*}
    \E[Y|S=s_d] - \E[Y|S=0_K] &= \mathrm{ATT}(s_d) + UB(s_d)
\end{align*}
where
\begin{align*}
    UB(s_d) &:= \E[Y(0)|S=s_d] - \E[Y(0) | S=0_K]
\end{align*}
which is a selection bias term.  If, in addition, Assumption \ref{ass:unconfoundedness} holds, then $UB(s_d) = 0$, so that
\begin{align*}
    \mathrm{ATT}(s_d) = \E[Y|S = s_d] - \E[Y|S=0_K]
\end{align*}
where $0_K$ denotes the zero vector of length $K$ for the untreated group.
\end{proposition}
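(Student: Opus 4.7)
The plan is to mirror the proof structure of Proposition \ref{prop:comparison-of-subtreatments}, which was already carried out in the excerpt, adapting it so that the reference sub-treatment is the zero vector $0_K$ rather than an arbitrary neighbor $s_{d-1}$. The key substantive difference is that the target parameter here is the non-marginal $\mathrm{ATT}(s_d)$, so the relevant counterfactual to add and subtract is $\E[Y(0)|S=s_d]$ rather than $\E[Y(s_{d-1})|S=s_d]$.

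First, I would invoke Assumption \ref{ass:sutva2} to replace observed outcomes by potential outcomes in both conditional expectations, writing $\E[Y|S=s_d] = \E[Y(s_d)|S=s_d]$ and $\E[Y|S=0_K] = \E[Y(0)|S=0_K]$, where $Y(0)$ is shorthand for $Y(0_K)$. Next, I would add and subtract $\E[Y(0)|S=s_d]$ to obtain
\begin{align*}
    \E[Y|S=s_d] - \E[Y|S=0_K] &= \E[Y(s_d) - Y(0)|S=s_d] \\
    &\hspace{10mm} + \Big(\E[Y(0)|S=s_d] - \E[Y(0)|S=0_K]\Big),
\end{align*}
and identify the first bracket as $\mathrm{ATT}(s_d)$ by definition, and the second bracket as $UB(s_d)$ by definition. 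This delivers the first claim.

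For the second claim, I would observe that Assumption \ref{ass:unconfoundedness} states $Y(s) \independent S$ for all $s \in \mathcal{S}$, and applying this with $s = 0_K$ yields $\E[Y(0)|S=s_d] = \E[Y(0)|S=0_K]$, so $UB(s_d) = 0$. Substituting back and using $\E[Y(0)|S=0_K] = \E[Y|S=0_K]$ (by Assumption \ref{ass:sutva2}) gives $\mathrm{ATT}(s_d) = \E[Y|S=s_d] - \E[Y|S=0_K]$.

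There is no real obstacle here: the argument is essentially a specialization of the one already given for Proposition \ref{prop:comparison-of-subtreatments}, with $s_{d-1}$ replaced by $0_K$ and the marginal comparison replaced by a baseline-to-$s_d$ comparison. The only minor subtlety worth flagging is that, unlike the first part of Proposition \ref{prop:comparison-of-subtreatments}, the local version of no selection (Assumption \ref{ass:local-unconfoundedness}) would not in general suffice here, since $s_d$ and $0_K$ need not belong to adjacent aggregation sets; the full independence in Assumption \ref{ass:unconfoundedness} is what is used, consistent with the weaker alternative discussed in Assumption \ref{ass:untreated-potential-outcomes} of Appendix \ref{sec:minimal-assumptions}, which would also be enough because only $Y(0)$ appears in $UB(s_d)$.
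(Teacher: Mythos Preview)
Your proposal is correct and follows essentially the same approach as the paper's proof: write observed outcomes as potential outcomes via Assumption \ref{ass:sutva2}, add and subtract $\E[Y(0)|S=s_d]$, identify the resulting terms as $\mathrm{ATT}(s_d)$ and $UB(s_d)$, and then invoke Assumption \ref{ass:unconfoundedness} to kill the selection bias term. Your additional remark that Assumption \ref{ass:local-unconfoundedness} would not suffice here while Assumption \ref{ass:untreated-potential-outcomes} would is also exactly in line with the paper's parenthetical observation.
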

Proposition \ref{prop:datt-identification} shows that $\textrm{ATT}(s_d)$ is identified if the sub-treatments are observed and under Assumption \ref{ass:unconfoundedness} (or Assumption \ref{ass:untreated-potential-outcomes}).

\begin{proof}[\textbf{Proof of Proposition \ref{prop:datt-identification}}]
    For the first part, notice that
    \begin{align*}
        \E[Y|S=s_d] - \E[Y|S=0_K] &= \E[Y(s_d)|S=s_d] - \E[Y(0)|S=0_K] \\
        &= \E[Y(s_d) - Y(0)|S=s_d] + \E[Y(0)|S=s_d] - \E[Y(0)|S=0_K] \\
        &= \textrm{ATT}(s_d) + UB(s_d)
    \end{align*}
    where the first equality holds by writing observed outcomes as potential outcomes; the second equality holds by adding and subtracting $\E[Y(0)|S=s_d]$; and the last equality holds by the definitions of $\textrm{ATT}(s_d)$ and $UB(s_d)$. For the second part of the result, it follows immediately from Assumption \ref{ass:unconfoundedness} (or Assumption \ref{ass:untreated-potential-outcomes}) that $UB(s_d)=0$, which implies the result. 
\end{proof}

\bigskip

\begin{proof}[\textbf{Proof of Theorem \ref{thm:att-identification}}]
    We proved the first part of the result as Proposition \ref{prop:datt-identification}.  For the second part, starting from the definition of $\textrm{ATT}(d)$, we have that
    \begin{align*}
        \textrm{AATT}(d) &= \E\big[\textrm{ATT}(S)\big|D=d\big] \\
        &= \sum_{s_d \in \mathcal{S}_d} \textrm{ATT}(s_d) \cdot \P(S=s_d|D=d) \\
        &= \sum_{s_d \in \mathcal{S}_d} \Big(\E[Y|S=s_d] - \E[Y|S=0_K]\Big) \cdot \P(S=s_d|D=d) \\
        &= \E[Y|D=d] - \E[Y|D=0]
    \end{align*}
    where the first equality holds from the definition of $\textrm{AATT}(d)$; the second equality expands the expectation; the third equality holds by Proposition \ref{prop:datt-identification}; and the last equality holds by the law of iterated expectations.
\end{proof}



\end{document}